\documentclass[12pt,twoside,a4paper]{memoir}

\usepackage[T1]{fontenc}
\usepackage{eucal}
\usepackage{microtype}
\usepackage{csquotes}
\usepackage{amsthm}
\usepackage{mathtools}
\usepackage{amssymb,amsfonts,stmaryrd}

\usepackage{tikz}
\usetikzlibrary{cd}
\usetikzlibrary{calc,patterns,angles,quotes,positioning,fit,backgrounds,automata}

\usepackage[citestyle=numeric-comp,bibstyle=ieee,sorting=nyt,sortcites,style=alphabetic,maxbibnames=10,firstinits=true]{biblatex} %default
 % \usepackage[citestyle=numeric-comp,bibstyle=ieee,sorting=nyt,sortcites,style=numeric,maxbibnames=10,firstinits=true]{biblatex} %numbers
% citestyle=numeric-comp
% citestyle=authoryear
% style=alphabetic
\bibstyle{apsrev}
\usepackage{enumitem}
\setlist[enumerate,1]{label={\roman*)}}
\usepackage[unicode=true, pdfusetitle, colorlinks=true, allcolors=blue, bookmarks=true]{hyperref}
\usepackage[left=3cm,right=3cm,top=3cm,bottom=3cm]{geometry}
\usepackage{pdfpages}
\usepackage{afterpage}

\newcommand\blankpage{%
    \null
    \thispagestyle{empty}%
    % \addtocounter{page}{-1}%
    \newpage
    }

\hyphenation{mathematics}
% \usepackage[spanish,english]{babel}

% Fonts
% https://tex.stackexchange.com/questions/134079/font-setup-for-an-academic-thesis-no-computer-modern-wanted
\usepackage{stmaryrd}
\usepackage{libertine}
\usepackage[libertine,cmintegrals,cmbraces,vvarbb]{newtxmath}
\usepackage[scaled=0.95]{zi4}
\AtBeginDocument{%
  \let\mathbb\relax
  \DeclareMathAlphabet{\mathbb}{U}{msb}{m}{n}%
}

% Silence Warnings
\usepackage{silence}
\WarningFilter{latexfont}{Size substitutions with differences}
\WarningFilter{latexfont}{Font shape `U/stmry/m/n'}

% THEOREMS
\theoremstyle{plain}
\newtheorem{theorem}{Theorem}
\newtheorem*{theorem*}{Theorem}

\newtheorem*{lemma*}{Lemma}
\newtheorem{proposition}[theorem]{Proposition}
\newtheorem*{proposition*}{Proposition}
  
\newtheorem*{corollary*}{Corollary}
\theoremstyle{definition}

\newtheorem*{definition*}{Definition}

\newtheorem*{example*}{Example}
\theoremstyle{remark}
\newtheorem{remark}{Remark}
\newtheorem*{remark*}{Remark}

\newtheorem*{conjecture*}{Conjecture}

\newtheorem*{problem*}{Problem}

% https://tex.stackexchange.com/questions/16453/denoting-the-end-of-example-remark
% https://math.uoregon.edu/wp-content/uploads/2014/12/compsymb-1qyb3zd.pdf page 33

\newenvironment{example}
  {\pushQED{\qed}\examplex}
  {\popQED\endexamplex}

% Fields

\newcommand*{\RR}{\mathbb{R}}

\newcommand*{\dd}{\mathrm{d}}
\DeclareMathOperator{\Id}{Id}
\newcommand*{\contr}[1]{\iota_{#1}}
\newcommand*{\liedv}[1]{\mathcal{L}_{#1}}

\newcommand*{\parder}[2]{\frac{\partial#1}{\partial #2}}
\newcommand*{\der}[2]{\frac{\dd #1}{\dd #2}}
\DeclareMathOperator{\Leg}{Leg}
\DeclareMathOperator{\diag}{diag}
\newcommand{\incl}{i}
\newcommand{\abs}[1]{\left| #1 \right|}

   % <-- Store original \norm as \oldnorm
\let\norm\undefined % <-- "Undefine" \norm
\DeclarePairedDelimiter\norm{\lVert}{\rVert}

% \let\ojo\ {}

% https://tex.stackexchange.com/questions/352999/how-to-save-a-command-before-modifying-it
\let\emphoriginal\emph
\let\oldemph\emph
\let\emph\textbf
\let\vec\mathbf
\mathtoolsset{showonlyrefs}

% Bibliography
% \bibliographystyle{annotate}
% \bibliographystyle{apalike}
\bibliography{biblio}

\AtEveryBibitem{
  \ifentrytype{online}{}{% Remove url except for @online
  \ifentrytype{misc}{}{
   \ifentrytype{unpublished}{}{
    \clearfield{url}
  }}}
}

\DeclareSourcemap{
  \maps[datatype=bibtex]{
    \map[overwrite=true]{
      \step[fieldset=urldate, null]
      \step[fieldset=language, null]
%      \step[fieldset=issn, null]
      % \step[fieldset=publisher, null]
      \step[fieldset=address, null]
%      \step[fieldset=doi, null]
%      \step[fieldset=isbn, null]
     \step[fieldset=addendum, null]
     \step[fieldset=pagetotal, null]
     \step[fieldset=location, null]
    }
  }
}

\title{The geometry of Rayleigh dissipation\\
\Large Master's Thesis\\
\large Master in Theoretical Physics\\
Autonomous University of Madrid and\\Institute of Theoretical Physics (CSIC-UAM)}

% %

\author{Author: Asier López-Gordón
\\Supervisor: Manuel de León (ICMAT-CSIC)
\\Tutor: Germán Sierra}
% \author[1]{Asier López-Gordón}
% \author[1,2]{\\Supervisor: Manuel de León}
% \author[3]{\\Tutor: Germán Sierra}
% \affil[1]{Instituto de Ciencias Matemáticas (CSIC-UAM-UC3M-UCM) \protect\\
% Calle Nicolás Cabrera, 13-15, Campus Cantoblanco, UAM, 28049 Madrid, Spain}
% \affil[2]{Real Academia de Ciencias Exactas, Físicas y Naturales\protect\\
% Calle Valverde, 22, 28004, Madrid, Spain }
% \affil[3]{Instituto de Física Teórica (CSIC-UAM)\protect\\
% Calle Nicolás Cabrera, 13-15, Campus Cantoblanco, UAM, 28049 Madrid, Spain}

\date{\today}

\begin{document}
\pagenumbering{Alph}
\includepdf{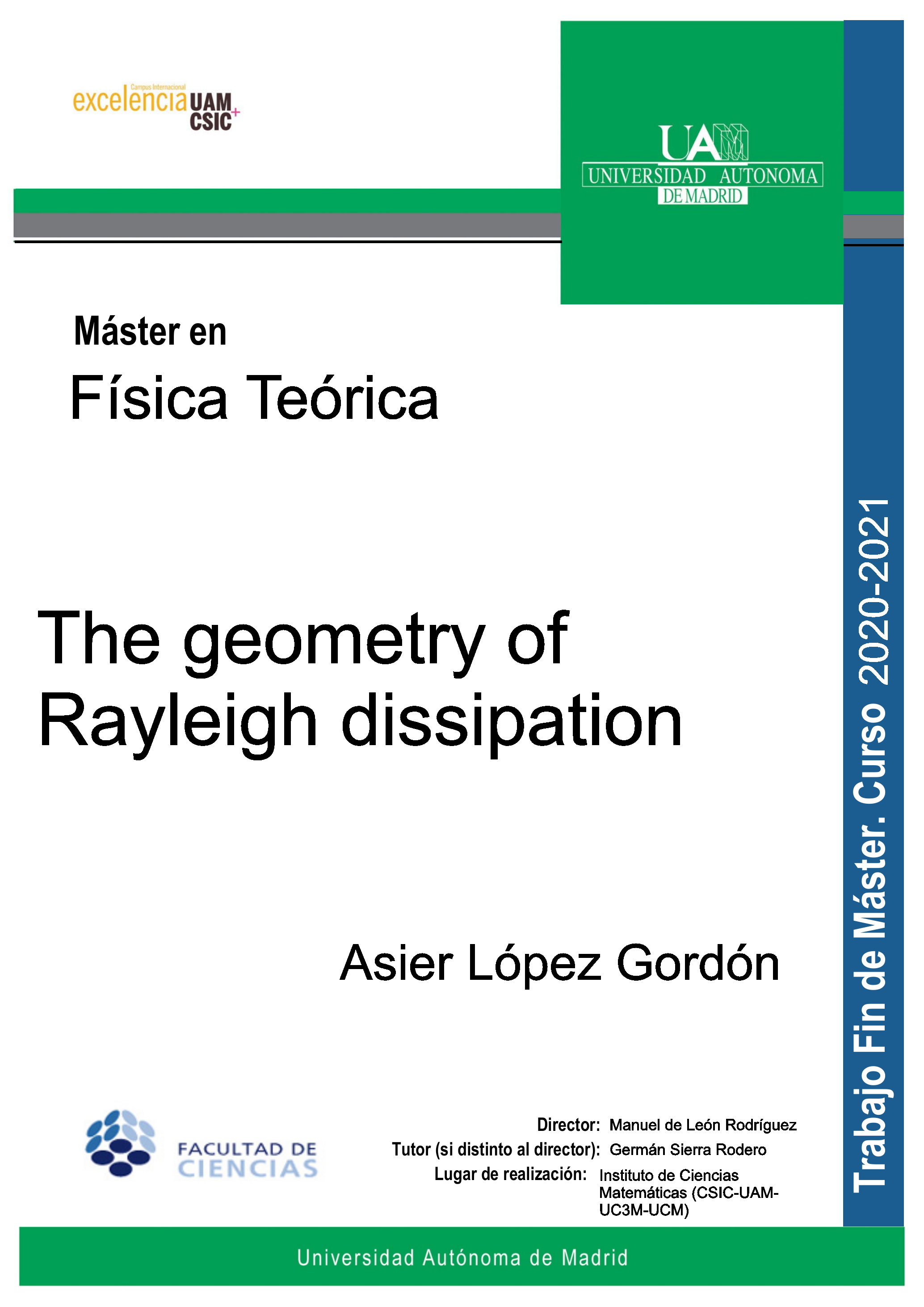}

\frontmatter

\maketitle
\thispagestyle{empty}

\afterpage{\blankpage}
% \begin{abstract}
% \noindent
% This paper is devoted to the study of mechanical systems subjected to external forces in the framework of symplectic geometry. We obtain a Noether's theorem for Lagrangian systems with external forces, among other results regarding symmetries and conserved quantities. We particularize our results for the so-called Rayleigh dissipation, i.e., external forces that are derived from a dissipation function, and illustrate them with some examples. Moreover, we present a theory for the reduction of Lagrangian systems subjected to external forces which are invariant under the action of a Lie group.
% \end{abstract}

\newpage
\tableofcontents
\afterpage{\blankpage}

\chapter*{Abstract}
\addcontentsline{toc}{chapter}{Abstract}

Geometric mechanics is a branch of mathematical physics that studies classical mechanics of particles and fields from the point of view of geometry. In a geometric language, symmetries can be expressed in a natural manner as vector fields that generate the corresponding symmetry group. Moreover, with the geometric machinery, the phase space of a mechanical system with symmetries can be reduced to a space with less dimensions.

As it is well-known, non-conservative forces cannot be written as the gradient of a potential, so they cannot be absorbed in the Lagrangian or Hamiltonian. The description of a mechanical system subject to a non-conservative force requires an external force together with the Lagrangian or Hamiltonian function. In addition, external forces emerge for the description of certain systems with non-holonomic constraints.

In this master's thesis, autonomous Hamiltonian and Lagrangian systems are studied in the framework of symplectic geometry. After introducing the geometric tools that will be employed, several results regarding symmetries and constants of the motion are reviewed. The method of symplectic reduction for systems with symmetry is also presented. In a second part, non-conservative systems are presented in a geometric language. A Noether's theorem for Lagrangian systems subject to external forces is obtained. Other results regarding symmetries and constants of the motion are derived as well. Furthermore, a theory for the reduction of forced Lagrangian systems invariant under the action of a Lie group is presented. These results are particularized for the so-called Rayleigh dissipation, that is, external forces that can be written as the derivative of a ``potential'' with respect to the velocities.

% \vspace{1cm}
\chapter*{Resumen}
\addcontentsline{toc}{chapter}{Resumen}
La mecánica geométrica es la rama de la física matemática que estudia la mecánica clásica, tanto de partículas como de campos, desde el punto de vista de la geometría. En el lenguaje de la geometría, las simetrías pueden verse de manera natural como campos de vectores, los cuales son generadores del grupo de simetría correspondiente. Asimismo, con la maquinaria de la geometría, el espacio de fases de un sistema mecánico con simetrías puede ser reducido a un espacio de dimensión inferior.

Como es bien sabido, las fuerzas no conservativas no pueden escribirse como el gradiente de un potencial, de modo que no se pueden absorber en un lagrangiano o hamiltoniano. La descripción de un sistema mecánico sometido a una fuerza no conservativa requiere considerar una fuerza externa junto al lagrangiano o hamiltoniano. Además, las fuerzas externas aparecen al estudiar ciertos sistemas con ligaduras no holónomas.

En este Trabajo Fin de Máster se estudian sistemas autónomos, tanto hamiltonianos como lagrangianos, en el marco de la geometría simpléctica. Tras introducir las herramientas geométricas que se emplearán, se revisan los resultados en la bibliografía concernientes a las simetrías y constantes del movimiento. También se presenta el método de la reducción simpléctica. En una segunda parte, se presentan los sistemas no conservativos en un lenguaje geométrico. Se obtiene un teorema de Noether para sistemas lagrangianos sometidos a fuerzas externas. Se obtienen además otros resultados acerca de las simetrías y constantes del movimiento de estos sistemas. Asimismo, se presenta una teoría para la reducción de sistemas lagrangianos forzados invariantes bajo la acción de un grupo de Lie. Estos resultados se particularizan para la llamada disipación de Rayleigh, esto es, fuerzas externas que se pueden escribir como la derivada de un <<potencial>> con respecto a las velocidades.
\chapter*{Acknowledgements}
\addcontentsline{toc}{chapter}{Acknowledgements}

I would like to thank my supervisor, Manuel de León, for introducing me to the research world and to the ICMAT's Geometric Mechanics and Control Group. His careful corrections and comments on the original manuscript are deeply appreciated. 

I am grateful with Manuel Lainz for several enlightening discussions and explanations. I am also thankful with ICMAT's administration staff, specially with Esther Ruiz, for helping me deal with several bureaucratic processes. Additionally, I want to thank Germán Sierra for having been my tutor.

I am wholeheartedly thankful to my parents for having supported all my academic and personal projects. 
I want to acknowledge them for having made me a curious and hard-working person.

I am grateful for the love and advice from all my friends, who have supported me in my best and worst moments. In particular, I must thank my friends and colleagues Laura Pérez and Irene Abril for encouraging me to apply for a JAE Intro grant, which eventually has lead to the existence of this text.

This work has been supported by the CSIC's JAE Intro Grant JAEINT\_20\_01494. A renewal of the grant by ICMAT is also acknowledged.
\mainmatter
\chapter{Introduction}
\chapterprecishere{``The miracle of the appropriateness of the language of mathematics for the \mbox{formulation} of the laws of physics is a wonderful gift which we neither \mbox{understand} nor deserve.''\par\raggedleft--- \textup{E.P. Wigner, Comm. Pure Appl. Math., \textbf{13}: 1-14 (1960)}}

Since the dawn of physics, mechanics has been one of its mostly studied branches. Moreover, mechanics is the basis of astronomy, engineering and other disciplines.
Analogously, geometry is one of the most ancient and fundamental branches of mathematics. 
% The study of the motion of bodies motivated the development of differential calculus, which is required for the formulation of Newton's laws, the first mathematically precise formulation of mechanics.
As it is well-known, the first mathematically precise formulation of mechanics is due to Newton. 
 Unlike in the Newtonian framework, in the Lagrangian and Hamiltonian formalisms mechanical systems can be parametrized in arbitrary systems of coordinates without the need of incorporating \emphoriginal{ad hoc} forces. 
  The next logical step, accomplished by geometric mechanics, is to formulate laws of motion in an intrinsic way, that is, without the need of using coordinates. See reference \cite{leon_historia_2017} for a historical review of geometric mechanics.
  In the language of geometry, the phase space of a system has the structure of a manifold endowed with a symplectic form, and its dynamics is given by the integral curves of a certain vector field.

Symmetries and constants of the motion are key concepts in modern mechanics. The knowledge of these properties gives a valuable information of the mechanical system to be studied, and it simplifies the resolution of its equations of motion. Their relations have been deeply explored since the seminal work by Emmy Noether \cite{noether_invariant_1971,kosmann-schwarzbach_noether_2011,neeman_impact_1999}. The geometric formulation of mechanics, together with the theory of Lie groups and Lie algebras, allows to reduce the phase space of a system with symmetry, that is, to describe the dynamics in a space with less dimensions.

The description of multiple physical systems requires considering an external force, together with the Lagrangian or the Hamiltonian, for instance, a dissipative force. 
% Moreover, non-holonomic constraints can be described as external forces.
Moreover, non-holonomic systems have external forces caused by the constraints \cite{de_leon_diego_96,iglesias-ponte_towards_2008}. These forces restrain the dynamics to the submanifold of constraints, which is the actual phase space of the system.
% matizar que estas fuerzas son causadas por las ligaduras, y que fuerzan al sistema a no salirse de la subvariedad de ligaduras, que es el auténtico espacio fase

 Geometrically, an external force can be regarded as a semibasic 1-form. A particular type of external forces are the ones derived from the so-called Rayleigh dissipation function, that is, forces which are linear in the velocities. The concept of Rayleigh dissipation can be generalised for non-linear forces that can be written as the derivative of a ``potential'' with respect to the velocities. Symmetries and constants of the motion can also be considered for non-conservative systems (that is, systems with external forces), where they will depend on both the Lagrangian or Hamiltonian and on the external force. The method of reduction can also be generalised for non-conservative systems, requiring that the (sub)group of symmetries leaves the external force invariant as well as the Lagrangian or Hamiltonian.

This master's thesis is organized as follows:
\begin{itemize}
\item Part \ref{part_1} is a bibliographical revision of geometric mechanics for autonomous Hamiltonian and Lagrangian systems depending on the positions and the velocities.
\begin{itemize}
\item Section \ref{section_basics} is an introduction to the basic mathematical tools that will be employed throughout the text: fibre bundles, Cartan's calculus of differential forms and the actions of Lie groups and Lie algebras over manifolds.
\item Section \ref{section_symplectic} presents symplectic geometry, the geometric structure of phase spaces.
\item Section \ref{section_Hamiltonian} is devoted to the Hamiltonian formalism presented in an intrinsic form. Symmetries and constants of the motion are presented in the language of differential geometry.
\item Section \ref{section_tangent} introduces the geometric structures in tangent bundles required for formulating the Lagrangian formalism.
\item Section \ref{section_Lagrangian} presents the Lagrangian formalism in its intrinsic form. The concepts introduced in the previous section are employed for studying symmetries and constants of the motion.
\item Section \ref{section_reduction} is dedicated to symplectic reduction. If a Lie group of symmetries acts over the configuration space of a physical system, its dynamics can be described in a reduced space by means of this procedure.
\end{itemize}
\item Part \ref{part_paper} is devoted to mechanical systems with external forces. It covers the results from a paper of the author, coauthored by M. de León and M. Lainz, \cite{de_leon_symmetries_2021}.
\begin{itemize}
\item Section \ref{section_external_forces} introduces mechanical systems subject to external forces in the language of geometry, both in the Lagrangian and Hamiltonian formalisms.
\item Section \ref{section_Lagrangian_forced} studies symmetries and constants of the motion of forced Lagrangian systems, generalizing the results from Section \ref{section_Lagrangian}.
\item Section \ref{section_reduction_forced} extends symplectic reduction, explained in Section \ref{section_reduction}, for forced Lagrangian systems in which both the Lagrangian and the external force are invariant under a Lie (sub)group of symmetries.
\item Section \ref{section_Rayleigh} covers mechanical systems subject to Rayleigh external forces, particularizing the results from sections \ref{section_Lagrangian_forced} and \ref{section_reduction_forced}.
\end{itemize}
\item Part \ref{conclusions} summarizes the results obtained throughout this master's thesis, and presents several lines for future research.
\end{itemize}

\chapter{Hamiltonian and Lagrangian mechanics}\label{part_1}
\section{Geometric foundations}\label{section_basics}

This section is devoted to establish and recall the basic mathematical framework that will be employed throughout the text. The main references are \cite{abraham_foundations_2008,leon_methods_1989,godbillon_geometrie_1969}. See also references \cite{dubrovin_modern_1985,chern_lectures_1999,Lee2012} for an introduction to differential geometry of manifolds.

Recall that a manifold is roughly a space that locally resembles $\RR^n$. More specifically, a differentiable manifold\footnote{Unless otherwise stated, throughout this thesis every manifold will be assumed to be a differentiable manifold. One can also define the more general concept of topological manifold.} can be defined as follows.

% \begin{definition}
A \emph{differentiable $n$-dimensional manifold} is a set $M$ with the following properties:
\begin{enumerate}
\item The set $M$ is a finite or countably infinite union of subsets $U_q$.
\item Each subset $U_q$ has defined on it local coordinates $x_q^\alpha:U_q\to \RR$, with $\alpha=1,\ldots,n$.
\item Each non-empty intersection of subsets, $U_p\cap U_q$, has defined on it at least two coordinate systems. Each of these two coordinate systems are required to be expressible in terms of the other as one-to-one differentiable functions:
 % In other words, if the so-called \emph{transition functions} from the coordinates $x_q^\alpha$ to the coordinates $x_p^\alpha$
% are given by the differentiable functions
\begin{equation}
\begin{aligned}
	&x_{p_{0}}^{\alpha}=x_{p}^{\alpha}\left(x_{q}^{1}, \ldots, x_{q}^{n}\right),\\
	&x_{q}^{\alpha}=x_{q}^{\alpha}\left(x_{p}^{1}, \ldots, x_{p}^{n}\right). \label{transition_functions}
\end{aligned}
\end{equation}
Then the Jacobian $\det(\partial x_p^\alpha/x_q^\alpha)$ is non-zero in the region of intersection.
\end{enumerate}

 Each pair $(U_p,\varphi_{U_p})$, with $\varphi_{U_p}=(x_p^1,\ldots,x_p^n)$, is called a \emph{coordinate chart} of $M$. A set of charts $\left\{(U_i,\varphi_{U_i})  \right\}$ which covers $M$ (that is, $\cup_i U_i=M$) is called an \emph{atlas}. The functions \eqref{transition_functions} are called \emph{transition functions}. If, for each pair of charts in an atlas, the transition function between them is of class $C^r$, then the atlas is said to be of class $C^r$. The $C^r$ differentiable structure of $M$ is given by the \emph{maximal atlas} of class $C^r$, that is, the union of every atlas of class $C^r$. It endows the set $M$ with the structure of a topological space.

% \end{definition}

Let $M$ and $N$ be two manifolds with $\dim M =m$ and $\dim N=n$. Consider a continuous mapping $f:M\to N$. Suppose that there exists coordinate charts $(U,\varphi_U)$ and $(V,\psi_V)$ at $p\in M$, and $f(p)\in N$, respectively, such that the mapping
\begin{equation}
	\psi_V \circ f \circ \varphi_U^{-1}:\varphi_U(U) \to \psi_V(V)
\end{equation}
is of class $C^\infty$ at the point $\varphi_U(p)$. Then the mapping $f$ is called $C^\infty$ at $p$. A mapping $f:M\to N$ that is $C^\infty$ at every $p\in M$ is called a \emph{differentiable mapping} (or a \emph{smooth mapping}). Moreover, if $m\geq n$, a differentiable mapping $f:M\to N$ is said to be a \emph{submersion} if $\operatorname{rank} f=n$ at every point of $M$.

% \begin{definition}
A \emph{fibre bundle} is a triple $(E,\pi,M)$, where
 % $p:E\to M$ is a surjective submersion. Here 
 $E$ is a manifold called the \emph{total space}, $M$ is a manifold called the \emph{base space} and $\pi:E\to M$ is a surjective submersion called the \emph{projection} of the bundle. For each $x$ in $M$, the submanifold $\pi^{-1}(x)=E_x$ is a surjective submersion called the \emph{fibre} over $x$. Moreover, $E$ is called a \emph{fibred manifold} over $M$. A mapping $s:M\to E$ such that $\pi\circ s =\mathrm{id}$ is called a \emph{section} of $E$.
% \end{definition}

There are two types of fibre bundles which are particularly relevant in physics: vector bundles and principal bundles. In the former each fibre has the structure of a vector field, whereas in the latter each fibre has the structure of a Lie group. Throughout this thesis vector bundles, and particularly the tangent and cotangent bundle, will play the major role. However, it is also worth mentioning that principal bundles are closely related with gauge theories (see reference \cite{collinucci_topology_2006} and references therein).

% % \begin{definition}
% % % MdL
% A \emph{(real) vector bundle} $E$ of rank $n$ over $M$ is a fibre bundle $(E,\pi,M)$ such that for each $x$ in $M$
% \begin{enumerate}
% \item $E_x$ has the structure of a real $n$-dimensional vector space.
% \item (\emph{Local triviality}) There exists a neighbourhood $U_x$ and a diffeomorphism $H:U_x\times \RR^n\to \pi^{-1}(U_x)$ such that, for each $y$ in $U_x$ and each $v$ in $\RR^n$, $v\mapsto H(y,v)$ defines an isomorphism between the vector spaces $\RR^n$ and $E_y$.
% \item 
% \end{enumerate}
% \ojo{Complete (intersections)}
% % \end{definition}

Recall that, given a vector space $V$, $\mathrm{GL}(V)$ denotes the general linear group of $V$, that is, the set of all invertible linear transformations $V\to V$, together with the composition of these transformations. Consider two manifolds $E$ and $M$, and let $\left\{(U_i,\varphi_{U_i})  \right\}$ be an atlas of $M$.
A \emph{(real) vector bundle} $E$ of rank $q$ over $M$, with \emph{typical fibre} $V=\RR^q$, is a fibre bundle $(E,\pi,M)$ such that:
\begin{enumerate}
\item Every map $\varphi_{U_i}$ is a diffeomorphism from $U_i\times \RR^q$ to $\pi^{-1}(U_i)$, and
\begin{equation}
	\pi\circ \varphi_{U_i}(p,y)=p
\end{equation}
for every $p\in U$ and every $y\in \RR^q$.
\item For any fixed $p\in U$, let
\begin{equation}
	\varphi_{U_i,p}(y) = \varphi_{U_i}(p,y), \quad y\in \RR^q.
\end{equation}
Then $\varphi_{U_i,p}:\RR^q\to \pi^{-1}(p)$ is an homeomorphism (that is, it is continuous and has a continuous inverse mapping). 
\item When $U_i\cap U_j\neq \varnothing$ for $i\neq j$, then, for any $p\in U_i\cap U_j$,
\begin{equation}
	g_{U_iU_j}(p) = \varphi_{U_j,p}^{-1} \circ \varphi_{U_i,p}: \RR^q \to \RR^q
\end{equation}
is a linear automorphism of $V=\RR^q$, that is, $g_{U_iU_j}(p)\in \mathrm{GL}(V)$. Moreover, the map $g_{U_i U_j}:U_i \cap U_j\to \mathrm{GL}(V)$ is differentiable.
\end{enumerate}

% \begin{definition}[Tangent space and tangent bundle]
Let $M$ be an $m$-dimensional differentiable manifold. A \emph{curve} at $x\in M$ is a differentiable mapping $\sigma:I\to M$, with $I\ni 0$ an open interval of $\RR$ and $\sigma(0)=x$. Two curves $\sigma$ and $\tau$ are said to be \emph{tangent} at $x$ if there exists a neighbourhood $U$ of $x$ with local coordinates $(x^i)$ such that
\begin{equation}
	\left.\frac{\mathrm{d} (x^i\circ \sigma)} {\mathrm{d}t }\right|_{t=0}
	=\left.\frac{\mathrm{d} (x^i\circ \tau)} {\mathrm{d}t }\right|_{t=0},\quad 1\leq i\leq m.
\end{equation}
This definition is independent of the local coordinate system and defines an equivalence class of curves. The equivalence class $[\sigma]$ of $\sigma$ is called the \emph{tangent vector} of $\sigma$ at $x$, also denoted by $\dot{\sigma}(0)$. The set of equivalence classes of curves at $x$ is called the \emph{tangent space of $M$ at $x$}, denoted by $T_x M$. The coordinates $(x^i)$ induce a basis $\left\{\partial/\partial x^i  \right\}$ of $T_xM$.
 Let
\begin{equation}
	TM = \bigcup_{x\in M} T_x M.
\end{equation}
Let $\tau_M:TM\to M$ be the canonical projection, given by $\tau_m(v)=x$ for each $v\in T_x M$. The triple $(TM,\tau_M,M)$ is a vector bundle called the \emph{tangent bundle of $M$}. The $2m$-dimensional manifold $TM$ is also sometimes called the tangent bundle of $M$. 

If $f:M\to N$ is a mapping of class $C^1$ between the manifolds $M$ and $N$, the \emph{tangent of $f$} is given by
\begin{equation}
\begin{aligned}
	Tf:TM&\to TN\\
	Tf([\sigma]_x)&=[f\circ \sigma]_{f(x)}
\end{aligned}
\end{equation}
for each $x\in M$.

Consider a vector bundle $(E,\pi,M)$. A differentiable mapping $s:M\to E$
is called a \emph{section} of the vector bundle if
\begin{equation}
	\pi\circ s =\operatorname{id}:M\to M.
\end{equation}
A \emph{vector field} on $M$ is a section $X:M\to TM$ assigning to each point $x\in M$ a tangent vector $X(x)$ of $M$ at $x$. If $M$ has local coordinates $(x^i)$, then $\left\{ (\partial/\partial x^i)_x  \right\}_{i=1}^{\dim M}$ is a basis for $T_xM$ for each $x\in M$. A vector field $X$ can thus be locally written as\footnote{Throughout the text, the summation convention over repeated indices is employed.}
\begin{equation}
	X=X^i(x) \frac{\partial  } {\partial x^i}.
\end{equation}
A curve $\sigma:I\to M$ is called an \emph{integral curve} of $X$ if, for every $t\in I$, $X(\sigma(t))$ is the tangent vector to the curve at $\sigma(t)$.
A vector field acts as a differential operator on a differentiable function $f$, locally 
\begin{equation}
	X(f)=X^i \frac{\partial f} {\partial x^i}.
\end{equation}
% \end{definition}

% \begin{definition}[Cotangent space and cotangent bundle]
Let $M$ be a $m$-dimensional differentiable manifold. The \emph{cotangent space of $M$ at $x$}, denoted by $T_x^*M$, is the dual vector space of $T_xM$ (see \cite{ibort_alberto_notas_2001} for dual vector space and other concepts from linear algebra). 
An element $\alpha$ of $T_x^*M$ is called a \emph{tangent covector} of $M$ at $x$. The coordinates $(x^i)$ induce a basis $\left\{\dd x^i  \right\}$ of $T_x^*M$.
Set 
\begin{equation}
	T^*M=\bigcup_{x\in M} T_x^* M,
\end{equation}
and let $\pi_M:T^*M\to M$ be the canonical projection given by $\pi_M(\alpha)=x$ for each $\alpha\in T_x^*M$. The triple $(T^*M,\pi_M, M)$ is a vector bundle called the \emph{cotangent bundle of $M$}. The $2m$-dimensional manifold $T^*M$ is also sometimes referred to as the cotangent bundle of $M$.

A \emph{$1$-form} on $M$ is a section $\alpha:M\to T^*M$ assigning to each point $x\in M$  a tangent covector $\alpha(x)\in T_x^*M$. Consider a differentiable function $f:M\to \RR$. The \emph{differential} of $f$ at $x\in M$ is a linear mapping
\begin{equation}
	\dd f(x):T_x M \to T_{f(x)} \RR \cong \RR.
\end{equation}
Moreover, $\dd f(x)$ can be regarded as a tangent covector of $M$ at $x$. Let $v\in T_x M$ and consider a curve $\sigma$ on $M$ such that $\sigma(0)=x$ and $\dot{\sigma}(0)=v$. Then
\begin{equation}
	\dd f(x)(v) = \dd f(x) [v] = [f\circ \sigma],
\end{equation}
but $ [f\circ \sigma]$ is the tangent vector of $\RR$ at $f(x)$ given by the curve $f\circ \sigma$, so if $t$ denotes the coordinate of $\RR$, 
\begin{equation}
	[f\circ \sigma] = \left( \frac{\mathrm{d}(f\circ \sigma)} {\mathrm{d}t}  \right)_{t=0} \left( \frac{\partial  } {\partial t}  \right)_{f(x)}.
\end{equation}
On the other hand,
\begin{equation}
	v(f) = \left( \frac{\mathrm{d}(f\circ \sigma)} {\mathrm{d}t}  \right)_{t=0},
\end{equation}
and hence
\begin{equation}
	\dd f(x) (v) = v(f).
\end{equation}
In particular, if $M$ has local coordinates $(x^i)$, then the set of 1-forms $\left\{ \dd x^i  \right\}_{i=1}^{\dim M}$ satisfies
\begin{equation}
	(\dd x^i) (x) \left( \frac{\partial  } {\partial x^j}  \right)_x = \left( \frac{\partial  } {\partial x^j}  \right)_x (x^i) = \delta^{i}_{j},
\end{equation}
where $ \delta^{i}_{j}$ is the well-known Kronecker delta. 
Therefore $\left\{(\dd x^i)(x)  \right\}_{i=1}^{\dim M}$ is a basis for $T_x^*M$. In fact, it is the dual basis of $\left\{(\partial/\partial x^i)_x  \right\}_{i=1}^{\dim M}$.
 Then a 1-form $\alpha$ can be locally written as
\begin{equation}
	\alpha=\alpha_i(x) \dd x^i.
\end{equation}
The differential of $f$ at $x$ given by
\begin{equation}
	\dd f(x) = \left( \frac{\partial f} {\partial x^i}  \right)_x (\dd x^i) (x),
\end{equation}
which is the usual expression from elementary calculus.

% \end{definition}
A $1$-form $\alpha$ is naturally paired to a vector field $X$, yielding a function $\left\langle \alpha,X  \right\rangle$ on $M$, locally given by
\begin{equation}
 	\left\langle \alpha,X  \right\rangle\equiv \alpha(X) = \alpha_i(x) X^i(x).
 \end{equation} 

From the point of view of geometric mechanics, if a manifold $Q$ is the configuration space (that is, the space of positions), its tangent bundle $TQ$ can be regarded as the space of velocities, and its cotangent bundle $T^*Q$ as the phase space. In fact, if $Q$ has coordinates $(q^i)$, it induces coordinates $(q^i,\dot{q}^i)$ on $TQ$, and coordinates $(q^i,p_i)$ on $T^*Q$. Physically $(q^i)$ are interpreted as the positions (generalized coordinates), $(\dot{q}^i)$ as the (generalized) velocities, and $(p_i)$ as the conjugate momenta.
% In fact, if $Q$ has coordinates $(q^i)$, it induces coordinates $(q^i,\dot{q}^i)$ on $TQ$, which are physically interpreted as the positions and the velocities. Similarly, it induces coordinates $(q^i,p_i)$ on $T^*Q$ that physically are regarded as the positions and the momenta.

% Let $V$ be an $m$-dimensional vector space and $V^*$ its dual vector space. 
Let $V$ and $W$ be vector spaces, and denote by $V^*$ and $W^*$ their respective dual vector spaces. The \emph{tensor product} $V^*\otimes W^*$ is defined so that for each $v^*\in V^*$ and $w^*\in W^*$, there exists a $v^*\otimes w^*\in V^*\otimes W^*$ such that
\begin{equation}
	v^*\otimes w^* (x,y) = v^*(x)\ w^*(y)
\end{equation}
for each $x\in V$, $y\in W$. Similarly, one can define $V\otimes W$. In addition, the condition
\begin{equation}
	v^*\otimes w^* \left( x\otimes y  \right) = v^*(x)\ w^*(y)   	
\end{equation}  
is required, so that  $V^*\otimes W^*$ is the dual of $V\otimes W$.
The \emph{tensor space of type $(r,s)$} is given by
\begin{equation}
	T^r_s (V) = V\underbrace{\otimes \cdots \otimes}_{r\text{-times}} V \otimes V^*\underbrace{\otimes \cdots \otimes}_{s\text{-times}}V^*.
\end{equation}
% Each element $K\in T^r_sV$ is called a \emph{tensor of type $(r,s)$}. 
A \emph{tensor field $K$ of type $(r,s)$} on a manifold $M$ is a mapping assigning an element $K(x)\in T^r_s(T_x M)$ to each point $x\in M$, in other words, it is a section of the tensor bundle $T^r_s (TM)$. Locally, if M has coordinates $(x^i)$, $K$ can be written as
\begin{equation}
	K = K^{i_1\cdots i_r}_{j_1\cdots j_s} 
	\frac{\partial  } {\partial x^{i_1}}
	\otimes \cdots \otimes
	\frac{\partial  } {\partial x^{i_r}}
	\otimes \dd x^{j_1} 
	\otimes \cdots \otimes
	\dd x^{j_s}.
\end{equation}
A skew-symmetric tensor field of type $(0,p)$ is called a \emph{$p$-form}. A function $f:M\to \RR$ is sometimes called a \emph{$0$-form}. The set of $p$-forms on $M$ forms a vector bundle denoted by $\Omega^p(M)$. The full space of differential forms of a manifold $M$ is
\begin{equation}
	\Omega(M)\coloneqq \bigoplus_{p=0}^m\Omega^p(M).
\end{equation}
% \ojo{Definition Chern wedge product}
The \emph{exterior product} (or \emph{wedge product}) of $p$ basis elements of $T_x^*M$ is given by
\begin{equation}
	\dd x^{i_{1}} \wedge \dd x^{i_{2}} \wedge \cdots \wedge \dd x^{i_{p}}=\frac{1}{p !} \sum_{\sigma}\mathrm{sgn}(\sigma)\ \dd x^{\sigma\left(i_{1}\right)} \otimes \dd x^{\sigma\left(i_{2}\right)} \otimes \cdots \otimes \dd x^{\sigma\left(i_{p}\right)},
\end{equation}
where the sum is taken over all permutations $\sigma$ of $(1,2,\ldots,p)$. These products form a local basis of $\Omega^p(M)$. For instance, $\left\{\dd x\wedge \dd y,\dd x\wedge \dd z, \dd y\wedge \dd z  \right\}$ is a basis of $\Omega^2(\RR^3)$. Then a $p$-form $\alpha$ can be locally written as
\begin{equation}
	\alpha=\alpha_{i_1\cdots i_p} \dd x^{i_1} \wedge \cdots \wedge \dd x^{i_p}.
\end{equation}
The \emph{exterior product} of a $p$-form $\alpha$
and a $q$-form $\beta$ is given by
\begin{equation}
	\alpha \wedge \beta =\frac{1}{p ! q !} \alpha_{i_{1} \ldots i_{p}} \beta_{j_{1} \ldots j_{q}} \dd x^{i_{1}} \wedge \cdots \wedge \dd x^{i_{p}} \wedge \dd x^{j_{1}} \wedge \cdots \wedge \dd x^{j_{q}}.
\end{equation}
% Consider the $1$-forms $\alpha$ and $\beta$ on $\RR^2$ given by
% \begin{equation}
% 	\alpha =\alpha_1\dd x^1 + \alpha_2\dd x^2,
% \end{equation}
% and
% \begin{equation}
% 	\beta =\beta_1\dd x^1 + \beta_2\dd x^2,
% \end{equation}
% respectively. Then 
% \begin{equation}
% 	\alpha\wedge \beta = (\alpha_1 \beta_2-\alpha_2 \beta_1 )\ \dd x^1 \wedge \dd x^2 = \det(M_{ij})\ \dd x^1 \wedge \dd x^2,
% \end{equation}
% with
% \begin{equation}
% (M_{ij})=
% \begin{pmatrix}
% \alpha_1 & \beta_1\\
% \alpha_2 & \beta_2
% \end{pmatrix}.
% \end{equation}
% This result can be extended for an $n$-form in a $n$-dimensional manifold.
An $m$-dimensional manifold $M$ is said to be \emph{orientable} if there exists an $m$-form $\omega$ on $M$ such that $\omega(x)\neq 0$ for all $x\in M$; $\omega$ is called a \emph{volume form}. If a volume form $\Omega$ has compact support in a region contained in a neighbourhood $U\subset M$, it can be integrated, yielding the volume of the corresponding orientable manifold. If $(x^i)$ are coordinates on $U$, $\Omega$ can be written as
\begin{equation}
	\Omega = f(x^1,\ldots,x^{2m})\ \dd x^{1}\wedge \cdots \wedge \dd x^{2m},
\end{equation}
and the integral of $\Omega$ on $M$ is given by
\begin{equation}
	\int_M \Omega = \int_U f(x^1,\ldots,x^{2m})\ \dd x^{1}\cdots \dd x^{2m},
\end{equation}
where the integral on the right-hand side is the usual Riemann integral.

% \subsection{Cartan's calculus}
The \emph{exterior derivative} on $M$ is the unique linear operator $\dd:\Omega (M)\to \Omega(M)$ such that
\begin{enumerate}
\item $\dd:\Omega^p(M)\to \Omega^{p+1}(M)$,
\item $\dd^2=0$,
\item for each $f\in C^\infty(M)$, $\dd f$ is the differential of $f$, locally given by
\begin{equation}
	\dd f = \frac{\partial f} {\partial x^i} \dd x^i;
\end{equation}
\item if $\alpha\in \Omega^p(M)$ and $\beta\in \Omega(M)$, then
\begin{equation}
	\dd (\alpha\wedge \beta)=\dd \alpha \wedge \beta +(-1)^p \wedge \dd \beta.
\end{equation}
\end{enumerate}

A $p$-form $\omega$ is called \emph{closed} if $\dd \omega=0$, and \emph{exact} if there exists a $(p-1)$-form $\alpha$ such that $\omega=\dd \alpha$. Clearly, every exact form is closed. Furthermore, if $\omega$ is closed it is locally exact, that is, for each point in $M$ there exists a neighbourhood $U$ such that $\omega|_U$ is exact. The latter result is known as the \emph{Poincaré lemma}.

The \emph{interior product} of a vector field $X$ by a $p$-form $\alpha$ is defined as follows:
\begin{enumerate}
\item $\contr{X}\alpha=0$ if $p=0$,
\item $\contr{X}\alpha=\alpha(X)$ if $p=1$,
\item $\contr{X}\alpha$ is a $(p-1)$-form such that
\begin{equation}
	(\contr{X}\alpha) (Y_1,\ldots,Y_{p-1})=\alpha(X,Y_1,\ldots,Y_{p-1})
\end{equation}
for each $Y_1,\ldots,Y_{p-1}\in TM$.
\end{enumerate}

Let $U$ be a open neighbourhood of the manifold $M$. A \emph{local one-parameter group (of diffeomorphisms)} is a smooth mapping $\Phi:(-\epsilon,\epsilon)\times U \to M$, denoted by $\Phi_t(p)$ for each $p\in U$ and each $\abs{t}<\epsilon$, such that for any $p\in U$ and any $\abs{s},\abs{t}<\epsilon$
\begin{enumerate}
\item $\Phi_0(p)=p$;
\item if $\abs{t+s}<\epsilon$ and $\Phi_t(p)\in U$, then $\Phi_{t+s}(p)=\Phi_s\circ \Phi_t(p)$.
\end{enumerate}
A local one-parameter group induces a vector field on $U$. Indeed, if $(x^i)$ are coordinates at $U$ one can write
\begin{equation}
	X_p = \left.X^i_p \frac{\partial  } {\partial x^i}\right|_p,
\end{equation}
where
\begin{equation}
	X_{p}^i = \left.\frac{\mathrm{d}x^i(\Phi_t(p))} {\mathrm{d}t }\right|_{t=0}.
\end{equation}
Conversely, a vector field $X$ on $M$ defines a local one-parameter group for each point $p\in M$. Consider the system of ordinary differential equations
\begin{equation}
	\frac{\mathrm{d}x_p^i(t)} {\mathrm{d} t}=X^i(x_p(t)),\quad 1\leq i\leq m,
\end{equation}
with the initial conditions $x_p(0)=(x_p^1(0),\ldots,x_p^m(0))=p$. 
It can be easily shown that
\begin{equation}
	\Phi_t(p)=\Phi(t,p)=x_p(t)
\end{equation}
is a local one-parameter group. The local one-parameter group generated by $X$ is also known as the \emph{flow} of $X$.

As it will be detailed throughout the text, physically vector fields can represent infinitesimal symmetries, and the one-parameter groups they generate represent their associated finite symmetries. For instance, $X=\partial/\partial x$ generates translations along the $x$-direction.

Let $\varphi:M\to N$ be a smooth mapping between the manifolds $M$ and $N$. The \emph{differential} of $\varphi$ at $x$ is a linear mapping $\dd \varphi$
such that, for any function $f$ on $N$ and every vector field $X$ on $M$,
\begin{equation}
	\dd \varphi(X)(f) = X (f\circ \varphi).
\end{equation}
Let $\alpha$ be a $p$-form on $N$. The \emph{pullback} of $\alpha$ by $\varphi$, denoted by $\varphi^*\alpha$, is a $p$-form on $M$ given by
\begin{equation}
	\varphi^*\alpha(X_1,\ldots,X_p)(x)=\alpha(\dd\varphi(X_1),\ldots,\dd\varphi(X_p))(\varphi(x))
\end{equation}
for each $x\in M$ and $X_1,\ldots,X_p\in T_xM$. In particular, for each $f\in \Omega^0(N)$, $\varphi^*f=f\circ \varphi$. Let now $X$ be a vector field on $M$. 
The \emph{pushforward} of $X$ by $\varphi$ is a vector field $\varphi_*X$ on $N$ defined by
\begin{equation}
	(\varphi_* X)(g) = X(g\circ \varphi).
\end{equation}
for each function $g:N\to \RR$.

Let $X$ a vector field on $M$, and $\Phi_t$ the local one-parameter group generated by $X$.
Let $\alpha$ be a $p$-form on $M$.
The \emph{Lie derivative} of $\alpha$ with respect to $X$, denoted by $\liedv{X}\alpha$, is a $p$-form on $M$ given by
\begin{equation}
	% (\liedv{X}\alpha) (x)
	\liedv{X}\alpha
	 % = \lim_{t\to 0} \frac{\alpha(x)-(\Phi_{-t}^*\alpha)(x)}{t}. 
	 = \lim_{t\to 0} \frac{\Phi_{t}^*(\alpha)-\alpha}{t}.
	 \label{Lie_derivative_definition}
\end{equation}
In practice, the Lie derivative can be computed by means of the so-called \emph{Cartan's formula}:
\begin{equation}
	\liedv{X} \alpha = \contr X (\dd \alpha)+ \dd (\contr{X}\alpha). \label{Cartan_formula}
\end{equation}
In particular, 
\begin{equation}
	\liedv{X}f=X(f)
\end{equation}
for every function $f$ on $M$. The definition of Lie derivative can be extended for a tensor field of type $(r,s)$. Consider the mapping
\begin{equation}
	\tilde{\Phi}_t: T^r_s(T_{\Phi_t(p)}M) \to T^r_s(T_{p}M) 
\end{equation}
given by
\begin{equation}
	\tilde{\Phi}_t \left( v_1\otimes \cdots \otimes v_r \otimes v^{*1} \otimes \cdots \otimes v^{*s}  \right)
	= \left( \Phi_t^{-1}  \right)_* (v_1) \otimes \cdots \otimes \left( \Phi_t^{-1}  \right)_* (v_r)
	\otimes \Phi_t^* (v^{*1})\otimes \cdots \otimes \Phi_t^* (v^{*s}),
\end{equation}
for any $v_1,\ldots,v_r \in T_{\Phi_t(p)}M$ and any $v^{*1},\ldots,v^{*s} \in T^*_{\Phi_t(p)}M$. If $T$ is a $(r,s)$-type tensor field, its \emph{Lie derivative} with respect to $X$ is defined by
\begin{equation}
	\liedv{X}T = \lim_{t\to 0} \frac{\tilde{\Phi}_t(T)-T}{t},
\end{equation}
and it is also an $(r,s)$-type tensor field.

Given two vector fields $X$ and $Y$ on $M$, consider a new vector field $[X,Y]$ on $M$ such that
\begin{equation}
	[X,Y](f) = X (Y(f)) - Y (X(f))
\end{equation}
for every function $f$ on $M$. This vector field is called the \emph{Lie bracket} (or the \emph{commutator}) of $X$ and $Y$. The commutator of $X$ and $Y$ coincides with the Lie derivative of $Y$ with respect to $X$, namely
\begin{equation}
	[X,Y] = \liedv{X} Y = -\liedv{Y} X.
\end{equation}
In addition, one can easily check that the following properties hold:
\begin{enumerate}
\item $[X,Y]=-[Y,X]$,
\item $[fX,gY]=fX(g)Y-gY(f)X+fg[X,Y]$,
\item $\left[[X,Y],Z\right]+\left[[Y,Z],X\right]+\left[[Z,X],Y\right]=0$ (Jacobi identity).

\end{enumerate}

Other relevant identities that will be employed are the following:
\begin{equation}
\begin{aligned}
	&\contr{X}(\alpha \wedge \beta)=\contr{X} \alpha \wedge \beta+(-1)^{p} \alpha \wedge \contr{X} \beta,\\
	&\contr{X}\contr{X}\alpha=0,\\
	&X(f)=\contr{X} \dd f,\\
	& \dd \alpha(X_0,\ldots, X_p) 
	= \sum_{i=0}^p (-1)^i X_i \left( \alpha(X_0,\ldots, X_{i-1},X_{i+1},\ldots, X_p)  \right)\\
	&\quad+\sum_{i<j} (-1)^{i+j} \alpha \left([X_i,X_j],X_0,\ldots, X_{i-1},X_{i+1},\ldots, X_{j-1},X_{j+1},\ldots,X_p  \right)\\
	&\varphi^*(\contr{X}\alpha)=\contr{\varphi_*X} (\varphi^*\alpha),\\
	&\varphi^*(\dd \alpha)=\dd (\varphi^*\alpha),\\
	&\liedv{fX}\alpha =f\liedv{X}\alpha +\dd f \wedge \contr{X}\alpha,\\
	&\contr{[X,Y]}\alpha =\liedv{X}\contr{Y}\alpha -\contr{Y}\liedv{X}\alpha,\\
	&\liedv{X} \dd \alpha =\dd \liedv{X}\alpha,\\
	&\liedv{X}\contr{X}\alpha =\contr{X}\liedv{X}\alpha,
\end{aligned}
\label{Cartan_properties}
\end{equation}
where $\alpha$ and $\beta$ are a $p$-form and a $q$-form respectively, and $X_0,\ldots, X_p$ are vector fields.
% Here $[X,Y]$ is the commutator of $X$ with $Y$ as differential operators.

Finally, the concepts of Lie group and Lie algebras, as well as its actions on manifolds, will be important for the study of symmetries and reduction. See references \cite{ortega_momentum_2004,abraham_foundations_2008,leon_methods_1989,hall_lie_2015}. Recall that a \emph{Lie group} $G$ is a differentiable manifold which is also an algebraic group whose operations are differentiable, that is, 
\begin{enumerate}
\item it is equipped with a smooth and associative multiplication map $\mu:G\times G\to G$,
\item there exists an identity element $e$,
\item for each $g\in G$, there exists an inverse $g^{-1}$,
 \end{enumerate} 
so that, for any $g\in G$,
\begin{equation}
	\mu(g, e)=\mu(e, g)=g,
\end{equation}
and 
\begin{equation}
	\mu(g, g^{-1})=\mu(g^{-1}, g)=e.
\end{equation}
An \emph{action of a Lie group} $G$ on a manifold $M$ is a smooth mapping $\Phi:G\times M\to M$ such that
\begin{equation}
	\Phi(e,x)\equiv \Phi_e (x)=x
\end{equation}
for any $x\in M$. An action is said to be
\begin{enumerate}
% \item \emph{transitive} if for every $x, y \in M$ there exists a $g\in G$ such that
\item \emph{free} if, for every $x\in M$, $\Phi_g(x)=x$ if and only if $g=e$,
\item \emph{proper} if and only if $\Phi$ is a proper mapping, that is, for any compact subset $K\subset M$, $\Phi^{-1}(K)$ is also compact. 
\end{enumerate}

\begin{remark}
The action of a compact Lie group is always proper.
\end{remark}

The \emph{isotropy group} of $x\in M$ is the Lie subgroup
\begin{equation}
G_{x}=\left\{g \in G \mid \phi_{g}(x)=x\right\} \subset G
\end{equation}
Let $F:M\to N$ be a mapping between the manifolds $M$ and $N$, and let $\Phi:G\times M\to M$ and $\Psi:G\times N\to N$ be group actions. The mapping $F$ is said to be \emph{equivariant} (with respect to these actions of $G$) if
\begin{equation}
	F(\Phi_g x) = \Psi_g F(x)
\end{equation}
for any $g\in G$.

 An \emph{algebra} is a vector space $V$ equipped with a bilinear multiplication. If this multiplication, denoted by $[\cdot,\cdot]:V\times V\to V$, additionally satisfies
\begin{enumerate}
\item anti-symmetry: $[X,Y]=-[Y,X]$,
\item Jacobi identity: $[X,[Y, Z]]+[Y,[Z, X]]+[Z,[X, Y]]=0$,
\end{enumerate}
for every $X,Y,Z\in V$, then $(V,[\cdot,\cdot])$ forms a \emph{Lie algebra}. For each Lie group $G$, there is a corresponding Lie algebra $\mathfrak{g}\approxeq T_eG$. The left translation $L_g:G\to G$ defined by $g\in G$ is given by
\begin{equation}
	L_g(h) = gh
\end{equation}
for each $h\in G$. A vector field $X$ on $G$ is called \emph{left invariant} if the pushforward of every left translation leaves it unchanged, namely
\begin{equation}
	L_{g*}X=X
\end{equation}
for every $g\in G$. The vector subspace $\mathfrak{X}_L(G)$ of left invariant vector fields on $G$  is isomorphic to $T_eG$, the tangent space of $G$ at the identity. More specifically, for each $X\in \mathfrak{X}_L(G)$, it corresponds its value at the identity $X(e)\in T_eG$. The inverse of this map is given by
\begin{equation}
\begin{aligned}
	T_e G & \to \mathfrak{X}_L(G)\\
	\xi &\mapsto \xi_L,
\end{aligned}
\end{equation}
with $\xi_L$ such that $\xi_L(e)=\xi$. Via this isomorphism, $T_e G$ becomes a Lie algebra, with Lie bracket
\begin{equation}
	[\xi_L,\eta_L] = [\xi,\eta]_L,
\end{equation}
where $[\cdot,\cdot]$ on the left-hand side denotes the Lie bracket of vector fields on $G$.
It is known as the Lie algebra of $G$ and usually denoted by $\mathfrak{g}$.

 Consider a left invariant vector field $X$ on $G$.
 % , and let $\gamma_\xi:\RR\to G$ be the integral curve of $X$ through $e$.
The \emph{exponential map} is given by
\begin{equation}
\begin{aligned}
\exp:\mathfrak{g}&\to G\\
\exp(\xi)&=\gamma_\xi(1),
\end{aligned}
\end{equation}
where $\xi\in \mathfrak{g}$, and $\gamma_\xi:\RR\to G$ is the integral curve of $\xi$ through $e$.
% , and $\gamma_\xi:\RR\to G$ is the unique one-parameter subgroup of $G$ whose tangent vector at the identity is equal to $\xi$.
% \ojo{change definition}

The \emph{adjoint action} is an action of $G$ on its Lie algebra $\mathfrak{g}$ given by
\begin{equation}
\begin{aligned}
\operatorname{Ad}:G\times \mathfrak{g}&\to \mathfrak{g}\\
(g,\xi) &\mapsto \operatorname{Ad}_g \xi = \left.\frac{\mathrm{d} } {\mathrm{d}t} g \exp(t\xi) g^{-1} \right|_{t=0}.
\end{aligned}
\end{equation}
The \emph{coadjoint action} is an action $\operatorname{Ad}^*:G\times \mathfrak{g}^*\to \mathfrak{g}^*$, with $\mathfrak{g}^*$ the dual $\mathfrak{g}$, such that
\begin{equation}
	(\operatorname{Ad}^*_g \alpha)(\xi)=
	 \alpha(\operatorname{Ad}_g\xi)
\end{equation}
for each $\alpha\in \mathfrak{g}^*$, each $\xi \in\mathfrak{g}$ and each $g\in G$.

Given a group $G$ acting on a manifold $M$, one can define an equivalence relation by setting $x\sim y$ (with $x, y\in M$) if there exists a $g\in G$ such that $\Phi_g(x)=y$. The equivalence classes are called the \emph{orbits} of $G$ in $M$. If $G$ is a Lie group acting smoothly, freely and properly on a differentiable manifold $M$, it can be shown (see reference \cite[Theorem 21.10]{Lee2012}) that the \emph{orbit space} $M/G$ is a differentiable manifold of dimension $\dim M - \dim G$.

% \ojo{Equivalence class}
% \ojo{Quotient space}
% \input{1.1_cotangent}
\section{Symplectic geometry} \label{section_symplectic}
Symplectic geometry is the natural framework for Hamiltonian mechanics (see references \cite{abraham_foundations_2008,leon_methods_1989,arnold_mathematical_1978,godbillon_geometrie_1969,munoz-lecanda_miguel_c_geometrisistemas_2007}). As it will be seen, a phase space has the structure of a symplectic manifold. Moreover, Lagrangian systems can also be endowed with a symplectic structure when the Legendre transform is well-defined.

Consider a manifold $M$ and a $2$-form $\omega$ on $M$. For each point $x\in M$ one can define a mapping
\begin{equation}
\begin{aligned}
	\flat_{\omega(x)}:T_xM&\to T_x^*M\\
	X&\mapsto \contr{X}\omega(x). \label{mapping_S_omega}
\end{aligned}
\end{equation}
The dimension of the image of $\flat_{\omega(x)}$ is called the \emph{rank} of $\omega$ at $x$:
\begin{equation}
	\operatorname{rank} \omega(x) = \dim (\operatorname{Im}\flat_{\omega(x)}).
\end{equation}
It can be shown that the rank of every 2-form is an even number.
The dimension of the kernel of $\flat_{\omega(x)}$ is called the \emph{corank} of $\omega$ at $x$:
\begin{equation}
	\operatorname{corank} \omega(x) = \dim (\operatorname{ker}\flat_{\omega(x)}),
\end{equation}
where
\begin{equation}
	\operatorname{ker}\flat_{\omega(x)} = \left\{X\in T_xM\mid \contr{X}\omega(x)=0  \right\}.
\end{equation}
A 2-form is called \emph{non-degenerate} if $\operatorname{corank} \omega(x)=0$ for every $x\in M$. 
A \emph{symplectic form} (or \emph{symplectic structure}) on a manifold $M$ is a non-degenerate closed 2-form on $M$. A manifold $M$ endowed with a symplectic form $\omega$, denoted by $(M,\omega)$, is called a \emph{symplectic manifold}. Every symplectic manifold is even-dimensional.

By the \emph{Darboux theorem}, if $(M,\omega)$ is a $2m$-dimensional symplectic manifold, there exists a neighbourhood of each point with local coordinates $(q^1,\ldots,q^m,p_1,\ldots, p_m)$ such that the symplectic form can be written as
\begin{equation}
	\omega = \dd q^i \wedge \dd p_i. \label{symplectic_form_Darboux}
\end{equation}
Consider two symplectic manifolds $(M,\omega)$ and $(N,\alpha)$. A differentiable mapping $\varphi:M\to N$ is called a \emph{symplectic transformation} if $\varphi^*\alpha=\omega$. If a symplectic transformation is a global diffeomorphism, it is called a \emph{symplectomorphism}. In particular, a symplectic transformation $\varphi:M\to M$ which preserves the symplectic form, $\varphi^* \omega=\omega$, is called a \emph{canonical transformation}. 

 A \emph{symplectic vector field} (or an \emph{infinitesimal symplectic transformation}) is a vector field $X$ on $M$ which generates symplectic transformations. A vector field $X$ is a symplectic vector field if and only if
 \begin{equation}
 	\liedv{X}\omega=0.
 \end{equation}
 By Cartan's formula \eqref{Cartan_formula}, this is equivalent to $\contr{X}\omega$ being a closed 1-form. Let $\Phi_t$ be the flow of $X$. By the definition of Lie derivative \eqref{Lie_derivative_definition}, 
\begin{equation}
	\frac{\mathrm{d} } {\mathrm{d}t} \Phi_t^* \omega = \Phi_t^* \liedv{X}\omega=0,
\end{equation}
so $\Phi_t^*\omega$ is constant in $t$. On the other hand, $\Phi_0$ is the identity, and hence $\Phi_t^*\omega=\omega$. That is, the flow of a symplectic vector field is a canonical transformation. Conversely, a canonical transformation induces a symplectic vector field.

A symplectic manifold $(M,\omega)$ has two natural isomorphisms between $TM$ and $T^*M$:
\begin{equation}
\begin{aligned}
	\flat_{\omega}: TM & \rightarrow T^*M \\
	X & \mapsto \flat_{\omega}(X)=\contr{X} \omega,
\end{aligned}
\end{equation}
and $\sharp_\omega\coloneqq \flat_\omega^{-1}$, both of which are called \emph{musical isomorphisms}. To each function $f\in C^1(M,\RR)$, one can associate a vector field $X_f$ on $M$, given by
\begin{equation}
	% \flat_\omega(X_f) = \dd f,
	X_f = \sharp_\omega(\dd f),
\end{equation}
called a \emph{Hamiltonian vector field}. Clearly, every Hamiltonian vector field is symplectic. The \emph{Poisson bracket} is the bilinear operation
\begin{equation}
\begin{aligned}
\left\{\cdot,\cdot  \right\}: C^\infty(M,\RR)\times C^\infty(M,\RR) &\to C^\infty(M,\RR)\\
\left\{f,g  \right\} &=\omega(X_f,X_g),
\end{aligned}
\end{equation}
with $X_f, X_g$ the Hamiltonian vector fields associated to $f$ and $g$, respectively. It satisfyes the following properties:
\begin{enumerate}
\item anti-symmetry: $\{f, g\}=-\{g, f\}$,
\item Leibniz product: $\{f h, g\}=\{f, g\} h+f\{h, g\}$,
\item Jacobi identity: $\{f,\{g, h\}\}+\{g,\{h, f\}\}+\{h,\{f, g\}\}=0$,
\end{enumerate}
for each $f,g \in C^\infty(M,\RR)$.
By the properties i) and iii), the vector space of $C^\infty$-functions on $M$ forms a Lie algebra, with Lie bracket being the Poisson bracket.
 In addition,
\begin{equation}
	\left\{f,g  \right\} = -\liedv{X_f}g = -X_f(g) = \liedv{X_g}f = X_g(f).
\end{equation}
% for any $f,g \in C^\infty(M,\RR)$.
In canonical coordinates $(q,p)$, 
\begin{equation}
	\left\{f,g  \right\}=\frac{\partial f}{\partial q^{i}} \frac{\partial g}{\partial p_{i}}-\frac{\partial f}{\partial p_{i}} \frac{\partial g}{\partial q^{i}},
\end{equation}
which is the definition of Poisson bracket usually given in undergraduate Classical Mechanics courses \cite{goldstein_mecanica_1987,gantmakher_lectures_1970}. The \emph{Poisson bracket} of two 1-forms $\alpha$ and $\beta$ on $M$ is a 1-form on $M$ given by
\begin{equation}
\left\{\alpha,\beta  \right\} = -\flat_\omega \left( [X_\alpha,X_\beta]  \right).
	% \left\{\alpha,\beta  \right\} = -S_\omega \left( [X_\alpha,X_\beta]  \right), 
	\label{Poisson_bracket_forms}
\end{equation}
% with $S_\omega$ the mapping defined in equation \eqref{mapping_S_omega}.
Hence the diagram
\begin{center}
\begin{tikzcd}
TM\times TM \arrow[rr, "{-[\cdot,\cdot]}"] \arrow[dd, "\flat_\omega \times \flat_\omega"'] &  & TM \arrow[dd, "\flat_\omega"] \\\\
\Omega^1(M)\times \Omega^1(M) \arrow[rr, "{\left\{\cdot,\cdot\right\}}"]                  &  & \Omega^1(M)                 
\end{tikzcd}
\end{center}
commutes (that is, all the directed paths in the diagram with the same start and endpoints lead to the same result).
 Then the vector space $\Omega^1(M)$ with the Poisson bracket defined in equation \eqref{Poisson_bracket_forms} is also a Lie algebra. The Poisson bracket of 1-forms satisfyes the following properties:
\begin{enumerate}
\item $\{\alpha, \beta\}=-L_{X_{\alpha}} \beta+L_{X_{\beta}} \alpha+d\left(i_{X_{\alpha}} i_{X_{\beta}} \omega\right)$,
\item if $\alpha$ and $\beta$ are closed 1-forms, then ${\alpha,\beta}$ is exact;
\item $\left\{\dd f,\dd g  \right\}=\dd \left\{f,g  \right\}$ for any $f,g\in C^\infty(M)$.
\end{enumerate}
A corollary of the latter property is that
\begin{equation}
	X_{\left\{F,G  \right\}}=-[X_f,X_g].
\end{equation}

Every cotangent bundle $T^*Q$ is itself a symplectic manifold. In fact, the concept of symplectic manifolds emerged from cotangent bundles and Hamiltonian mechanics. Let $Q$ be a manifold with coordinates $(q)$, and $(q,p)$ the induced coordinates on its cotangent bundle $T^*Q$.
The 1-form on $T^*Q$ defined locally as
\begin{equation}
	% \theta_{(x, p)}(V)=p\left(\left(T_{(x, p)} \pi\right) V\right) .
	\theta=p_i\dd q^i \label{canonical_1_form}
\end{equation}
is called the \emph{canonical 1-form} (or \emph{tautological 1-form}). The 2-form $\omega=-\dd \theta$ is precisely the symplectic form in the Darboux coordinates as in equation \eqref{symplectic_form_Darboux}. This 2-form is known as the \emph{canonical symplectic form}. The canonical 1-form on $T^*Q$ is the unique 1-form on $T^*Q$ such that 
\begin{equation}
	\alpha^*\theta =\alpha \label{property_defining_canonical_1_form}
\end{equation}
for any 1-form $\alpha$ on $Q$, and thus
\begin{equation}
	\alpha^* \omega 
	= -\alpha^* \dd \theta
	= -\dd (\alpha^* \theta)
	= -\dd \alpha.
\end{equation}
If $Q$ is an $n$-dimensional manifold, then $T^*Q$ is $2n$-dimensional and the $2n$-form $\Omega$ given by
\begin{equation}
	\Omega = \omega \underbrace{\wedge \cdots \wedge}_{n\text{ times}} \omega\equiv \omega^n
\end{equation}
is a volume form.

\section{Hamiltonian mechanics}\label{section_Hamiltonian}
The main references for this section are \cite{leon_methods_1989,abraham_foundations_2008}; see also references \cite{arnold_mathematical_1978,jose_classical_1998}.
Geometrically, a Hamiltonian system is a triple $(M,\omega,H)$, with $(M,\omega)$ a symplectic manifold and $H:M\to \RR$ the Hamiltonian function. 
If a mechanical system has configuration space $Q$, its phase space is the cotangent bundle $T^*Q$, with canonical symplectic structure \eqref{symplectic_form_Darboux}. The trajectories of the system are the integral curves of the vector field $X_H$ on $M$ given by
\begin{equation}
	\contr{X_H} \omega = \dd H.
\end{equation}
Locally, $X_H$ can be written as
\begin{equation}
	X_H = X^i \frac{\partial}{\partial q^{i}} + \tilde{X}_i\frac{\partial}{\partial p_{i}},
\end{equation}
so
\begin{equation}
	\contr{X_H} \omega = X^i \dd p_i - \tilde{X}_i \dd q^i.
\end{equation}
On the other hand,
\begin{equation}
	\dd H = \frac{\partial H} {\partial q^i} \dd q^i + \frac{\partial H} {\partial p_i} \dd p_i,
\end{equation}
so $X^i= \partial H/\partial p_i$ and $\tilde{X}_i= \partial H/\partial q^i$, that is,
\begin{equation}
X_{H}=\frac{\partial H}{\partial p_{i}} \frac{\partial}{\partial q^{i}}-\frac{\partial H}{\partial q^{i}} \frac{\partial}{\partial p_{i}}. \label{vector_field_Hamiltonian}
\end{equation}
The integral curves of $X_H$ are then given by
\begin{equation}
% \left\{
\begin{aligned}
&\dot{q}^i	= \frac{\partial H} {\partial p_i},\\
& \dot{p}_i = - \frac{\partial H} {\partial q^i},
\end{aligned}	
% \right.
\end{equation}
which are the well-known \emph{Hamilton's equations}. They can also be written as
\begin{equation}
\begin{aligned}
& \dot{q}^i = \left\{q^i,H  \right\},\\
& \dot{p}_i = - \left\{p_i,H \right\},
\end{aligned}
\label{Hamilton_eqs_Poisson}
\end{equation}
where $\left\{\cdot,\cdot  \right\}$ is the Poisson bracket on $(M,\omega)$.

Since $X_H$ is, by definition, a Hamiltonian vector field, it is also a symplectic vector field. Therefore its flow $\Phi_t$ is a canonical transformation, that is, $\Phi_t^*\omega=\omega$. From here, one can immediately obtain the following well-known result in Hamiltonian mechanics.

\begin{proposition}[Liouville's theorem]
	The flow of $X_H$ preserves the volume form $\omega^n$, that is,
	\begin{equation}
		\Phi_t^*\omega^n =\omega^n.
	\end{equation}
	In other words, the volume of the phase space is preserved along the evolution of the system.
\end{proposition}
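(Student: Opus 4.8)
The plan is to reduce the claim to the fact, already established just above, that $X_H$ is a symplectic vector field. Since $X_H = \sharp_\omega(\dd H)$ is a Hamiltonian vector field, it is in particular symplectic, so $\liedv{X_H}\omega = 0$, and as shown its flow satisfies $\Phi_t^*\omega = \omega$ on the domain where $\Phi_t$ is defined.

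First I would invoke the multiplicativity of pullback with respect to the wedge product, which follows from the identities collected in \eqref{Cartan_properties} applied to the diffeomorphism $\Phi_t$: for forms $\alpha,\beta$ one has $\Phi_t^*(\alpha\wedge\beta) = \Phi_t^*\alpha \wedge \Phi_t^*\beta$. Applying this $n-1$ times to $\omega^n = \omega\wedge\cdots\wedge\omega$ and using $\Phi_t^*\omega=\omega$ gives
\begin{equation}
	\Phi_t^*\omega^n = (\Phi_t^*\omega)\wedge\cdots\wedge(\Phi_t^*\omega) = \omega\wedge\cdots\wedge\omega = \omega^n .
\end{equation}

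Alternatively, and equivalently, one can argue infinitesimally: by the Leibniz rule for the Lie derivative, $\liedv{X_H}\omega^n = n\,(\liedv{X_H}\omega)\wedge\omega^{n-1} = 0$; then from \eqref{Lie_derivative_definition} together with $\der{}{t}\Phi_t^*\omega^n = \Phi_t^*\liedv{X_H}\omega^n$ one sees that $\Phi_t^*\omega^n$ is independent of $t$, and since $\Phi_0 = \Id$ it equals $\omega^n$. Recalling from the end of Section \ref{section_symplectic} that $\omega^n$ is a volume form on the $2n$-dimensional manifold $M$, this is exactly the assertion that phase-space volume is conserved along the evolution.

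There is essentially no obstacle: the proposition is a one-line corollary of the canonical-transformation property of $X_H$ and the multiplicativity of pullback. The only point deserving a word of care is that $\Phi_t$ is in general only a local one-parameter group, so the identity $\Phi_t^*\omega^n = \omega^n$ is to be read on the open set where $\Phi_t$ is defined; this does not affect the local statement about volume preservation.
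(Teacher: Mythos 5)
Your argument is correct and is essentially the one the paper intends: the text derives $\Phi_t^*\omega=\omega$ immediately before the proposition and treats the volume statement as a direct corollary via the multiplicativity of pullback over wedge products, exactly as in your first paragraph. Your infinitesimal variant and the remark about $\Phi_t$ being only a local flow are fine additions but do not change the route.
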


One can also see that a canonical transformation $\varphi:T^*Q\to T^*Q$ preserves the Hamilton's equations. Indeed,
\begin{equation}
	\varphi^* \left\{f,g  \right\} 
	= \varphi^* (X_g(f))
	= \varphi^* (\contr{X_g} \dd f)
	=\contr{\varphi_* X_g} (\varphi^*\dd f)
	=\contr{\varphi_* X_g} \dd( \varphi^* f)
	% =\varphi_* X_g( \varphi^* f),
	= \contr{\varphi_* X_g} \contr{X_{\varphi^*f}} \omega
	= -\contr{X_{\varphi^*f}} \contr{\varphi_* X_g} \omega,
\end{equation}
where the properties \eqref{Cartan_properties} have been used,
but 
\begin{equation}
	\contr{\varphi_* X_g}\omega 
	=\contr{\varphi_* X_g}(\varphi^*\omega )
	=\varphi^* (\contr{X_g}\omega)
	= \varphi^*(\dd g	)
	= \dd (\varphi*g)
	= \contr{X_{\varphi^*g}}\omega,
\end{equation}
so 
\begin{equation}
	\varphi^* \left\{f,g  \right\}  
	= -\contr{X_{\varphi^*f}} \contr{X_{\varphi^*g}} \omega
	= \omega(X_{\varphi^*f},X_{\varphi^*g})
	= \left\{\varphi^*f, \varphi^*g  \right\},
\end{equation}
that is,
\begin{equation}
	\left\{f,g  \right\} \circ \varphi = 
	\left\{f\circ \varphi,g \circ \varphi \right\}.
\end{equation}
In particular, if $\varphi:(q,p)\mapsto (\tilde{q},\tilde{p})$, with $(q,p)$ and $(\tilde{q},\tilde{p})$ canonical coordinates, 
\begin{equation}
\begin{aligned}
% & \varphi^* \left\{q^i,H  \right\} 
& \left\{q^i,H  \right\} \circ \varphi
= \left\{q^i\circ \varphi, H\circ\varphi  \right\}
= \left\{\tilde{q}^i, K  \right\}
= \frac{\mathrm{d}\tilde{q}^i} {\mathrm{d} t},\\
% & \varphi^* \left\{p_i,H  \right\} 
& \left\{p_i,H  \right\} \circ \varphi
= \left\{p_i\circ \varphi, H\circ\varphi  \right\}
= \left\{\tilde{p}_i, K  \right\}
= -\frac{\mathrm{d}\tilde{p}_i} {\mathrm{d} t},
\end{aligned}
\end{equation}
with $K=H\circ \varphi$ the Hamiltonian in the new coordinates. The right-hand side equalities take the same form as the original Hamilton's equations \eqref{Hamilton_eqs_Poisson}.

Consider a vector field $X$ on a manifold $M$. A 1-form $\alpha$ on $M$ is called a \emph{first integral} of $X$ if $\alpha(X)=0$. Similarly, a function $f$ on $M$ such that $X(f)=0$ is also called a \emph{first integral} of $X$. Clearly, $\flat_\omega(X)$ is a first integral of $X$, for any $X$ on a symplectic manifold $(M,\omega)$. Any function $f$ is a first integral of its corresponding Hamiltonian vector field $X_f$. In particular, a first integral $f$ of $X_H$ is called a \emph{constant of the motion}. Indeed, if $f$ is a constant of the motion and $\sigma(t)$ is an integral curve of $X_H$, 
\begin{equation}
 	0 = X_H(f)(\sigma(t)) 
 	= \frac{\mathrm{d} } {\mathrm{d}t} f(\sigma(t )),
\end{equation} 
so $f$ is constant along the evolution of the system. In particular, the Hamiltonian function $H$ is conserved along the evolution (conservation of the energy). 

As it is well-known, there are multiple physical systems for which the energy is not conserved.  In order to describe some of these systems, one can consider non-autonomous Hamiltonians, that is, $H$ depending explicitly on time. These systems are naturally described in cosymplectic geometry\footnote{A cosymplectic manifold \cite{de_Leon_2017} is an $(2n+1)$-dimensional manifold equipped with a closed 1-form $\eta$ and a closed 2-form $\omega$ such that $\eta\wedge \omega^n\neq 0$. In $T^*Q\times \RR$ with Darboux coordinates $\left\{t,q^i, p_i  \right\}$, $\eta=\dd t$ and $\omega$ takes the form in equation \eqref{symplectic_form_Darboux}, 
which is basically the framework for autonomous Hamiltonian systems with the addition of a time dimension.
 }.
Other non-conservative systems are characterized by an external force (described by a 1-form) together with the Hamiltonian function. Part \ref{part_paper} of this thesis is devoted to mechanical systems with external forces. Other physical systems for which the energy is not conserved can be described in the framework of contact geometry, or in other geometric structures, rather than symplectic geometry. See references \cite{de_leon_contact_2019,de_leon_review_2021} and references therein for contact geometry and its applications in physics.

 In terms of Poisson brackets, one can write $X_H(f)=\left\{f,H  \right\}$. If $\left\{f,H  \right\}=0$, obviously $\left\{\dd f, \dd H  \right\}= \dd \left\{f,H  \right\}=0$. On the other hand, 
 \begin{equation}
 		\left\{\dd f, \dd H  \right\} 
 		= - \left\{\dd H, \dd f  \right\}
 		=  \flat_\omega([X_H,X_f])
 		= \contr{[X_H,\ X_f]}\omega
 		= \liedv{X_H} \contr{X_f} \omega - \contr{X_f}\liedv{X_H}\omega.
 \end{equation}
 Now, since $X_H$ is a Hamiltonian vector field, $\liedv{X_H}\omega=0$. Then
 \begin{equation}
 	\left\{\dd f, \dd H  \right\}
 	= \liedv{X_H} \contr{X_f} \omega
 	= \liedv{X_H} (\dd f)
 	= \dd (\liedv{X_H} f)
 	= \dd (X_H(f)),
 \end{equation}
 so $\left\{\dd f, \dd H  \right\} $ vanishes if and only if $X_H(f)$ is a constant function. Without loss of generality, this constant can be taken as zero (clearly, $f$ is a constant of the motion if and only if $f+K$ is a constant of the motion for any $K\in \RR$). In addition, since $\omega$ is non-degenerate, $ \contr{[X_H,\ X_f]}\omega$ vanishes if and only if $[X_H,X_f]$ does. Moreover, observe that $X_H(f)=-X_f(H)$.  This results can be summed up as follows.

 \begin{proposition}\label{proposition_Hamiltonian_constants_motion}
 The following statements are equivalent:
 \begin{enumerate}
 \item $f$ is a constant of the motion,
 \item $\left\{f,H  \right\}=0$,
 \item $[X_H,X_f]=0$,
 \item $H$ is a first integral of $X_f$.
 \end{enumerate}
 
 \end{proposition}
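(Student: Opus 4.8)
The plan is to establish the chain of equivalences (i)~$\Leftrightarrow$~(ii)~$\Leftrightarrow$~(iii)~$\Leftrightarrow$~(iv) by chaining together the identities already assembled in the paragraphs immediately preceding the statement, so the proof is essentially a matter of organizing those computations. First I would prove (i)~$\Leftrightarrow$~(ii): by definition $f$ is a constant of the motion iff $X_H(f)=0$ along integral curves of $X_H$, and since $X_H(f)=\liedv{X_H}f=\{f,H\}$ (using the relation $\{f,g\}=X_g(f)=-X_f(g)$ established in the symplectic-geometry section), this is immediate. For the direction that matters---deducing a global statement from the pointwise one---I would note that $X_H(f)$ is a single smooth function on $M$, so it vanishes identically iff it vanishes along every integral curve, which covers $M$.

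Next I would prove (ii)~$\Leftrightarrow$~(iii). The key computation is the one displayed just before the proposition:
\begin{equation}
\{\dd f,\dd H\} = \contr{[X_H,X_f]}\omega = \liedv{X_H}\contr{X_f}\omega - \contr{X_f}\liedv{X_H}\omega,
\end{equation}
and since $X_H$ is Hamiltonian we have $\liedv{X_H}\omega=0$, so $\{\dd f,\dd H\}=\liedv{X_H}(\dd f)=\dd(X_H(f))=\dd\{f,H\}$. Because $\omega$ is non-degenerate, $\flat_\omega$ is injective, hence $\contr{[X_H,X_f]}\omega=0$ iff $[X_H,X_f]=0$; and $\dd\{f,H\}=0$ iff $\{f,H\}$ is locally constant. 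To close the gap between \enquote{locally constant} and \enquote{equal to zero} I would invoke the observation (also made in the preceding paragraph) that $f$ is a constant of the motion iff $f+K$ is, for any $K\in\RR$, together with (i)~$\Leftrightarrow$~(ii): thus without loss of generality the constant $\{f,H\}$ may be normalized to $0$, and on the connected components where one wants the sharp statement one simply uses that $\{f,H\}=X_H(f)$ is exactly the obstruction to (i). Concretely, (iii) $\Rightarrow$ $\dd\{f,H\}=0$ $\Rightarrow$ $X_H(f)$ constant $\Rightarrow$ (after the harmless additive normalization) (i) $\Rightarrow$ (ii), and conversely (ii) $\Rightarrow$ $\{f,H\}=0$ $\Rightarrow$ $\dd\{f,H\}=0$ $\Rightarrow$ $\contr{[X_H,X_f]}\omega=0$ $\Rightarrow$ (iii).

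Finally (ii)~$\Leftrightarrow$~(iv) is essentially a restatement: from the antisymmetry of the Poisson bracket, $\{f,H\}=-\{H,f\}$, and $\{H,f\}=X_f(H)$, so $X_H(f)=-X_f(H)$; hence $H$ is a first integral of $X_f$ (i.e.\ $X_f(H)=0$) iff $\{f,H\}=0$. I would present this last equivalence as a one-line consequence of the identity $X_H(f)=-X_f(H)$ noted just above the proposition.

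The main obstacle---really the only subtle point---is the passage between the \emph{pointwise/local} conditions ($\dd\{f,H\}=0$, or $X_H(f)$ constant along curves) and the \emph{global} condition that defines a constant of the motion. The clean way to handle it is to route everything through (i)~$\Leftrightarrow$~(ii) first, so that \enquote{$\{f,H\}=0$} becomes the canonical representative of the equivalence class, and then to remark explicitly (as the surrounding text already does) that an additive constant is immaterial; one should also be mildly careful that $M$ need not be connected, but since $X_H(f)$ is a globally defined function the argument goes through componentwise without change. Everything else is a direct substitution of the Cartan-calculus identities from \eqref{Cartan_properties} and the Poisson-bracket identities from the symplectic section.
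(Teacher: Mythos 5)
Your proof follows essentially the same route as the paper: (i)~$\Leftrightarrow$~(ii) via $X_H(f)=\{f,H\}$, (ii)~$\Leftrightarrow$~(iv) via $X_H(f)=-X_f(H)$, and (ii)~$\Leftrightarrow$~(iii) via the computation $\contr{[X_H,X_f]}\omega=\{\dd f,\dd H\}=\dd(X_H(f))$ combined with the non-degeneracy of $\omega$. As a reconstruction of the paper's argument it is faithful, and you are right to isolate the local-versus-global passage as the one delicate point.

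However, the device you use to close that gap does not work. Replacing $f$ by $f+K$ leaves $\{f,H\}$ unchanged, so the ``additive normalization'' cannot upgrade the conclusion ``$\{f,H\}$ is locally constant'' to ``$\{f,H\}=0$''. What the computation actually establishes is that (iii) is equivalent to $\dd\{f,H\}=0$, which is strictly weaker than (ii): on $T^*\RR$ with $\omega=\dd q\wedge\dd p$, take $H=p$ and $f=q$; then $X_H=\partial/\partial q$ and $X_f=-\partial/\partial p$, so $[X_H,X_f]=0$, yet $\{f,H\}=1$ and $f$ is manifestly not conserved. Hence (iii)~$\Rightarrow$~(ii) fails as stated, and the correct form of the equivalence replaces (iii) by ``$\{f,H\}$ is locally constant'' (or adds the hypothesis that $\{f,H\}$ vanishes at one point of each connected component). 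To be fair, this defect is inherited verbatim from the paper's own discussion --- the parenthetical about $f+K$ there is equally ineffective --- so you have reproduced the intended argument exactly; but your proposed repair (routing through (i)~$\Leftrightarrow$~(ii) first and then normalizing the constant) does not actually close the gap.
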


\begin{example}[Conservation of momentum]
Consider a Hamiltonian system $(M,\omega,H)$ with canonical coordinates $(q,p)$. Each momentum $p_i$ is itself a function on $M$. By equation \eqref{vector_field_Hamiltonian}, 
\begin{equation}
	X_H(p_i)=-\frac{\partial H} {\partial q^i},
\end{equation}
so $p_i$ is a constant of the motion if and only if ${\partial H}/{\partial q^i}=0$, that is, if and only if $H$ is invariant under translations along the $q^i$-direction.
\end{example}

\begin{example}[Conservation of angular momentum]
Consider a Hamiltonian system $(T^*(\RR^n), \omega, H)$, with Hamiltonian function of the form
\begin{equation}
	H=\frac{\vec{p}^2}{2m}+V(\norm{\vec{x}}),
\end{equation}
with $\vec{x}=(x^1,\ldots, x^n)$ and $\vec{p}=(p_1,\ldots,p_n)$ the canonical coordinates on $T^*(\RR^n)\approxeq \RR^{2n}$. Then
% \begin{equation}
% 	\dd H = \frac{p_i}{m} \dd p_i+ \frac{V'}{\norm{\vec{x}}} x^i \dd x^i, 
% \end{equation}
% and
\begin{equation}
	X_H = \frac{p_i}{m} \frac{\partial  } {\partial x^i} - \frac{V'(\norm{\vec{x}})}{\norm{\vec{x}}} x^i\frac{\partial  } {\partial p_i}. \label{Hamiltonian_V_r}
\end{equation}
Let $f=x^1p_2-x^2p_1$. Then
\begin{equation}
	X_H(f)=\frac{p_1}{m} p_2 - \frac{p_2}{m} p_1
	- \frac{V'(\norm{\vec{x}})}{\norm{\vec{x}}}x^1(-x^2)
	- \frac{V'(\norm{\vec{x}})}{\norm{\vec{x}}}x^2 x^1 = 0,
\end{equation}
so $f$ is a constant of the motion. The Hamiltonian vector field associated to $f$ is given by
\begin{equation}
	X_f =  -x^2 \frac{\partial  } {\partial x^1} +x^1 \frac{\partial  } {\partial x^2}  - p_2 \frac{\partial  } {\partial p_1} + p_1 \frac{\partial } {\partial p_2}.
\end{equation}
The reader can easily check that each of the statements in Proposition \ref{proposition_Hamiltonian_constants_motion} is verified. Analogously, one can show that $L_i=\varepsilon_{ijk} x^j p_k$, for $i=1,\ldots,n$, are constants of the motion, where $\varepsilon_{ijk}$ is the totally antisymmetric Levi-Civita symbol. In particular, for a mechanical system on $\RR^3$ with a Hamiltonian function of the form \eqref{Hamiltonian_V_r}, the angular momentum $\vec{L}=\vec{x}\times \vec{p}$ is conserved.

%  Consider the vector field $X=\partial/\partial x^1-\partial/\partial x^2$. Clearly,
% \begin{equation}
% 	X(H) = X(V) = V'(\norm{\vec{x}}) \left( \frac{\partial \abs{x}} {\partial x^1} -\frac{\partial \abs{x}} {\partial x^2} \right)
% \end{equation}
\end{example}

\section{The geometry of tangent bundles} \label{section_tangent}
As it has been explained in Section \ref{section_Hamiltonian}, the Hamiltonian description of a mechanical system is settled in the cotangent bundle of its configuration space. Similarly, the Lagrangian formalism is settled in the tangent bundle of its configuration space. Unlike the cotangent bundle, the tangent bundle does not have a canonical symplectic structure. However, as it will be explained in Section \ref{section_Lagrangian}, when the Lagrangian function $L$ is ``well-behaving'', the symplectic structure on $T^*Q$ induces a symplectic structure on $TQ$ that depends on $L$.

Before presenting the Lagrangian formalism in its intrinsic form, some further geometric structures need to be introduced. See references \cite{leon_methods_1989,yano_almost_1967,yano_tangent_1973,Crampin_1983,munoz-lecanda_miguel_c_geometrisistemas_2007,godbillon_geometrie_1969}.

Let $Q$ be an $n$-dimensional manifold. If $f:Q\to \RR$ is a function on $Q$, its \emph{vertical lift} is a function
\begin{equation}
	f^v=f\circ \tau_Q:TQ\to \RR,
\end{equation}
on $TQ$, with $\tau_Q:TQ\to Q$ the canonical projection. If a point $P\in TQ$ has induced coordinates $(q,\dot{q})=(q^1,\ldots,q^n,\dot{q}^1,\ldots,\dot{q}^n)$, then
\begin{equation}
	f^v(P)=f^v(q,\dot{q})=f\circ \tau_Q (q,\dot{q})=f(q).
\end{equation}
If $\alpha$ is a 1-form on $Q$, it can be naturally regarded as a function on $TQ$, denoted by $\imath \alpha$. Locally, if $Q$ has coordinates $(q^i)$, $TQ$ has induced coordinates $(q^i,\dot{q}^i)$ and 
\begin{equation}
	\alpha=\alpha_i\dd q^i,
\end{equation}
then
\begin{equation}
	\imath \alpha = \alpha_i \dot{q}^i.
\end{equation}
A \emph{vertical vector field} $\tilde{X}$ is a vector field on $TQ$ such that
\begin{equation}
	\tilde{X}(f^v)=0
\end{equation}
for every function $f$ on $Q$. Locally,
 if
\begin{equation}
	\tilde{X}=X^i \frac{\partial  } {\partial q^i} + Y^i \frac{\partial  } {\partial \dot{q}^i},
\end{equation}
then
\begin{equation}
	\tilde{X}(f^v)
	% =\tilde{X}(f\circ \tau_Q) 
	% =\liedv{\tilde{X}} (\tau_Q^* f)
	=X^i \frac{\partial f} {\partial q^i},
\end{equation}
so
$\tilde{X}$ is a vertical vector field if and only if it is of the form
\begin{equation}
	\tilde{X}=\tilde{X}^i \frac{\partial  } {\partial \dot{q}^i}.
\end{equation}
If $X$ is a vector field on $Q$, its \emph{vertical lift} is the unique vector field $X^v$ on $TQ$ such that
\begin{equation}
	X^v(\imath\alpha) = (\alpha(X))^v
\end{equation}
for every 1-form $\alpha$ on $Q$. Clearly, every vertical lift is a vertical vector field by construction. Locally, if 
\begin{equation}
	X = X^i \frac{\partial  } {\partial q^i},
\end{equation}
then its vertical lift is given by
\begin{equation}
	X^v=X^i \frac{\partial  } {\partial \dot{q}^i}.
\end{equation}
It can be easily proven that
\begin{equation}
	[X^v,Y^v]=0
\end{equation}
holds for every pair of vector fields $X$ and $Y$ on $Q$.

If $f$ is a function on $Q$, its \emph{complete lift} is a function $f^c$ on $TQ$ defined by
\begin{equation}
	f^c=\iota(\dd f).
\end{equation}
Locally,
\begin{equation}
	f^c = \dot{q}^i \frac{\partial f } {\partial q^i}.
\end{equation}
If $X$ is a vector field on $Q$, its \emph{complete lift} is the unique vector $X^c$ on $TQ$ such that
\begin{equation}
	X^c(f^c)=(X(f))^c
\end{equation}
for every function $f$ on $Q$. Locally, if $X$ is given by
\begin{equation}
	X=X^i \frac{\partial  } {\partial q^i},
\end{equation}
then its complete lift is
\begin{equation}
	X^c=X^i \frac{\partial  } {\partial q^i} + \dot{q}^j \frac{\partial X^i} {\partial q^j} \frac{\partial  } {\partial \dot{q}^i}.
\end{equation}
Two relevant relations that hold are
\begin{equation}
	[X^v,Y^c]=[X,Y]^v,
\end{equation}
and
\begin{equation}
	[X^c,Y^c]=[X,Y]^c
\end{equation}
for every pair of vector fields $X$ and $Y$ on $Q$.

Let $M$ be a $2m$-dimensional manifold. An \emph{almost tangent structure} $S$ on $M$ is a tensor field of type $(1,1)$ such that
\begin{enumerate}
\item $S$ has constant rank $n$,
\item $S^2=0$.
\end{enumerate}
Then $M$ is called an \emph{almost tangent manifold}. For each $x\in M$,
\begin{equation}
	S_x:T_xM\to T_xM
\end{equation}
is a linear endomorphism. 

One can define a tensor field $S$ of type $(1,1)$ on $TQ$ such that
\begin{equation}
	S_y(\tilde{X})=(\tau_{Q*}\tilde{X})^v
\end{equation}
for each $y\in TQ$ and each $\tilde{X}\in T_y(TQ)$. Locally $S$ is given by
\begin{equation}
	S \left( \frac{\partial  } {\partial q^i}  \right) = \frac{\partial  } {\partial \dot q^i}, \qquad
	S \left( \frac{\partial  } {\partial \dot q^i}  \right) =0,
\end{equation}
or equivalently,
\begin{equation}
	S = \frac{\partial  } {\partial \dot{q}^i}\otimes \dd q^i. 
\end{equation}
Clearly, $S$ has constant rank $n$ and $S^2=0$, so it is an almost tangent structure on $TQ$ called the \emph{canonical almost tangent structure}. It is also known as the \emph{vertical endomorphism}. Observe that a vector field $\tilde{X}$ on $TQ$ is a vertical vector field if and only if
\begin{equation}
	S(\tilde{X})=0.
\end{equation}
In particular,
\begin{equation}
	S(X^v)=0
\end{equation}
for every $X$ on $Q$. Moreover,
\begin{equation}
	S(X^c)=X^v.
\end{equation}
The \emph{adjoint operator} $S^*$ of the vertical endomorphism $S$ on $TQ$ is defined by
\begin{equation}
\begin{aligned}
	&S^*(f)=f,\\
	&(S^*\omega)(X_1,\ldots,X_p) =\omega(S(X_1),\ldots S(X_p)),
\end{aligned}
\end{equation}
for every function $f$, every $p$-form $\omega$ and every vector fields $X_1,\ldots,X_p$ on $TQ$. Locally,
\begin{equation}
	S^*(\dd q^i)=0,
\end{equation}
and
\begin{equation}
	S^*(\dd \dot{q}^i) = \dd q^i,
\end{equation}
with $(q^i,\dot{q}^i)$ the induced coordinates on $TQ$. It satisfies
\begin{equation}
	\contr{X}S^* = S^* \circ \contr{SX}
\end{equation}
for any vector field $X$ on $TQ$.

% \ojo{vertical endomorphism}
% \ojo{SODE}
% \ojo{Tangent map}
% \ojo{Liouville vector field}
Let $q\in Q$ and $(q,\dot{q})\in TQ$. Consider the vertical lift of $\dot{q}\in T_xQ$, namely
\begin{equation}
	\dot{q}=\dot{q}^i \frac{\partial  } {\partial q^i} \mapsto \dot{q}^v= \dot{q}^i \frac{\partial  } {\partial \dot{q}^i}.
\end{equation}
This defines a vector field on $TQ$:
\begin{equation}
\begin{aligned}
	\Delta:TQ&\to TTQ\\
	(q,\dot{q})&\mapsto ((q,\dot{q}),(q,\dot{q})^v),
\end{aligned}
\end{equation}
called the \emph{Liouville vector field}, locally given by
\begin{equation}
	\Delta = \dot{q}^i \frac{\partial  } {\partial \dot{q}^i}.
\end{equation}
A \emph{second order differential equation} (in what follows abbreviated as \emph{SODE}; also known as \emph{semispray}) on $Q$ is a vector field on $TQ$ that is a section of both $\tau_{TQ}:TTQ\to TQ$ and $T{\tau_Q}:TTQ\to TQ$. Locally,
 % a SODE $\xi$ is given by
if
\begin{equation}
	\xi = \tilde{\xi}^i \frac{\partial  } {\partial q^i}+\xi^i \frac{\partial  } {\partial \dot q^i},
\end{equation}
then
\begin{equation}
	T{\tau_Q}(\xi)
	= \tilde{\xi}^i \frac{\partial  } {\partial q^i},
\end{equation}
since $\tau_Q(q,\dot{q})=q$, that is,
\begin{equation}
	T{\tau_Q}(\xi(q^i,\dot{q}^i))
	% = \tilde{\xi}^i \frac{\partial  } {\partial q^i}
	= T{\tau_Q}\left(q^i,\dot{q}^i,\tilde{\xi}^i,\xi^i\right)
	= \left(q^i, \tilde{\xi}^i\right),
\end{equation}
but
\begin{equation}
	\tau_{TQ}(\xi(q^i,\dot{q}^i))=(q^i,\dot{q}^i),
\end{equation}
so a SODE $\xi$ is locally given by
\begin{equation}
	\xi = \dot{q}^i \frac{\partial  } {\partial q^i}+\xi^i \frac{\partial  } {\partial \dot q^i}.
	\label{SODE_local}
\end{equation}
From the local expressions, it is clear that a vector field $\xi$ on $TQ$ is a SODE if and only if
\begin{equation}
	S(\xi) = \Delta, \label{SODE_Liouville}
\end{equation}
with $S$ the vertical endomorphism on $TQ$.

Let $\gamma:I\subset \RR\to Q$ be a curve on $Q$, and let $\dot{\gamma}(t_0)$ be the tangent vector of the curve at $\gamma(t_0)$. Then the \emph{canonical lift} of $\gamma$ to $TQ$ is the curve
\begin{equation}
\begin{aligned}
	\tilde{\gamma} : I&\to TQ\\
	t &\mapsto \left( \gamma(t),\dot{\gamma}(t)  \right).
\end{aligned}
\end{equation}
Clearly, $\tau_Q\circ\tilde{\gamma}=\gamma$.

To avoid ambiguity in the following paragraph, let $(q^i,v^i)$ be the bundle coordinates on $TQ$ (which are denoted as $(q^i,\dot{q}^i)$ along the rest of the text).
Consider a SODE $\xi$ on $TQ$ locally given by %equation \eqref{SODE_local}.
\begin{equation}
	\xi = v^i \frac{\partial  } {\partial q^i}+\xi^i \frac{\partial  } {\partial v^i}.
\end{equation}
 A curve $\sigma:I\to TQ$ on $TQ$, with $\sigma(t)=\left( q^i(t),v^i(t)  \right)$, is an integral curve of $\xi$ if and only if
\begin{equation}
	\dot{\sigma}(t) = \xi(\sigma(t)),
\end{equation}
that is,
\begin{equation}
% \left\{
\begin{aligned}
	&\frac{\mathrm{d}q^i} {\mathrm{d} t} = (v^i\circ \sigma)(t)=v^i(t),\\
	&\frac{\mathrm{d}v^i} {\mathrm{d} t} = (\xi^i \circ \sigma)(t)
	= \xi^i \left( q^j(t),v^j(t)  \right),
\end{aligned}
% \right.
\end{equation}
% so $\sigma(t)=\left( q^i(t),\dot{q}^i(t)  \right)$, and then
so $\sigma(t)=\left( q^i(t), \dd q^i(t)/\dd t  \right)$, and then
 \begin{equation}
 	\frac{\mathrm{d}^2 q^i } {\mathrm{d}t^2} = \xi^i \left( q^i, \frac{\mathrm{d}q^i} {\mathrm{d}t}  \right),\quad 1\leq i\leq m =\dim M. \label{equation_SODE}
 \end{equation}
 A \emph{solution} of a SODE $\xi$ is a curve $\gamma:I\to Q$ on $Q$ such that its canonical lift to $TQ$ is an integral curve of $\xi$. As a matter of fact, if $\gamma(t)=(q^i(t))$, then $\sigma(t)=\tilde{\gamma}(t)=(q^i(t),\dot{q}^i(t))$, so $\gamma$ is given by the system of ordinary differential equations \eqref{equation_SODE}. This motivates calling these vector fields second order differential equations.

% \ojo{correct}
% The \emph{solution} of a SODE $\xi$ on $M$ is an integral curve $\sigma$ of $\xi$, that is,
% \begin{equation}
%  	\ddot{\sigma}(t)=\xi(\dot{\sigma}(t)).
%  \end{equation} 
%  In local coordinates, if $\sigma(t)=(q^i(t))$, then $\sigma$ is a solution of the SODE $\xi$ if and only if it satisfies the following system of second order differential equations:
%  \begin{equation}
%  	\frac{\mathrm{d}^2 q^i } {\mathrm{d}t^2} = \xi^i \left( q^i, \frac{\mathrm{d}q^i} {\mathrm{d}t}  \right),\quad 1\leq i\leq m =\dim M.
%  \end{equation}
%  This motivates calling these vector fields second order differential equations.

% \ojo{Semibasic 1-forms (p. 192 \cite{leon_methods_1989})}

A 1-form $\alpha$ on $TQ$ is called \emph{semibasic} if
\begin{equation}
	\alpha(\tilde{X}) = 0
\end{equation}
for any vertical vector field $\tilde{X}$ on $TQ$. In particular, if a 1-form $\alpha$ on $TQ$ can be written as
\begin{equation}
	\alpha=S^*(\dd f)
\end{equation}
for some function $f$ on $TQ$, then is semibasic.
Given a semibasic 1-form $\alpha$ on $TQ$, one can define the following morphism of fibre bundles %\cite{MdL_Rodrigues_89,godbillon_69}:
% \begin{center}
% \begin{tikzcd}
% D_\alpha: TQ \arrow[rr] \arrow[rd, "\tau_q"'] &   & T^*Q \arrow[ld, "\pi_Q"] \\
%                                                           & Q &                         
% \end{tikzcd},
% \end{center}
\begin{equation}
\begin{aligned}
    &D_\alpha: TQ\to T^*Q,\\
    &\left\langle D_\alpha(v_q),w_q\right\rangle=\alpha(v_q)(u_{w_q}),
\end{aligned}
\end{equation}
for every $v_q,w_q\in T_qQ,\ u_{w_q}\in T_{w_q}(TQ)$, with $T\tau_Q(u_{w_q})=w_q$. In local coordinates, if
\begin{equation}
\alpha=\alpha_i(q,\dot q)\dd q^i,
\end{equation}
then
\begin{equation}
    D_\alpha(q^i,\dot q^i)=\left(q^i,\alpha_i(q^i,\dot{q}^i)\right).
\end{equation}
% Here $(q^i, \dot{q}^i)$ are bundle coordinates in $TQ$.
Conversely, given a morphism of fibre bundles
\begin{center}
\begin{tikzcd}
D: TQ \arrow[rr] \arrow[rd, "\tau_q"'] &   & T^*Q \arrow[ld, "\pi_Q"] \\
                                                          & Q &                         
\end{tikzcd},
\end{center}
 a semibasic 1-form $\alpha$ on $TQ$ can be defined by 
\begin{equation}
    \alpha_D(v_q)(u_{v_q})=\left\langle D(v_q),T\tau_Q(u_{v_q})\right\rangle,
\end{equation}
where $v_q\in T_qQ,\ u_{v_q}\in T_{v_q}(TQ)$.
If locally $D$ is given by
\begin{equation}
    D(q^i,\dot q^i)=(q^i,D_i(q,\dot q)),
\end{equation}
then
\begin{equation}
    \alpha_D=D_i(q,\dot q)\dd q^i.
\end{equation}
So there exists a one-to-one correspondence between semibasic 1-forms and fibred morphisms from $TQ$ to $T^*Q$. If $\theta$ is the canonical 1-form on $T^*Q$ and $D:TQ\to T^*Q$ a morphism of fibre bundles, then 
\begin{equation}
	\alpha_D= D^*\theta.
\end{equation}
% As it is explained in the next section, the symplectic structure on $TQ$ induced by the Lagrangian is a semibasic 1-form. Physically, semibasic 1-forms also represent external forces (see Part \ref{part_paper}). The most important morphism of fibre bundles in geometric mechanics is the Legendre transform.

\section{Lagrangian mechanics}\label{section_Lagrangian}
This section covers a geometric formulation of Lagrangian mechanics developed by J. Klein \cite{klein_espaces_1962,klein_les_1963,klein_operateurs_1963}. See also references \cite{leon_methods_1989,abraham_foundations_2008,munoz-lecanda_miguel_c_geometrisistemas_2007,Crampin_1983,godbillon_geometrie_1969}.

Consider a mechanical system with configuration space $Q$, with $Q$ an $n$-dimensional manifold. Then its \emph{space of velocities} is the tangent bundle $TQ$, and its \emph{Lagrangian function} is a differentiable function $L:TQ\to \RR$ on $TQ$. The \emph{Poincaré-Cartan 1-form} is given by
\begin{equation}
	\alpha_L=S^*(\dd L),
\end{equation}
where $S^*$ is the adjoint operator of the vertical endomorphism. The \emph{Poincaré-Cartan 2-form} is 
\begin{equation}
	\omega_L = -\dd\alpha_L.
\end{equation}
Locally, if $TQ$ has induced coordinates $(q^i,\dot{q}^i)$, then
\begin{equation}
	\dd L = \frac{\partial L} {\partial q^i} \dd q^i
	 + \frac{\partial L} {\partial \dot q^i} \dd\dot q^i,
\end{equation}
so
\begin{equation}
	\alpha_L = \frac{\partial L} {\partial \dot q^i} \dd q^i,
\end{equation}
and hence
\begin{equation}
	\omega_L = 
	\frac{\partial^2 L  } {\partial q^j \partial \dot{q}^i } \dd q^i \wedge \dd q^j
	+\frac{\partial^2 L  } {\partial \dot{q}^i \partial \dot{q}^j } \dd q^i \wedge \dd \dot q^j. \label{symplectic_form_Lagrangian_coords}
\end{equation}
One can write
\begin{equation}
	\alpha_L=p_i\dd q^i,
\end{equation}
where $p_i=\partial L/\partial \dot{q}^i$ is the canonical momenta. Notice that the Poincaré-Cartan 1-form is formally identical to the canonical 1-form $\theta$ on $T^*Q$ given by equation \eqref{canonical_1_form}. In fact, as it will be seen latter,
$\alpha_L$ is induced by $\theta$ via the Legendre transform.

The $2n$-form $\omega_L^n$ is locally given by
\begin{equation}
\begin{aligned}
	\omega_L^n & =  \frac{\partial^2 L  } {\partial \dot{q}^{i_1} \partial \dot{q}^{j_1} } \dd q^{i_1} \wedge \dd \dot q^{j_1}\wedge  \frac{\partial^2 L  } {\partial \dot{q}^{i_2} \partial \dot{q}^{j_2} } \dd q^{i_2} \wedge \dd \dot q^{j_2}
	\wedge \cdots \wedge  \frac{\partial^2 L  } {\partial \dot{q}^{i_n} \partial \dot{q}^{j_n} } \dd q^{i_n} \wedge \dd \dot q^{j_n}\\
	% & =  \prod_{k=1}^n \frac{\partial^2 L  } {\partial \dot{q}^{i_k} \partial \dot{q}^{j_k} } \dd q^{i_1} \wedge \dd \dot q^{j_1} \wedge \cdots \wedge   \dd q^{i_n} \wedge \dd \dot q^{j_n}\\
	% & = \frac{1}{(2m)!} \sum_\sigma \mathrm{sgn}(\sigma) \prod_{k=1}^n \frac{\partial^2 L  } {\partial \dot{q}^{\sigma(i_k)} \partial \dot{q}^{\sigma(j_k)} } \dd q^{\sigma(i_1)} \otimes \dd \dot q^{\sigma(j_1)} \otimes \cdots \otimes   \dd q^{\sigma(i_n)} \otimes \dd \dot q^{\sigma(j_n)}\\
	&= \pm \det \left( \frac{\partial^2 L  } {\partial \dot{q}^i \partial \dot{q}^j  }  \right) 
	\dd q^1\wedge \cdots \wedge \dd q^n\wedge \dd \dot{q}^1
	\wedge \cdots \wedge \dd \dot{q}^n.
\end{aligned}
\end{equation}
If $\omega_L$ is symplectic, obviously $\omega_L^n$ is a volume form. On the other hand, if the Hessian matrix
\begin{equation}
	(W_{ij}) = \left( \frac{\partial^2 L  } {\partial \dot{q}^i \partial \dot{q}^j  }  \right) 
\end{equation}
is non-singular, then $\omega_L$ is non-degenerate. Thus $\omega_L$ is a symplectic	form if and only if the Hessian matrix $(W_{ij})$ is invertible for any coordinate system $(q^i,\dot{q}^i)$. If $\omega_L$ is a symplectic form on $TQ$, the Lagrangian function $L$ on $TQ$ is said to be \emph{regular}. Otherwise $L$ is said to be \emph{singular}, \emph{irregular} or \emph{degenerate}.
% since the term $\dd q^{i_1}\wedge \dd q^{j_1}\wedge \cdots$ cancels out by the antisymmetry of the wedge product.

The \emph{energy} associated to the Lagrangian $L$ on $TQ$ is the function $E_L:TQ\to \RR$ given by
\begin{equation}
	E_L=\Delta(L)-L,
\end{equation}
with $\Delta$ the Liouville vector field. Suppose that $L$ is regular, then
\begin{equation}
	\contr{\xi_L} \omega_L =\dd E_L \label{intrinsic_Euler_Lagrange}
\end{equation}
is the intrinsic expression of the Euler-Lagrange equations. In other words, the trajectories of the Lagrangian system are given by the integral curves of $\xi_L$. 
The vector field $\xi_L$ on $TQ$ is called the \emph{Euler-Lagrange vector field}.
Since $L$ is regular, $\omega_L$ is symplectic and equation \eqref{intrinsic_Euler_Lagrange} admits a unique solution $\xi_L$. Moreover, this vector field $\xi_L$ on $TQ$ is a SODE \cite{leon_methods_1989}.
%  Indeed,
% \begin{equation}\begin{aligned}
% 	\contr{\Delta} \omega_L &= -\contr{\Delta} \dd(S^*(\dd L))
% \end{aligned}\end{equation}
% \ojo{Complete proof}
Therefore it is locally given by
\begin{equation}
	\xi_L=\dot{q}^i \frac{\partial  } {\partial q^i} + \xi^i_L \frac{\partial  } {\partial \dot{q}^i},
\end{equation}
which combined with equation \eqref{symplectic_form_Lagrangian_coords} yields
\begin{equation}
	\contr{\xi_L} \omega_L
	= \left( \frac{\partial^2 L  } {\partial q^j \partial \dot{q}^i  } \dot{q}^i
	-\frac{\partial^2 L  } {\partial q^i \partial \dot{q}^j  } \dot{q}^i
	-\frac{\partial^2 L  } {\partial \dot{q}^i \partial \dot{q}^j  } \xi^i
	 \right)\dd q^j
	 +\frac{\partial^2 L  } {\partial \dot{q}^i \partial \dot{q}^j  } \dot{q}^i \dd\dot{q}^j.
\end{equation}
On the other hand,
\begin{equation}
	E_L = \dot{q}^i \frac{\partial L} {\partial q^i} - L,
\end{equation}
so
\begin{equation}
	\dd E_L
	= \left( \frac{\partial^2 L  } {\partial q^j \dot{q}^i }\dot{q}^i 
	- \frac{\partial L} {\partial q^j}  \right) \dd q^j
	+\frac{\partial^2 L  } {\partial \dot{q}^i \dot{q}^j} \dot{q}^i \dd \dot{q}^j. 
\end{equation}
Then equation \eqref{intrinsic_Euler_Lagrange} holds if and only if
\begin{equation}
	\frac{\partial^2 L  } {\partial q^i \partial \dot{q}^j  } \dot{q}^i
	+\frac{\partial^2 L  } {\partial \dot{q}^i \partial \dot{q}^j  } \xi^i
	=\frac{\partial L} {\partial q^j},\quad 1\leq j\leq n=\dim Q.
\end{equation}
Let $\sigma:I\subset \RR\to Q$ be a solution of $\xi_L$, with $\sigma(t)=(q^i(t))$ . Then 
\begin{equation}
	\frac{\partial^2 L  } {\partial q^i \partial \dot{q}^j  } \dot{q}^i
	+\frac{\partial^2 L  } {\partial \dot{q}^i \partial \dot{q}^j  } \ddot{q}^i
	=\frac{\partial L} {\partial q^j}, \quad 1\leq j\leq m,
\end{equation}
that is,
\begin{equation}
	\frac{\mathrm{d} } {\mathrm{d}t} \left( \frac{\partial L} {\partial \dot{q}^i}  \right) - \frac{\partial L} {\partial q^i} = 0 ,\quad 1\leq i\leq n,
\end{equation}
which are the well-known Euler-Lagrange equations.

Observe that
\begin{equation}
	\xi_L (E_L) = \contr{\xi_L} \dd E_L = \contr{\xi_L} \contr{\xi_L} \omega_L = 0,
\end{equation}
in other words, the energy is conserved along the evolution of the Lagrangian system. As it was discussed in Section \ref{section_Hamiltonian}, non-conservative systems can be described by non-autonomous Lagrangians, external forces or non-symplectic geometries.

\begin{example}\label{example_Lagrangian}
Consider a Lagrangian system on $TQ$ with Lagrangian function of the usual form
\begin{equation}
	L = \frac{1}{2}g_{ij} \dot{q}^i \dot{q}^j-V(q),
\end{equation}
where $q=(q^1,\ldots, q^n)$ and $(g_{ij})$ a positive-definite symmetric matrix. Then
\begin{equation}
	\Delta(L) = \frac{1}{2}\Delta \left( g_{ij} \dot{q}^i \dot{q}^j  \right)
	 = g_{ij} \dot{q}^i \dot{q}^j,
\end{equation}
so the energy of the system is
\begin{equation}
	E_L = \frac{1}{2}g_{ij} \dot{q}^i \dot{q}^j+V(q).
\end{equation}
The Poincaré-Cartan 1-form is
\begin{equation}
	\alpha_L = g_{ij} \dot{q}^i \dd q^j,
\end{equation}
and the Poincaré-Cartan 2-form is
\begin{equation}
	\omega_L
	% = -g_{ij} \dd \dot{q}^i\wedge \dd q^j
	% = +g_{ij} \dd q^j\wedge \dd \dot{q}^i
	% =  g_{ji} \dd q^i\wedge \dd \dot{q}^j
	=  g_{ij} \dd q^i\wedge \dd \dot{q}^j.
\end{equation}
\end{example}

Consider a mechanical system with configuration space $Q$, space of velocities $TQ$ and phase space $T^*Q$.
The \emph{Legendre transform} is a mapping $\Leg: TQ\to T^*Q$ such that the diagram

\begin{center}
\begin{tikzcd}
TQ \arrow[rr, "\Leg"] \arrow[rd, "\tau_q"'] &   & T^*Q \arrow[ld, "\pi_Q"] \\
                                                          & Q &                      
\end{tikzcd}
\end{center}
commutes. Here $\tau_q$ and $\pi_Q$ are the canonical projections on $Q$.
Locally,
\begin{equation}
	\Leg: (q^i,\dot{q}^i) \mapsto (q^i,p_i),
\end{equation}
with $p_i=\partial L/\partial \dot{q}^i$. Moreover, 
\begin{equation}
	\Leg^* \theta =\alpha_L,
\end{equation}
where $\theta$ and $\alpha_L$ is the canonical 1-form on $T^*Q$ and the Poincaré-Cartan 1-form on $TQ$, respectively. Thus
\begin{equation}
	\Leg^* \omega =\omega_L.
\end{equation}
Clearly, $\omega_L$ is symplectic if and only if $\Leg$ is a symplectic map. Then the following result holds.

\begin{theorem} The following assertions are equivalent:
\begin{enumerate}
\item $\omega_L$ is a symplectic form on $TQ$,
\item $L$ is a regular Lagrangian,
\item $\Leg:TQ\to T^*Q$ is a local diffeomorphism.
\end{enumerate}
\end{theorem}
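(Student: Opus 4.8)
The plan is to prove the chain i) $\Leftrightarrow$ ii) $\Leftrightarrow$ iii) by combining the coordinate description of $\Leg$ with the non-degeneracy criterion for $\omega_L$ established just above the statement. Since $\omega_L=-\dd\alpha_L$ is exact, it is in particular closed, so ``$\omega_L$ is a symplectic form'' is the same as ``$\omega_L$ is non-degenerate''; and from the local expression \eqref{symplectic_form_Lagrangian_coords} one reads off that this happens precisely when the Hessian $(W_{ij})=\bigl(\partial^2 L/\partial\dot q^i\,\partial\dot q^j\bigr)$ is invertible at every point. Because ii) is, by definition, the assertion that $\omega_L$ is symplectic, this already yields i) $\Leftrightarrow$ ii); the only point worth spelling out is that $(W_{ij})$ transforms as a fibrewise $(0,2)$-tensor under a change of coordinates on $Q$, so its non-singularity does not depend on the chosen induced chart.

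For iii) I would work in induced coordinates $(q^i,\dot q^i)$ on $TQ$ and $(q^i,p_i)$ on $T^*Q$, in which $\Leg$ is the map $(q^i,\dot q^i)\mapsto (q^i,\partial L/\partial\dot q^i)$. Its Jacobian is the block lower-triangular matrix
\begin{equation}
	\dd\Leg = \begin{pmatrix} I_n & 0 \\ \partial^2 L/\partial q^j\partial\dot q^i & \partial^2 L/\partial\dot q^i\partial\dot q^j \end{pmatrix},
\end{equation}
whose determinant is $\det(W_{ij})$. Hence $\dd\Leg$ is a linear isomorphism at a given point of $TQ$ if and only if $(W_{ij})$ is invertible there, and by the inverse function theorem $\Leg$ is a local diffeomorphism on all of $TQ$ if and only if $\det(W_{ij})\neq 0$ everywhere. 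Combined with the first paragraph, this closes the equivalence.

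A coordinate-free alternative for i) $\Leftrightarrow$ iii) is available using the identity $\Leg^*\omega=\omega_L$ recorded above: as $\omega$ is non-degenerate on $T^*Q$, the pullback $\omega_L=\Leg^*\omega$ is non-degenerate at a point exactly when $\dd\Leg$ is injective there, hence --- the two manifolds having the same dimension --- an isomorphism there, which is once more the inverse-function-theorem condition for $\Leg$ to be a local diffeomorphism; closedness of $\omega_L$ is again automatic. I do not expect a real obstacle here: the only delicate points are checking that regularity is a chart-independent notion (the tensorial behaviour of $(W_{ij})$) and being careful that the inverse function theorem is first applied pointwise and then promoted to the global statement that $\Leg$ is a local diffeomorphism.
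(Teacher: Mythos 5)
Your proof is correct and follows the same route the paper sets up: the Hessian criterion for non-degeneracy of $\omega_L$ (which the paper derives from the local expression of $\omega_L$ and from $\omega_L^n$), the definition of regularity, and the coordinate form $\Leg:(q^i,\dot q^i)\mapsto(q^i,\partial L/\partial\dot q^i)$ together with $\Leg^*\omega=\omega_L$. The paper states the theorem without a detailed proof, so your block lower-triangular Jacobian computation and the inverse-function-theorem step simply make explicit what the text leaves implicit; both your coordinate argument and the coordinate-free alternative are sound.
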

In general, $L$ being regular does not imply $\Leg$ to be a global diffeomorphism. If $\Leg$ is globally a diffeomorphism, then $L$ is said to be \emph{hyperregular}.

In what follows, let us assume $\Leg$ to be hyperregular. Consider the vector field $\bar{\xi}$ on $T^*Q$ given by
\begin{equation}
	\bar{\xi} 
	= \Leg_* \xi_L
	% = T \Leg \circ \xi_L \circ \Leg^{-1},
\end{equation}
with $\xi_L$ the Euler-Lagrange vector field on $TQ$.
Then
\begin{equation}
\begin{aligned}
	\contr{\bar{\xi}}\omega
	&= \contr{\bar{\xi}}\left(\left(\Leg^{-1}\right)^*\omega\right)
	= \contr{\left(\Leg^{-1}\right)_*\xi}\left(\left(\Leg^{-1}\right)^*\omega\right)
	= \left(\Leg^{-1}\right)^* \left(\contr{\xi_L} \omega_L\right)\\
	&= \left(\Leg^{-1}\right)^* \left(\dd E_L\right)
	= \dd\left(E_L\circ \Leg^{-1}\right).
\end{aligned}
\end{equation}
Therefore $\sigma$ is an integral curve of $\xi_L$ if and only if $\gamma=\Leg\circ \sigma$ is an integral curve of $\bar{\xi}$. In addition, 
\begin{equation}
	\pi_Q \circ \gamma =\tau_Q \circ \sigma.
\end{equation}
The vector field $\bar{\xi}$ is the Hamiltonian vector field for the Hamiltonian function $H=E_L\circ \Leg^{-1}$ on $T^*Q$.

\begin{example}
Consider the Lagrangian system from Example \ref{example_Lagrangian}. The corresponding Legendre transform is given by
\begin{equation}
	\Leg:\left( q^i,\dot{q}^i  \right)
	\mapsto
	\left( q^i, g_{ij}\dot{q}^j  \right),
\end{equation}
so its inverse is
\begin{equation}
	\Leg^{-1} \left( q^i,p_i  \right)
	\mapsto
	\left( q^i, g^{ij} p_j  \right),
\end{equation}
with $(g^{ij})$ the inverse matrix of $(g_{ij})$. Then
the Hamiltonian function of the system is given by
\begin{equation}
	H = \frac{1}{2}g_{ij} g^{ik} g^{jl} p_k p_l + V(q)
	  = \frac{1}{2} g^{kl} p_k p_l + V(q).
\end{equation}

\end{example}
\subsection{Symmetries and constants of the motion}

In this subsection the different types of symmetries in the Lagrangian formalism are discussed, as well as their corresponding constants of the motion. The main references are \cite{arnold_mathematical_1978,Crampin_1983,leon_methods_1989}. See also references \cite{prince_toward_1983,prince_complete_1985,roman-roy_summary_2020,sarlet_note_1983,sarlet_generalizations_1981,noether_invariant_1971,kosmann-schwarzbach_noether_2011,carinena_new_1989,sarlet_note_1983,noether_invariant_1971,kosmann-schwarzbach_noether_2011,lunev_analogue_1990,sarlet_generalizations_1981,neeman_impact_1999,leon_classification_1994,de_leon_symmetries_1995}. 

Let $L$ be a Lagrangian function on $TQ$, with $\alpha_L$ and $\omega_L$ the corresponding Poincaré-Cartan 1-form and 2-form, respectively. Let $\xi_L$ be the corresponding Euler-Lagrange vector field. Consider a function $F$ on $TQ$. Observe that
\begin{equation}
	\xi_L(F) 
	= \contr{\xi_L} \dd F  
	= \contr{\xi_L} \contr{X_F} \omega_L
	= - \contr{X_F} \contr{\xi_L}  \omega_L
	= - \contr{X_F} \dd E_L
	= - X_F(E_L),
\end{equation}
with $X_F$ the Hamiltonian vector field on $TQ$ associated to $F$, given by
\begin{equation}
	\contr{X_F} \omega_L = \dd F.
\end{equation}
Recalling the discussion about constants of the motion in Section \ref{section_Hamiltonian}, one can see that the following statements are equivalent:
\begin{enumerate}
	\item $F$ is a constant of the motion,
	\item $\xi_L(F) = 0$,
	\item $X_F(E_L) = 0$.
\end{enumerate}

 Consider a vector field $X$ on $Q$. Then
\begin{equation}
\begin{aligned}
	\left( \contr{\xi_L} \omega_L  \right) (X^c)
	&= \omega_L (\xi_L,X^c)
	= -\dd \alpha_L (\xi_L, X^c)
	= - \xi_L (\alpha_L(X^c)) 
		+ X^c (\alpha_L(\xi_L))
		+\alpha_L \left( [\xi_L, X^c]  \right).
\end{aligned}
\end{equation}
Now,
\begin{equation}
	\alpha_L(\xi_L) 
	= (S^*\dd L) (\xi_L)
	= \contr{\xi_L} (S^*\dd L)
	= \contr{S \xi_L} \dd L
	= \contr{\Delta} \dd L
	= \Delta(L),
\end{equation}
using equation~\eqref{SODE_Liouville} and the fact $\xi_L$ is a SODE. Similarly,
\begin{equation}
 	\alpha_L(X^c) 
 	= \contr{S X^c} \dd L 
 	= \contr{X^v} \dd L 
 	= X^v(L).
\end{equation}
In addition, $[\xi_L, X^c]$ is a vertical vector field
(which can be easily checked, for instance, using the local expressions), and hence $S[\xi_L, X^c]$ vanishes. Therefore
\begin{equation}
	\left( \contr{\xi_L} \omega_L  \right) (X^c)
	=-\xi_L (X^v(L)) + X^c(\Delta(L)).
\end{equation}
On the other hand, 
\begin{equation}
	\left( \contr{\xi_L} \omega_L  \right) (X^c)
	= \dd E_L(X^c)
	= X^c(E_L)
	= X^c (\Delta(L) - L),
\end{equation}
and thus
\begin{equation}
	\xi_L(X^v(L)) = X^c(L).
\end{equation}
In particular, the left-hand side vanishes if and only if the right-hand side does. In other words, one has the following result.
\begin{theorem}[Noether's theorem]
Let $X$ be a vector field on $Q$. Then $X^v(L)$ is a constant of the motion if and only if $X^c(L)$ vanishes.
\end{theorem}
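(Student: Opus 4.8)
The plan is to reduce the theorem to the identity
\begin{equation}
	\xi_L\big(X^v(L)\big) = X^c(L),
\end{equation}
which was obtained in the discussion preceding the statement, together with the characterization established just above: a function $F$ on $TQ$ is a constant of the motion if and only if $\xi_L(F)=0$. Granting both, setting $F = X^v(L)$ gives at once that $X^v(L)$ is a constant of the motion if and only if $\xi_L(X^v(L)) = 0$, i.e.\ if and only if $X^c(L) = 0$. So the only real content is the identity above, and I would present the proof by deriving it cleanly.

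To derive it, I would evaluate $\contr{\xi_L}\omega_L$ on the complete lift $X^c$ in two ways. On the one hand $\contr{\xi_L}\omega_L = \dd E_L$ with $E_L = \Delta(L) - L$, so $(\contr{\xi_L}\omega_L)(X^c) = X^c(\Delta(L)) - X^c(L)$. On the other hand, using $\omega_L = -\dd\alpha_L$ and the intrinsic formula for $\dd$ of a $1$-form,
\begin{equation}
	\omega_L(\xi_L, X^c) = -\xi_L\big(\alpha_L(X^c)\big) + X^c\big(\alpha_L(\xi_L)\big) + \alpha_L\big([\xi_L, X^c]\big).
\end{equation}
Here I would use three facts: $\alpha_L(\xi_L) = \contr{\xi_L}S^*\dd L = \contr{S\xi_L}\dd L = \contr{\Delta}\dd L = \Delta(L)$, since $\xi_L$ is a SODE and hence $S\xi_L = \Delta$; $\alpha_L(X^c) = \contr{SX^c}\dd L = \contr{X^v}\dd L = X^v(L)$, since $SX^c = X^v$; and $\alpha_L([\xi_L, X^c]) = 0$, because $\alpha_L = S^*\dd L$ is semibasic whereas $[\xi_L, X^c]$ is vertical. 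Combining the two evaluations and cancelling the common term $X^c(\Delta(L))$ yields $\xi_L(X^v(L)) = X^c(L)$, as wanted.

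The step that actually requires checking — and the only place I would expect any friction — is that $[\xi_L, X^c]$ is a vertical vector field. I would verify this in bundle coordinates $(q^i,\dot q^i)$: with $\xi_L = \dot q^i\,\partial/\partial q^i + \xi_L^i\,\partial/\partial\dot q^i$ and $X^c = X^i\,\partial/\partial q^i + \dot q^j(\partial X^i/\partial q^j)\,\partial/\partial\dot q^i$, the $\partial/\partial q^i$-component of $[\xi_L, X^c]$ equals $\xi_L(X^i) - X^c(\dot q^i) = \dot q^j\,\partial X^i/\partial q^j - \dot q^j\,\partial X^i/\partial q^j = 0$, so the bracket has only $\partial/\partial\dot q^i$-components and is therefore vertical. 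Together with the facts, recalled earlier, that a semibasic $1$-form annihilates vertical vector fields and that $S^*\dd f$ is semibasic, this closes the argument; everything else is routine manipulation with the operators $S$, $S^*$ and Cartan's identities, the only mild care needed being to keep straight which lift (vertical $v$ versus complete $c$) appears when $S$ is applied.
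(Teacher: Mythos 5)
Your proposal is correct and follows essentially the same route as the paper: both evaluate $\contr{\xi_L}\omega_L$ on $X^c$ in two ways, using $\alpha_L(\xi_L)=\Delta(L)$, $\alpha_L(X^c)=X^v(L)$, and the verticality of $[\xi_L,X^c]$ to obtain $\xi_L(X^v(L))=X^c(L)$, from which the theorem is immediate. The only addition is your explicit coordinate check that $[\xi_L,X^c]$ is vertical, which the paper leaves as an easy verification.
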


A vector field $X$ on $Q$ is called a \emph{symmetry of the Lagrangian $L$} if $X^c(L)=0$. In fact, if $X$ generates a local one-parameter group $\Phi_t$ of transformations on $M$, then $X^c$ generates $T\Phi_t$. Therefore $X$ is a symmetry of $L$ if and only if
\begin{equation}
	(T\Phi_t)^* L = L.
\end{equation}

\begin{example}[Conservation of momentum]
	Consider a Lagrangian system on $T\RR^3$ with Lagrangian function $L$. Let $(x,y,z)$ be the Cartesian coordinates on $\RR^3$ and $(x,y,z,\dot{x},\dot{y},\dot{z})$ the induced coordinates on $T\RR^3$. Translations along the $z$-direction are generated by the vector field
	\begin{equation}
		X = \frac{\partial  } {\partial z}
	\end{equation}
	on $\RR^3$. Then the vertical and complete lifts of $X$ are
	\begin{equation}
		X^v = \frac{\partial  } {\partial \dot{z}},
	\end{equation}
	and
	\begin{equation}
		X^c = \frac{\partial  } {\partial z},
	\end{equation}
	respectively.
	 As it is well-known, the $z$-th component of momentum is conserved if and only if $L$ is invariant under translations along the $z$-direction. Indeed,
	\begin{equation}
		X^c(L)=\frac{\partial L} {\partial z}
	\end{equation}
	vanishes if and only if
	\begin{equation}
		X^v(L)=\frac{\partial L} {\partial \dot{z}}=p_z
	\end{equation}
	is a constant of the motion.
\end{example}

\begin{example}[Conservation of angular momentum]
	Consider a Lagrangian system on $T\RR^3$ with Lagrangian function $L$. Let $(x,y,z)$ be the Cartesian coordinates on $\RR^3$ and $(x,y,z,\dot{x},\dot{y},\dot{z})$ the induced coordinates on $T\RR^3$. Rotations around the $z$-direction are generated by the vector field
	\begin{equation}
		X = x \frac{\partial  } {\partial y} - y \frac{\partial  } {\partial x}
	\end{equation}
	on $\RR^3$. Then
	\begin{equation}
		X^v = x \frac{\partial  } {\partial \dot{y}} - y \frac{\partial  } {\partial\dot{x}},
	\end{equation}
	and
	\begin{equation}
		X^c = x \frac{\partial  } {\partial y} - y \frac{\partial  } {\partial x}
		  +\dot{x} \frac{\partial  } {\partial \dot{y}} - \dot{y} \frac{\partial  } {\partial\dot{x}},
	\end{equation}
	so
	\begin{equation}
		X^v(L) = x \frac{\partial L } {\partial \dot{y}} - y \frac{\partial L } {\partial\dot{x}} 
		= x p_y - y p_x= L_z
	\end{equation}
	is conserved if and only if $X^c(L)=0$.
\end{example}

\begin{example}[Lagrangian for a relativistic particle and Lorentz invariance]
Consider a free particle in Minkowski spacetime, that is, $Q=\RR^4$ endowed with the metric
\begin{equation}
	\eta=\eta_{\mu\nu} \dd x^{\mu}\otimes \dd x^\nu,\quad
	\eta_{\mu\nu}= \diag \left( -1,1,1,1  \right),
\end{equation}
in Cartesian coordinates $(x^\mu)=(t,x,y,z)$.
The Lagrangian is then
\begin{equation}
	L = \sqrt{\eta_{\mu\nu} \dot{x}^\mu \dot{x}^\nu},
\end{equation}
so
\begin{equation}
	\alpha_L = \frac{1}{\sqrt{\eta_{\alpha\beta} \dot{x}^\alpha \dot{x}^\beta}} \eta_{\mu\nu} \dot{x}^\mu \dd x^\nu,
\end{equation}
and
\begin{equation}
	\omega_L = \frac{1}{\sqrt{\eta_{\alpha\beta} \dot{x}^\alpha \dot{x}^\beta}} \eta_{\mu\nu} \dd x^\mu \wedge \dd \dot{x}^\nu.
\end{equation}
Consider the following vector fields on $Q$:
\begin{equation}
	X_1 = t \frac{\partial  } {\partial x} + x \frac{\partial  } {\partial t},\quad
	X_2 = t \frac{\partial  } {\partial y} + y \frac{\partial  } {\partial t}, \quad
	X_3 = t \frac{\partial  } {\partial z} + z \frac{\partial  } {\partial t}, 
\end{equation}
which generate the boosts along the $x$-, $y$- and $z$-directions, respectively.
These vector fields are symmetries of the Lagrangian, and their associated conserved quantities are
\begin{equation}
	X_1^v(L) 
	= \frac{1}{\sqrt{\eta_{\alpha\beta} \dot{x}^\alpha \dot{x}^\beta}} (t\dot{x}-x\dot{t}),\quad
	X_2^v(L) 
	= \frac{1}{\sqrt{\eta_{\alpha\beta} \dot{x}^\alpha \dot{x}^\beta}} (t\dot{y}-y\dot{t}),\quad
	X_3^v(L) 
	= \frac{1}{\sqrt{\eta_{\alpha\beta} \dot{x}^\alpha \dot{x}^\beta}} (t\dot{z}-z\dot{t}).
\end{equation}

\end{example}

There are other types of infinitesimal symmetries on $Q$, that is, vector fields on $Q$ which generate symmetries of the dynamics. A \emph{Lie symmetry} is a vector field $X$ on $Q$ such that
\begin{equation}
	[X^c,\xi_L] = 0.
\end{equation}
% One can write
Since $\omega_L$ is non-degenerate, $X$ is a Lie symmetry if and only if
\begin{equation}
\begin{aligned}
	0&=\contr{[X^c,\xi_L]}\omega_L
	= \liedv{X^c} (\contr{\xi_L} \omega_L) - \contr{\xi_L} (\liedv{X^c} \omega_L)
	% = \liedv{X^c} (\contr{\xi_L} \omega_L) + \contr{\xi_L}(\liedv{X^c} \dd \alpha_L)
	= \liedv{X^c} (\contr{\xi_L} \omega_L) + \contr{\xi_L}(\liedv{X^c} \dd \alpha_L)\\
	&= \liedv{X^c} (\dd E_L) + \contr{\xi_L}\dd (\liedv{X^c} \alpha_L)
	=  \dd (\liedv{X^c} E_L) + \contr{\xi_L}\dd (\liedv{X^c} \alpha_L)
	=  \dd ({X^c} (E_L)) + \contr{\xi_L}\dd (\liedv{X^c} \alpha_L).
\end{aligned}
\end{equation}
In particular, $X$ is a Lie symmetry if 
\begin{equation}
	\dd (\liedv{X^c} \alpha_L) = 0,
\end{equation}
and
\begin{equation}
	\dd ({X^c} (E_L)) = 0.
\end{equation}
This can be restated as follows.
\begin{proposition}
 Let $X$ be a vector field on $Q$ such that $\liedv{X^c}\alpha_L$ is a closed 1-form. Then $X$ is a Lie symmetry if and only if $X^c(E_L)$ is a constant function.
\end{proposition}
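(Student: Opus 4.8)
The plan is to extract the result directly from the identity established immediately before the statement. For an arbitrary vector field $X$ on $Q$, combining $\contr{\xi_L}\omega_L=\dd E_L$, $\omega_L=-\dd\alpha_L$, the commutation rule $\liedv{X^c}\dd=\dd\liedv{X^c}$, and $\liedv{X^c}E_L=X^c(E_L)$, one has
\[
	\contr{[X^c,\xi_L]}\omega_L = \liedv{X^c}(\contr{\xi_L}\omega_L) - \contr{\xi_L}(\liedv{X^c}\omega_L)
	= \dd\bigl(X^c(E_L)\bigr) + \contr{\xi_L}\,\dd\bigl(\liedv{X^c}\alpha_L\bigr).
\]
So the whole proof is a matter of feeding the hypothesis into this formula.

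First I would invoke the assumption that $\liedv{X^c}\alpha_L$ is a closed $1$-form: then $\dd(\liedv{X^c}\alpha_L)=0$, the second summand vanishes, and the identity collapses to
\[
	\contr{[X^c,\xi_L]}\omega_L = \dd\bigl(X^c(E_L)\bigr).
\]
Next I would use that $\omega_L$ is symplectic — which is the standing hypothesis of this subsection, since the Euler--Lagrange vector field $\xi_L$ is only defined when $L$ is regular — hence $\flat_{\omega_L}$ is injective. Therefore $[X^c,\xi_L]=0$ if and only if $\contr{[X^c,\xi_L]}\omega_L=0$, and by the displayed equality this is equivalent to $\dd\bigl(X^c(E_L)\bigr)=0$. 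Since $X$ is by definition a Lie symmetry precisely when $[X^c,\xi_L]=0$, and $\dd\bigl(X^c(E_L)\bigr)=0$ says exactly that $X^c(E_L)$ is (locally, hence on a connected $TQ$ globally) a constant function, both implications follow simultaneously.

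There is no genuine analytic obstacle here: the entire content sits in the Cartan-calculus identity already derived above, so the only points needing a line of justification are the non-degeneracy of $\omega_L$ (to pass between $[X^c,\xi_L]=0$ and $\contr{[X^c,\xi_L]}\omega_L=0$) and the harmless identification of a $\dd$-closed function with a constant one. If one wanted to be scrupulous, the mildly delicate step to flag is the use of the hypothesis: it is \emph{closedness} of $\liedv{X^c}\alpha_L$, not exactness, that is exactly what kills the obstruction term $\contr{\xi_L}\dd(\liedv{X^c}\alpha_L)$, and nothing weaker would do.
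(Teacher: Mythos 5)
Your proof is correct and is essentially the paper's own argument: the same Cartan-calculus identity $\contr{[X^c,\xi_L]}\omega_L = \dd\bigl(X^c(E_L)\bigr) + \contr{\xi_L}\dd(\liedv{X^c}\alpha_L)$ is derived immediately before the proposition, and the conclusion follows from closedness of $\liedv{X^c}\alpha_L$ together with the non-degeneracy of $\omega_L$, exactly as you argue. Your explicit remarks on where non-degeneracy and connectedness enter are a slight sharpening of the paper's more telegraphic presentation, but the route is the same.
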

In particular, if $\liedv{X^c}\alpha_L$ is exact (that is, there exists a function $f$ on $TQ$ such that
$\liedv{X^c}\alpha_L = \dd f$)
% \begin{equation}
% 	\liedv{X^c}\alpha_L = \dd f,
% \end{equation}
and $X^c(E_L)=0$, then $X$ is said to be a \emph{Noether symmetry}. Obviously, every Noether symmetry is also a Lie symmetry. Observe that
\begin{equation}
\begin{aligned}
  \liedv{X^c}\alpha_L&=\contr{X^c}(\dd \alpha_L)+\dd (\contr{X^c}\alpha_L)=\contr{X^c}(\dd \alpha_L)+\dd (\contr{X^c}S^* \dd L)
  \\
  &=\contr{X^c}(\dd \alpha_L)+\dd (\contr{SX^c}\dd L)=\contr{X^c}(\dd \alpha_L)+\dd (X^v (L)),
\end{aligned}
\end{equation}
so
\begin{equation}
	\contr{X^c}(\dd \alpha_L) = \dd f - \dd (X^v (L)),
\end{equation}
and thus
\begin{equation}
  \contr{\xi_{L}}\contr{X^c}(\dd \alpha_L)=\contr{\xi_{L}}(\dd (f-X^vL))=\xi_{L}(f-X^v(L)),
\end{equation}
but
\begin{equation}
  \contr{\xi_{L}}\contr{X^c}(\dd \alpha_L)
  = - \contr{X^c} \contr{\xi_{L}}(\dd \alpha_L)
  =\contr{X^c}\contr{\xi_{L}}\omega_L=\contr{X^c}(\dd E_L)=X^c(E_L).
\end{equation}
This leads to the following result.
\begin{proposition}
Let $X$ be a vector field on $Q$ such that
\begin{equation}
  \liedv{X^c}\alpha_L=\dd f,
\end{equation}
then $X$ is a Noether symmetry if and only if $f-X^v(L)$ is a constant of the motion. 
\end{proposition}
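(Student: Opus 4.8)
The plan is to reduce the statement to the equivalence ``$F$ is a constant of the motion $\iff\ \xi_L(F)=0$'' recorded earlier in this subsection, applied to $F=f-X^v(L)$. Since the hypothesis $\liedv{X^c}\alpha_L=\dd f$ already provides the exactness half of the definition of a Noether symmetry, it suffices to establish, under this hypothesis, the single identity $X^c(E_L)=\xi_L\bigl(f-X^v(L)\bigr)$; the Noether condition $X^c(E_L)=0$ will then read off immediately as $\xi_L\bigl(f-X^v(L)\bigr)=0$.

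First I would expand $\liedv{X^c}\alpha_L$ with Cartan's formula \eqref{Cartan_formula}, writing $\liedv{X^c}\alpha_L=\contr{X^c}\dd\alpha_L+\dd(\contr{X^c}\alpha_L)$, and simplify the second term: using $\alpha_L=S^*\dd L$, the identity $\contr{X}S^*=S^*\circ\contr{SX}$, the relation $S(X^c)=X^v$ and $S^*(g)=g$ on functions, one gets $\contr{X^c}\alpha_L=\contr{SX^c}\dd L=\contr{X^v}\dd L=X^v(L)$. Hence $\liedv{X^c}\alpha_L=\contr{X^c}\dd\alpha_L+\dd(X^v(L))$, and combining with the hypothesis gives $\contr{X^c}\dd\alpha_L=\dd\bigl(f-X^v(L)\bigr)$.

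Next I would contract this identity with the Euler-Lagrange vector field $\xi_L$. On the right-hand side this yields $\contr{\xi_L}\dd\bigl(f-X^v(L)\bigr)=\xi_L\bigl(f-X^v(L)\bigr)$. On the left-hand side, using the antisymmetry of the $2$-form $\dd\alpha_L=-\omega_L$ and the intrinsic Euler-Lagrange equation \eqref{intrinsic_Euler_Lagrange}, I obtain $\contr{\xi_L}\contr{X^c}\dd\alpha_L=-\contr{X^c}\contr{\xi_L}\dd\alpha_L=\contr{X^c}\contr{\xi_L}\omega_L=\contr{X^c}\dd E_L=X^c(E_L)$. Equating the two sides gives exactly $X^c(E_L)=\xi_L\bigl(f-X^v(L)\bigr)$.

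Finally I would conclude: since $\liedv{X^c}\alpha_L=\dd f$ is assumed, $X$ is a Noether symmetry precisely when $X^c(E_L)=0$, and by the displayed identity this is equivalent to $\xi_L\bigl(f-X^v(L)\bigr)=0$, i.e.\ to $f-X^v(L)$ being a constant of the motion. I do not expect any genuine obstacle here: the argument is a short chain of Cartan-calculus manipulations, all of whose ingredients appear in the excerpt. The one step warranting a little care is $\contr{X^c}\alpha_L=X^v(L)$, which rests on the adjoint-operator identity for $S^*$ together with $S(X^c)=X^v$; everything else is routine bookkeeping with $\contr{\cdot}$ and $\dd$.
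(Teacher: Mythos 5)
Your proposal is correct and follows essentially the same route as the paper: expand $\liedv{X^c}\alpha_L$ by Cartan's formula, identify $\contr{X^c}\alpha_L=X^v(L)$, deduce $\contr{X^c}\dd\alpha_L=\dd\bigl(f-X^v(L)\bigr)$, and contract with $\xi_L$ to obtain $X^c(E_L)=\xi_L\bigl(f-X^v(L)\bigr)$. Nothing is missing.
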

Observe that Noether symmetries are a generalization of symmetries of the Lagrangian. Indeed, if $X$ is a Noether symmetry and $f$ is a constant function, clearly $X^v(L)$ is a constant of the motion . In addition, $\liedv{X^c}\alpha_L$ vanishes so
\begin{equation}
\begin{aligned}
  0&=(\liedv{X^c}\alpha_L)(\xi_{L})=X^c\left(\alpha_L(\xi_{L})\right)-\alpha_L ([X^c,\xi_{L}])\\
  &=X^c(\Delta (L))-S([X^c,\xi_{L}])(L)=X^c(\Delta (L)),
  \end{aligned}
\end{equation}
and thus,
\begin{equation}
  0=X^c(E_L)=X^c(\Delta (L)) -X^c(L)=-X^c(L),
\end{equation}
so a symmetry of $L$ is also a Noether symmetry (and a Lie symmetry).

The infinitesimal symmetries just discussed are vector fields on $Q$, sometimes called point-like symmetries \cite{leon_classification_1994,de_leon_symmetries_1995}. One can also consider infinitesimal symmetries on $TQ$, in other words, symmetries that depend on the velocities as well as the positions. A vector field $\tilde{X}$ on $TQ$ is called a \emph{dynamical symmetry} if
\begin{equation}
	[\tilde{X},\xi_L] = 0.
\end{equation}
A \emph{Cartan symmetry} is a vector field $\tilde{X}$ on $TQ$ such that
\begin{equation}
	\liedv{\tilde{X}}\alpha_L = \dd f,
\end{equation}
and
\begin{equation}
	\tilde{X}(E_L) = 0.
\end{equation}
Recalling the deductions above for point-like symmetries, the reader can analogously derive the following results.
\begin{proposition}
 Let $\tilde{X}$ be a vector field on $TQ$ such that $\liedv{\tilde{X}}\alpha_L$ is a closed 1-form. Then $\tilde{X}$ is a dynamical symmetry if and only if $\tilde{X}(E_L)$ is a constant function.
\end{proposition}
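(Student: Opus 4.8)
The plan is to transcribe verbatim the computation carried out above for point-like Lie symmetries, now with the arbitrary vector field $\tilde{X}$ on $TQ$ in the role previously played by a complete lift $X^c$. The key structural fact is that, since $L$ is regular throughout this subsection, $\omega_L$ is symplectic and hence non-degenerate; therefore the condition $[\tilde{X},\xi_L]=0$ defining a dynamical symmetry is equivalent to $\contr{[\tilde{X},\xi_L]}\omega_L=0$, and it suffices to compute this contracted $2$-form.

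First I would expand $\contr{[\tilde{X},\xi_L]}\omega_L$ by means of the identity $\contr{[X,Y]}\alpha=\liedv{X}\contr{Y}\alpha-\contr{Y}\liedv{X}\alpha$ from the list \eqref{Cartan_properties}, taking $X=\tilde{X}$, $Y=\xi_L$ and $\alpha=\omega_L$, obtaining $\contr{[\tilde{X},\xi_L]}\omega_L=\liedv{\tilde{X}}(\contr{\xi_L}\omega_L)-\contr{\xi_L}(\liedv{\tilde{X}}\omega_L)$. For the first summand I would use the intrinsic Euler--Lagrange equation $\contr{\xi_L}\omega_L=\dd E_L$ together with $\liedv{\tilde{X}}\dd=\dd\liedv{\tilde{X}}$ and $\liedv{\tilde{X}}E_L=\tilde{X}(E_L)$, so that $\liedv{\tilde{X}}(\contr{\xi_L}\omega_L)=\dd(\tilde{X}(E_L))$. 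For the second summand I would write $\liedv{\tilde{X}}\omega_L=-\liedv{\tilde{X}}\dd\alpha_L=-\dd\liedv{\tilde{X}}\alpha_L$; this is exactly where the hypothesis is used, since $\liedv{\tilde{X}}\alpha_L$ being closed gives $\dd\liedv{\tilde{X}}\alpha_L=0$, hence $\liedv{\tilde{X}}\omega_L=0$ and the whole summand vanishes. The conclusion of this step is the identity $\contr{[\tilde{X},\xi_L]}\omega_L=\dd(\tilde{X}(E_L))$.

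Both implications then follow immediately from this identity. If $\tilde{X}$ is a dynamical symmetry, the left-hand side is zero, so $\dd(\tilde{X}(E_L))=0$ and $\tilde{X}(E_L)$ is constant; conversely, if $\tilde{X}(E_L)$ is a constant function then $\dd(\tilde{X}(E_L))=0$, hence $\contr{[\tilde{X},\xi_L]}\omega_L=0$, and the non-degeneracy of $\omega_L$ forces $[\tilde{X},\xi_L]=0$.

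I do not expect a genuine obstacle here: the argument is the word-for-word analogue of the Lie-symmetry computation, and the only points deserving a line of justification are the appeal to regularity of $L$ (so that passing between $[\tilde{X},\xi_L]=0$ and its contraction with $\omega_L$ is legitimate) and the customary mild caveat in reading ``$\dd(\tilde{X}(E_L))=0$'' as ``$\tilde{X}(E_L)$ constant'', understood as constant on each connected component of $TQ$, exactly as in the preceding propositions.
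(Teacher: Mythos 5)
Your argument is correct and is exactly the computation the paper carries out for point-like Lie symmetries (and tells the reader to transcribe for vector fields on $TQ$): the identity $\contr{[\tilde{X},\xi_L]}\omega_L=\dd(\tilde{X}(E_L))+\contr{\xi_L}\dd(\liedv{\tilde{X}}\alpha_L)$, with the second term killed by the closedness hypothesis and the equivalence closed by non-degeneracy of $\omega_L$. Nothing to add.
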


\begin{proposition}
Let $\tilde{X}$ be a vector field on $TQ$ such that
\begin{equation}
  \liedv{\tilde{X}}\alpha_L=\dd f,
\end{equation}
then $\tilde X$ is a Cartan symmetry if and only if $f-(S\tilde{X})(L)$ is a constant of the motion. 
\end{proposition}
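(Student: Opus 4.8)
The plan is to establish the single identity
\[
\tilde{X}(E_L)=\xi_L\bigl(f-(S\tilde{X})(L)\bigr),
\]
after which the statement is immediate. Indeed, $\tilde{X}$ is by definition a Cartan symmetry exactly when $\liedv{\tilde{X}}\alpha_L=\dd f$ (the standing hypothesis) together with $\tilde{X}(E_L)=0$; the identity then shows that the second condition holds if and only if $\xi_L(f-(S\tilde{X})(L))=0$, i.e.\ if and only if $f-(S\tilde{X})(L)$ is a constant of the motion, by the characterization of constants of the motion via $\xi_L(F)=0$ recalled at the start of this subsection.

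To prove the identity I would retrace, essentially verbatim, the computation carried out just above for point-like Noether symmetries, replacing $X^c$ by the general vector field $\tilde{X}$ on $TQ$ and $X^v(L)$ by $(S\tilde{X})(L)$. First, Cartan's formula \eqref{Cartan_formula} gives $\liedv{\tilde{X}}\alpha_L=\contr{\tilde{X}}(\dd\alpha_L)+\dd(\contr{\tilde{X}}\alpha_L)$. The second term simplifies: since $\alpha_L=S^{*}(\dd L)$ and the adjoint vertical endomorphism satisfies $\contr{\tilde{X}}S^{*}=S^{*}\circ\contr{S\tilde{X}}$, and since $S^{*}$ fixes functions, we get $\contr{\tilde{X}}\alpha_L=\contr{S\tilde{X}}(\dd L)=(S\tilde{X})(L)$. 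Substituting the hypothesis $\liedv{\tilde{X}}\alpha_L=\dd f$ then yields $\contr{\tilde{X}}(\dd\alpha_L)=\dd\bigl(f-(S\tilde{X})(L)\bigr)$.

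Next I would contract both sides of this last equality with the Euler-Lagrange vector field $\xi_L$. The right-hand side becomes $\xi_L\bigl(f-(S\tilde{X})(L)\bigr)$. For the left-hand side, using that $\contr{\cdot}\contr{\cdot}$ is antisymmetric on a $2$-form, that $\dd\alpha_L=-\omega_L$, and the intrinsic Euler-Lagrange equation \eqref{intrinsic_Euler_Lagrange},
\[
\contr{\xi_L}\contr{\tilde{X}}(\dd\alpha_L)=-\contr{\tilde{X}}\contr{\xi_L}(\dd\alpha_L)=\contr{\tilde{X}}\contr{\xi_L}\omega_L=\contr{\tilde{X}}(\dd E_L)=\tilde{X}(E_L).
\]
Equating the two expressions gives the asserted identity, and the proposition follows.

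I do not expect any genuine obstacle here, since the argument is a direct transcription of the point-like case; in particular none of the SODE-specific facts used there (such as $S[\xi_L,X^c]=0$ or $\alpha_L(\xi_L)=\Delta(L)$) are needed. The only step deserving a moment's care is the simplification $\contr{\tilde{X}}\alpha_L=(S\tilde{X})(L)$: one applies the identity $\contr{\tilde{X}}S^{*}=S^{*}\circ\contr{S\tilde{X}}$ to the $1$-form $\dd L$ and then observes that $\contr{S\tilde{X}}(\dd L)$ is a function, on which $S^{*}$ acts as the identity. Everything else is routine manipulation with the Cartan-calculus identities collected in \eqref{Cartan_properties}.
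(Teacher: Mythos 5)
Your proof is correct and is exactly the argument the paper intends: it transcribes the point-like Noether-symmetry computation (Cartan's formula, the identity $\contr{\tilde X}\alpha_L=(S\tilde X)(L)$ via $\contr{\tilde X}S^*=S^*\circ\contr{S\tilde X}$, then contraction with $\xi_L$ and antisymmetry of the $2$-form) to a general vector field on $TQ$, which is precisely the "analogous derivation" the paper leaves to the reader. Your remark that the SODE-specific facts ($S[\xi_L,X^c]=0$, $\alpha_L(\xi_L)=\Delta(L)$) are not needed here is also accurate.
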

Notice that there exists no analogous for Lagrangian symmetries on $TQ$, since $[\xi_L,\tilde{X}]$ is not a vertical vector field for a general vector field $\tilde{X}$ on $TQ$. Furthermore, it is clear that
\begin{enumerate}
\item $X$ is a Lie symmetry if and only if $X^c$ is a dynamical symmetry,
\item $X$ is a Noether symmetry if and only if $X^c$ is a Cartan symmetry,
\item every Cartan symmetry is also a dynamical symmetry,
\end{enumerate}
with  $X$ a vector field on $Q$.

\begin{example}[2-dimensional harmonic oscillator]
 In this case the configuration space is $Q=\RR^2$, so $TQ\simeq \RR^2 \times \RR^2$. The Lagrangian function of the harmonic oscillator is
 \begin{equation}
 	L = \frac{1}{2} \left[ (\dot{q}^1)^2 + (\dot{q}^2)^2 
 	- (\Omega_1 q^1)^2- (\Omega_2 q^2)^2 \right],
 \end{equation}
 in units where the mass is equal to 1 for the sake of simplicity. Then the Poincaré-Cartan 1-form is
 \begin{equation}
 	\alpha_L =  \dot{q}^1 \dd q^1 + \dot{q}^2 \dd q^2,
 \end{equation}
 so the corresponding 2-form is
 \begin{equation}
 	\omega_L = \dd q^1 \wedge \dd \dot{q}^1 + \dd q^2 \wedge \dd \dot{q}^2.
 \end{equation}
 Moreover, the energy of the system is
 \begin{equation}
 	E_L =\frac{1}{2} \left[ (\dot{q}^1)^2 + (\dot{q}^2)^2 
 	+ (\Omega_1 q^1)^2 + (\Omega_2 q^2)^2 \right].
 \end{equation}
 Consider the following vector fields on $TQ$:
 \begin{equation}
	\begin{aligned}
	X_{1} &=\frac{\Omega_{1}}{\left(\Omega_{1}\right)^{2}\left(q^{1}\right)^{2}+\left(\dot{q}^{1}\right)^{2}}\left(q^{1} \frac{\partial}{\partial q^{1}}+\dot{q}^{1} \frac{\partial}{\partial \dot{q}^1}\right), \\
	X_{2} &=\frac{\Omega_{2}}{\left(\Omega_{2}\right)^{2}\left(q^{2}\right)^{2}+\left(\dot{q}^{2}\right)^{2}}\left(q^{2} \frac{\partial}{\partial q^{2}}+\dot{q}^{2} \frac{\partial}{\partial \dot{q}^{2}}\right).
	\end{aligned}
 \end{equation}
One can check that
\begin{equation}
	\liedv{X_a} \omega_L = 0,
\end{equation}
and
\begin{equation}
	X_a (E_L) = \Omega_a,
\end{equation}
where $a=1,2$, so $X_1$ and $X_2$ are dynamical symmetries.
See reference \cite[Example 1]{roman-roy_summary_2020} for the Hamiltonian counterpart of this example.
\end{example}

\begin{example}[Conservation of energy]
	Consider a Lagrangian system on $TQ$ with Lagrangian function $L$ and Euler-Lagrange vector field $\xi_L$. Observe that
	\begin{equation}\begin{aligned}
			\liedv{\xi_L}\alpha_L 
			&= \contr{\xi_L}\dd \alpha_L + \dd \left( \contr{\xi_L} \alpha_L  \right)
			= -\contr{\xi_L}\omega_L + \dd \left( \contr{\xi_L} S^*(\dd L)  \right)
			= -\dd E_L + \dd \left( \contr{S\xi_L} (\dd L)  \right)\\
			&= -\dd E_L + \dd \left((S\xi_L) (L)  \right)
			= -\dd(E_L - \Delta(L))
			= \dd L,
	\end{aligned}\end{equation}
	and
	\begin{equation}
		\xi_L (E_L) = \contr{\xi_L} \dd E_L 
		= \contr{\xi_L} \contr{\xi_L} \omega_L =0,
	\end{equation}
	so $\xi_L$ is a Cartan symmetry and
	\begin{equation}
		L - (S\xi_L) (L) = L - \Delta(L) = E_L
	\end{equation}
	is the associated conserved quantity.
\end{example}

The types of symmetries for Lagrangian systems on $TQ$ with Lagrangian function $L$ are summarized in Figure \ref{figure_diagram_conservative}.

\begin{figure}[t]
\centering
\includegraphics[width=.5\linewidth]{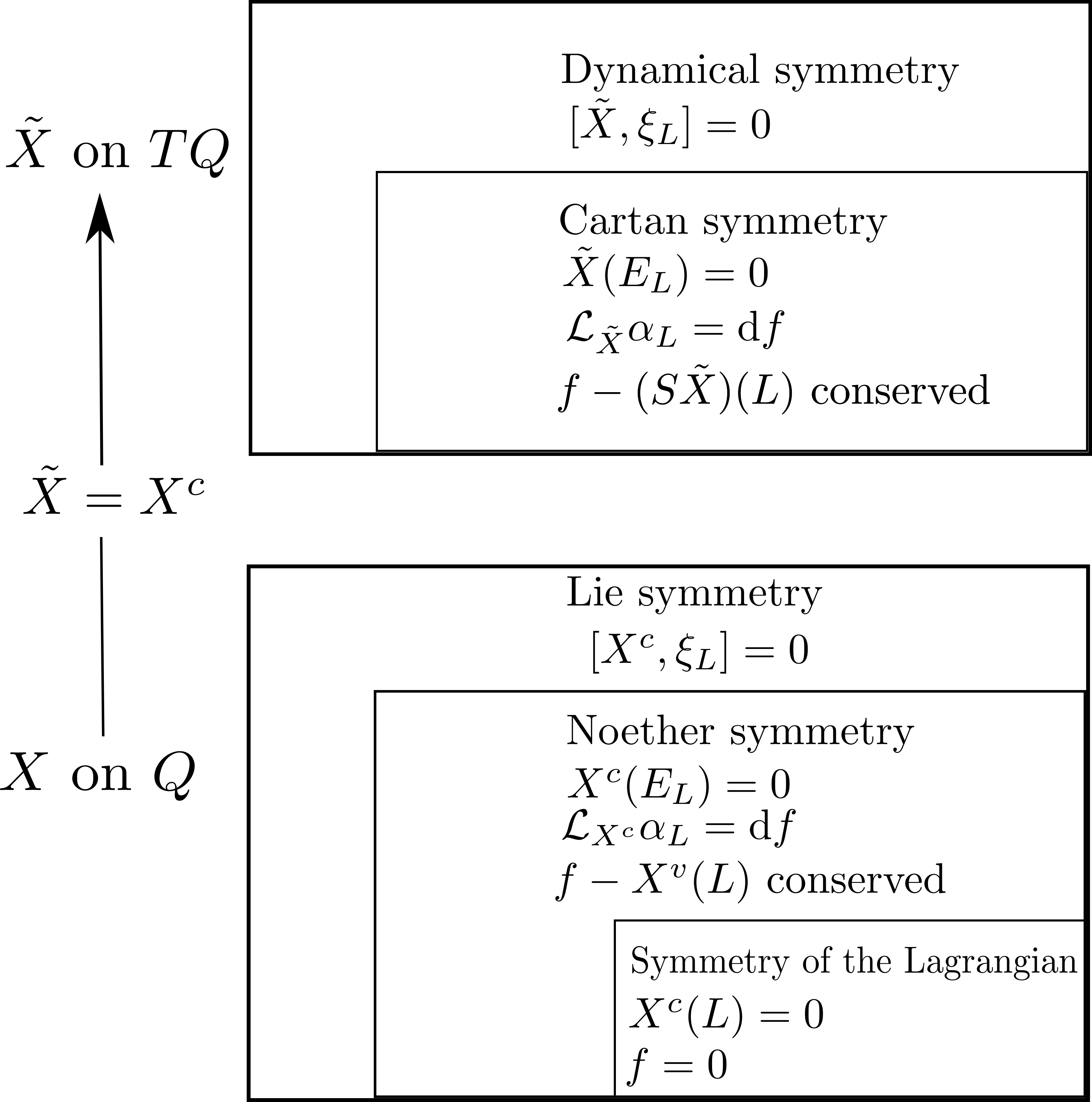}
\caption{Types of symmetries on $Q$ and on $TQ$. The complete lifts of Lie symmetries and Noether symmetries correspond to dynamical symmetries and Cartan symmetries, respectively. Noether symmetries are the subset of Lie symmetries such that $\liedv{X^c}\alpha_L=\dd f$ and $f-X^v(L)$ is a constant of the motion. Cartan symmetries are the analogous subset of dynamical symmetries.
The symmetries of the Lagrangian system are the subset of Noether symmetries for which $\liedv{X^c}\alpha_L$ vanishes, so $f$ is a constant function that can be taken as 0 without loss of generality. 
% The Noether symmetries that satisfy $\contr{X^c}\dd \beta=0$ are also Lie symmetries, and analogously with the intersection between Cartan and dynamical symmetries.
}
\label{figure_diagram_conservative}
\end{figure}

Finally, symmetries and constants of the motion in the Lagrangian and Hamiltonian formalisms can be related. Recall that $L$ is assumed to be hyperregular.
Let $H\circ \Leg = E_L$ be a Hamiltonian on $T^*Q$ and $X_H$ its corresponding Hamiltonian vector field. Let $\tilde{X}$ be a vector field on $TQ$ and $\hat{X}$ the $\Leg$-related vector field on $T^*Q$, that is,
\begin{equation}
 	\hat{X} = \Leg_* \tilde{X}.
 \end{equation} 
It can be easily seen that the following statements hold:
\begin{enumerate}[label=\roman*)]
\item $\hat{X}$ commutes with $X_{H}$ if and only if $\tilde{X}$ is a dynamical symmetry of $L$.
\item $\liedv{\hat{X}}{\theta}=\dd f$ if and only if $\liedv{\tilde{X}}\alpha_L=\dd g$, where $g=f\circ \Leg$.
\item Suppose that $\liedv{\hat{X}}{\theta}=\dd f$. Then the following assertions are equivalent:
\begin{enumerate}[label=\alph*)]
\item $\hat{X}(H)=0$,
\item $f-{\theta}(\hat{X})$ is a conserved quantity,
\item $\tilde{X}(E_L)=0$,
\item $f\circ \Leg-\alpha_L(\tilde{X})$ is a conserved quantity.
\end{enumerate}

\end{enumerate}

\section{Momentum map and symplectic reduction}\label{section_reduction}
In the previous section, the relation between infinitesimal symmetries (and hence, of one-parameter groups of symmetries) with constants of the motion has been studied. These conserved quantities are frequently used to simplify the equations of motion. As it is well-known, many physical systems have a group of symmetries (which is not in general a one-parameter group). For instance, special-relativistic systems are invariant under the Poincaré group. One could then think of simplifying the equations of the motion by acting with the whole symmetry group. 
% In fact, if a $d$-dimensional symmetry group acts over a system, then its number of independent degrees of freedom is reduced by $d$. In other words, $Q$ is reduced by $d$ dimensions so $TQ$ and $T^*Q$ are reduced by $2d$ dimensions. 
In fact, if a symmetry group acts over a system, then its number of independent degrees of freedom is reduced. In other words, the dimensions of $Q$, and hence of $TQ$ and $T^*Q$, are reduced. 
A systematic procedure to explore this fact is known as \emph{reduction}, which is due to Marsden and Weinstein. See references \cite{marsden_reduction_1974,marsden_reduction_1990,abraham_foundations_2008,ortega_momentum_2004}. See also the seminal works by Lie, Kostant, Souriau and Smale 
\cite{souriau_structure_2008,souriau_67,smale_topology_1970,merker_theory_2010,kostant_orbits_2009}.

Consider a Lie group $G$ with Lie algebra $\mathfrak{g}$, and let $\mathfrak{g}^*$ be the dual of $\mathfrak{g}$. Let $(M,\omega)$ be a connected symplectic manifold. The action $\Phi:G\times M\to M$ of $G$ on $M$ is called \emph{symplectic} if the map
\begin{equation}
\begin{aligned}
	\Phi_g: M&\to M\\
	x&\mapsto \Phi(g,x)=gx
\end{aligned}
\end{equation}
is symplectic for each $g\in G$. 
In what follows, every group action is assumed to be free and proper. 

For each $\xi\in \mathfrak{g}$, there is a vector field $\xi_M$ on $M$ which is the infinitesimal generator of the action corresponding to $\xi$.
Consider a mapping
\begin{equation}
	J:M\to \mathfrak{g}^*,
\end{equation}
and a function $J^\xi:M\to \RR$ such that
\begin{equation}
	J^\xi (x) = \left\langle  J(x), \xi \right\rangle
\end{equation}
for each $\xi\in \mathfrak{g}$ and each $x\in M$. Then $J$ is called a \emph{momentum map} for the action if
\begin{equation}
	\dd J^\xi = \contr{\xi_M} \omega
\end{equation} 
for every $\xi\in \mathfrak{g}$. In other words, $J$ is a momentum map provided that, for each $\xi\in \mathfrak{g}$, $\xi_Q^c$ is the Hamiltonian vector field on $M$ associated to $J^\xi$.

The \emph{affine action} of $G$ on $\mathfrak{g}^*$ associated with the momentum map $J:M\to \mathfrak{g}^*$ is given by
\begin{equation}
\begin{aligned}
G\times \mathfrak{g}^* & \to \mathfrak{g}^*\\
(g,\mu) & \mapsto \operatorname{Ad}_{g^{-1}}^* \mu 
		+ J  \left( \Phi_g(x)  \right)
		- \operatorname{Ad}_{g^{-1}}^* \left( J(x)  \right).
\end{aligned}
\end{equation}
This definition is independent of the choice of $x\in M$. The isotropy group of $\mu\in \mathfrak{g}$ under the affine action is denoted by $G_\mu$.

Consider a function $F$ on $(M,\omega)$ which is invariant under the action of $G$, that is,
\begin{equation}
	F(x) = F \left( \Phi_g(x)  \right)
\end{equation}
for every $g\in G$ and every $x\in M$. Equivalently,
\begin{equation}
	\liedv{\xi_P} F = 0,
\end{equation}
for each $\xi\in \mathfrak{g}$, that is,
\begin{equation}
	\left\{F, J^\xi\right\} = 0,
\end{equation}
where $\left\{\cdot,\cdot  \right\}$ is the Poisson bracket defined by $\omega$,
so $J^\xi$ is a first integral of $X_F$. In practice, for a Hamiltonian or Lagrangian invariant under $G$, there is a constant of the motion $J^\xi$ for each $\xi\in \mathfrak{g}$.

A momentum map $J$ is called \emph{$\mathrm{Ad}^*$-equivariant} if
\begin{equation}
	J \left( \Phi_g(x)  \right) = \mathrm{Ad}_{g^{-1}}^* J(x)
\end{equation}
for each $g\in G$ and each $x\in M$. In other words, the following diagram commutes:
\begin{center}
\begin{tikzcd}
M \arrow[rr, "\Phi_g"] \arrow[d, "J"']                     &  & M \arrow[d, "J"] \\
\mathfrak{g}^* \arrow[rr, "\operatorname{Ad}^*_{g^{-1}}"'] &  & \mathfrak{g}^*  
\end{tikzcd}
\end{center}

An $\mathrm{Ad}^*$-equivariant momentum map defines an homomorphism between the Lie algebra $\mathfrak{g}$ and the Lie algebra of functions on $M$ under the Poisson bracket. More specifically, the following property holds:
\begin{equation}
	\left\{J^\xi, J^\eta  \right\} = J^{[\xi,\eta]}
\end{equation}
for each $\xi,\eta\in \mathfrak{g}$. 

If the symplectic form can be written as $\omega=-\dd \theta$, and the action leaves $\theta$ invariant (that is, $\Phi_g^*\theta=\theta$ for each $g\in G$), there is a \emph{natural momentum map} given by
\begin{equation}
	J(x) (\xi) = \left( \contr{\xi_M} \theta  \right) (x)
\end{equation}
for each $x\in M$. It can be shown that it is $\mathrm{Ad}^*$-equivariant. 
 In particular, if $\theta$ is the canonical 1-form on $T^*Q$, every group action leaves it invariant by the property \eqref{property_defining_canonical_1_form}, so the natural momentum map can always be defined in the Hamiltonian formalism. In the Lagrangian description, if the Lagrangian function $L$ is $G$-invariant, so is $\alpha_L$
and the natural momentum map can be used as well.

The following theorem is the basis of symplectic reduction.
\begin{theorem}[Symplectic point reduction] \label{theorem_symplectic_point_reduction}
Consider a Lie group $G$ with symplectic action $\Phi$ on the connected manifold $(M,\omega)$.
Then the following statements hold:
\begin{enumerate}
\item The quotient space $M_\mu\coloneqq J^{-1}(\mu)/G_\mu$ is a symplectic manifold, with symplectic form $\omega_\mu$ uniquely characterized by the relation
\begin{equation}
	\pi_\mu^* \omega_\mu = \incl_\mu^* \omega.
\end{equation}
Here the maps $\incl_\mu: J^{-1}(\mu)\hookrightarrow M$ and $\pi_\mu:J^{-1}(\mu)\to J^{-1}(\mu)/G_\mu$ denote the inclusion and the projection, respectively. The pair $(M_\mu,\omega_\mu)$ is called the \emph{symplectic point reduced space}.
\item Let $f$ be a $G$-invariant function on $M$, and $X_f$ the corresponding Hamiltonian vector field on $M$. The flow $F_t$ of $X_f$ induces a flow $F_t^\mu$ on $M_\mu$ given by
\begin{equation}
	\pi_{\mu} \circ F_{t} \circ \incl_{\mu}=F_{t}^{\mu} \circ \pi_{\mu}.
\end{equation}
\item The vector field generated by the flow $F_t^\mu$ on $(M_\mu,\omega_\mu)$ is Hamiltonian with associated reduced function $f_\mu$ on $M_\mu$ defined by
\begin{equation}
	f_\mu \circ \pi_\mu = f \circ \incl_\mu.
\end{equation}
The vector fields $X_f$ and $X_{f_\mu}$ are $\pi_\mu$-related, that is,
\begin{equation}
	\pi_{\mu*} X_{f_\mu} = \incl_{\mu*} X_f.
\end{equation}
\item Let $k$ be another $G$-invariant function. Then $\left\{f,k  \right\}$ is also $G$-invariant and its associated reduced function is given by 
\begin{equation}
	\left\{f,k  \right\}_\mu = \left\{f_\mu,k_\mu  \right\}_{M_\mu},
\end{equation}
where $\left\{\cdot,\cdot  \right\}_{M_\mu}$ denotes the Poisson bracket associated to $\omega_\mu$ on $M_\mu$.

\end{enumerate}

\end{theorem}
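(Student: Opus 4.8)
The plan is to run the classical Marsden--Weinstein argument in four stages, one per item of the statement, the only non-routine part being the pointwise symplectic-linear-algebra computation behind item i).

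\textbf{Step 1 (the level set and the quotient are manifolds).} First I would observe that $\mu$ is automatically a regular value of $J$: since the action is free, the infinitesimal generators $\xi_M(x)$, $\xi\in\mathfrak g$, span a subspace of $T_xM$ of dimension $\dim G$ at each $x\in J^{-1}(\mu)$, and because $\dd J^\xi(x)=\contr{\xi_M}\omega(x)$ with $\omega$ non-degenerate, this forces $\dd J(x)$ to be surjective. Hence $J^{-1}(\mu)$ is a closed submanifold of codimension $\dim G$. Next, with respect to the affine action of $G$ on $\mathfrak g^*$ the map $J$ is equivariant by construction, so $J^{-1}(\mu)$ is invariant under $G_\mu$; the induced $G_\mu$-action is still free and proper, and by the quoted quotient-manifold theorem (Theorem~21.10 in \cite{Lee2012}) $M_\mu=J^{-1}(\mu)/G_\mu$ is a smooth manifold with $\pi_\mu$ a surjective submersion.

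\textbf{Step 2 (the reduced symplectic form; the crux).} The heart of the proof is the pointwise identity, for $x\in J^{-1}(\mu)$ with $\mathcal O_x=G\cdot x$ the group orbit,
\[
T_x\big(J^{-1}(\mu)\big)=\ker \dd J(x)=\big(T_x\mathcal O_x\big)^{\omega(x)},
\]
where the second equality holds because $v\in\ker\dd J(x)$ iff $\omega(x)(\xi_M(x),v)=0$ for all $\xi$. Using non-degeneracy of $\omega$ (so that the symplectic orthogonal is an involution on subspaces) one gets that the kernel of $\incl_\mu^*\omega$ at $x$ is $T_x\mathcal O_x\cap(T_x\mathcal O_x)^{\omega(x)}$, and a second short computation — this is where the affine equivariance of $J$ is used, to identify which generators $\xi_M$ stay tangent to the level set — shows this intersection equals $T_x(G_\mu\cdot x)$. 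Since $\incl_\mu^*\omega$ is also $G_\mu$-invariant (the $G$-action is symplectic and $G_\mu$ preserves $J^{-1}(\mu)$), it descends to a unique $2$-form $\omega_\mu$ on $M_\mu$ with $\pi_\mu^*\omega_\mu=\incl_\mu^*\omega$; uniqueness follows from injectivity of $\pi_\mu^*$ (surjective submersion), non-degeneracy is exactly the statement that the null distribution has been quotiented out, and $\dd\omega_\mu=0$ because $\pi_\mu^*\dd\omega_\mu=\dd\incl_\mu^*\omega=\incl_\mu^*\dd\omega=0$. This is the step I expect to be the main obstacle; everything else is bookkeeping with pullbacks and injectivity of $\pi_\mu^*$.

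\textbf{Step 3 (items ii and iii: dynamics descends).} For a $G$-invariant $f$, $G$-invariance is equivalent to $\{f,J^\xi\}=0$ for all $\xi$ (as recalled in this section), hence $X_f(J^\xi)=0$, so $J$ is constant along the flow $F_t$ of $X_f$; thus $F_t$ preserves $J^{-1}(\mu)$ and $X_f$ restricts to a vector field $X_f^{\mathrm{res}}$ on it. Since $f$ and $\omega$ are $G$-invariant, so is $X_f$, and $F_t$ commutes with the $G$-action, in particular with $G_\mu$; therefore $F_t$ projects to a well-defined $F_t^\mu$ on $M_\mu$ with $\pi_\mu\circ F_t\circ\incl_\mu=F_t^\mu\circ\pi_\mu$, whose generator $X_{f_\mu}$ is $\pi_\mu$-related to $X_f^{\mathrm{res}}$. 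To identify it as Hamiltonian for $f_\mu$ (well defined by $G$-invariance of $f$), pull back:
\[
\pi_\mu^*\big(\contr{X_{f_\mu}}\omega_\mu\big)=\contr{X_f^{\mathrm{res}}}\incl_\mu^*\omega=\incl_\mu^*\big(\contr{X_f}\omega\big)=\incl_\mu^*\dd f=\dd\big(f\circ\incl_\mu\big)=\dd\big(f_\mu\circ\pi_\mu\big)=\pi_\mu^*\dd f_\mu,
\]
and injectivity of $\pi_\mu^*$ yields $\contr{X_{f_\mu}}\omega_\mu=\dd f_\mu$; the relation $\pi_{\mu*}X_{f_\mu}=\incl_{\mu*}X_f$ is then just a restatement.

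\textbf{Step 4 (item iv: Poisson brackets).} Because the action is symplectic it preserves the Poisson bracket, so $\{f,k\}$ is $G$-invariant and $\{f,k\}_\mu$ is defined. Using that $X_f^{\mathrm{res}}$ and $X_k^{\mathrm{res}}$ are $\pi_\mu$-related to $X_{f_\mu}$ and $X_{k_\mu}$ together with $\pi_\mu^*\omega_\mu=\incl_\mu^*\omega$,
\[
\pi_\mu^*\{f_\mu,k_\mu\}_{M_\mu}=\pi_\mu^*\big(\omega_\mu(X_{f_\mu},X_{k_\mu})\big)=\incl_\mu^*\big(\omega(X_f,X_k)\big)=\incl_\mu^*\{f,k\}=\pi_\mu^*\{f,k\}_\mu,
\]
and injectivity of $\pi_\mu^*$ completes the proof.
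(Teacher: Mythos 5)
The paper does not prove this theorem at all: immediately after the statement it says ``The proof is involved and can be found in reference \cite{ortega_momentum_2004}'', so there is no internal argument to compare yours against. What you have written is the standard Marsden--Weinstein proof, and it is correct in all essentials: regularity of $\mu$ from freeness plus non-degeneracy of $\omega$; the identification $T_x\bigl(J^{-1}(\mu)\bigr)=\ker \dd J(x)=\bigl(T_x\mathcal{O}_x\bigr)^{\omega(x)}$; the computation of the kernel of $\incl_\mu^*\omega$ as $T_x\mathcal{O}_x\cap\bigl(T_x\mathcal{O}_x\bigr)^{\omega(x)}=T_x(G_\mu\cdot x)$; descent of the basic, $G_\mu$-invariant form to a non-degenerate closed form on the quotient; and the pullback-plus-injectivity bookkeeping for items ii)--iv). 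Two places a careful write-up should expand: first, the claim $T_x\mathcal{O}_x\cap\ker \dd J(x)=T_x(G_\mu\cdot x)$ deserves the one-line verification that $\dd J(x)\bigl(\xi_M(x)\bigr)$ is the derivative at $t=0$ of the affine action of $\exp(t\xi)$ on $\mu$, which vanishes precisely when $\xi$ lies in the isotropy algebra of $\mu$ for that affine action (the paper's $G_\mu$ is defined via the affine action, so this matches); second, in your Step 3 you should say explicitly that $X_f$ is tangent to $J^{-1}(\mu)$ (immediate from $X_f(J^\xi)=0$) before restricting it, and that $F_t$ descends because $X_f$, hence $F_t$, commutes with the $G_\mu$-action. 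Neither point is a gap, only compression. In short, your proposal supplies the argument the paper outsources to the literature.
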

The proof is involved and can be found in reference \cite{ortega_momentum_2004}. 

The reduced space has dimensions
\begin{equation}
	\dim M_\mu = \dim J^{-1}(\mu) - \dim G_\mu = \dim M -\dim G - \dim G_\mu. 
\end{equation}

% Given the dynamics on $J^{-1}(\mu)\subset M$, one can obtain the dynamics on  the reduced space $M_\mu$ by the procedure above. From a practical point of view, one poses the inverse question, that is, how to determine the integral curves of $X_f$ knowing the integral curves $c_\mu(t)$ of $X_{f_\mu}$ and an initial condition $m_0\in J^{-1}(\mu)$. Take a smooth curve $d(t)$ in $J^{-1}(\mu)$ such that $d(0)=m_0$ and $\pi_\mu(d(t))=c_\mu(t)$. If $c(t)$ denotes the integral curve of $X_f$ with $c(0)=m_0$, there exists a smooth curve $g(t)$ in $G_\mu\subset G$ such that $c(t)=g(t)\cdot d(t)$. The problem is now to determine $g(t)$ and thus $c(t)$. By definition of integral curve, 
% \begin{equation}
% \begin{aligned}
% 	X_f(c(t))=\dot{c}(t)
% 	=T_{d(t)}\Phi_{g(t)} \dot{d}(t)
% 	+T_{d(t)}\Phi_{g(t)} \left( T_{g(t)} L_{g(t)^{-1}} \dot{g}(t)  \right)_M (d(t))
% \end{aligned}
% \end{equation}
% % \ojo{complete}
% \ojo{Lagrangian version \cite{de_leon_symmetries_2021}}

Consider a Lie group $G$ with action $\Phi:G\times Q\to Q$ on $Q$. If $\Phi_g:Q\to Q$ is the diffeomorphism given by $\Phi_g(q)=gq$ for each $g\in G$ and each $q\in Q$, it induces an action on $TQ$ given by
\begin{equation}
	T\Phi_g:TQ\to TQ,
\end{equation}
known as the \emph{lifted action}, which is symplectic. For each $\xi\in \mathfrak{g}$, there is a vector field $\xi_Q$ on $Q$ generating the corresponding infinitesimal action on $Q$. The generator of the lifted action is then $\xi_Q^c$. Consider a Lagrangian system on $Q$ with hyperregular Lagrangian function $L$, and let $\omega_L$ be the symplectic structure induced by $L$ on $TQ$.
As mentioned above, the natural momentum map on $TQ$ is given by
\begin{equation}
	J(\xi)=\alpha_L \left( \xi_Q^c  \right)
\end{equation}
for each $\xi\in \mathfrak{g}$. 

To sum up, if a Lagrangian $L$ on $TQ$ is invariant under the action of a symmetry group $G$ with Lie algebra $\mathfrak{g}$, then:
\begin{enumerate}
\item for each $\xi\in \mathfrak{g}$, there is a conserved quantity $J^\xi=\alpha_L(\xi_Q^c)$,
\item the symplectic point reduced space is $\left((TQ)_\mu, \omega_{\mu}\right)$, with
\begin{equation}
	\pi_\mu^* \omega_\mu = \incl_\mu^* \omega_L;
\end{equation}
\item the reduced Lagrangian function $L_\mu$ is given by
\begin{equation}
	L_\mu \circ \pi_\mu  = L \circ \incl_\mu,
\end{equation}
\item the reduced Euler-Lagrange vector field $\xi_{L_\mu}$ is given by
\begin{equation}
	\pi_{\mu*}\xi_{L_\mu}=\incl_{\mu*} \xi_{L}.
\end{equation}
% \item the dynamics given by $\xi_{L}$ on $J^{-1}(\mu)\subset TQ$ are equivalent to the dynamics given by $\xi_{L_\mu}$ on $(TQ)_\mu$.
\end{enumerate}

It is worth remarking that $(TQ)_\mu$ is denoted this way since it is the reduced space associated to $TQ$ but, in general, it is not a tangent bundle. One can also introduce a lifted action on $T^*Q$ and obtain the same result in the Hamiltonian formalism.

\begin{remark}[Reconstruction of the dynamics]\label{remark_reconstruction}

Given the dynamics on $J^{-1}(\mu)\subset TQ$, one can obtain the dynamics on  the reduced space $(TQ)_\mu$ by the procedure above. From a practical point of view, one poses the inverse question, that is, how to determine the integral curves $c(t)$ of $\xi_L$ knowing the integral curves $c_\mu(t)$ of $\xi_{L_\mu}$ and an initial condition $m_0\in J^{-1}(\mu)$. Take a smooth curve $d(t)$ in $J^{-1}(\mu)$ such that $d(0)=m_0$ and $\pi_\mu(d(t))=c_\mu(t)$. If $c(t)$ denotes the integral curve of $\xi_L$ with $c(0)=m_0$, there exists a smooth curve $g(t)$ in $G_\mu\subset G$ such that
 % $c(t)=g(t)\cdot d(t)$.  
\begin{equation}
  c(t)=\Phi_{g(t)}(d(t))\label{integral_curve}.
\end{equation}
The problem is now to determine $g(t)$ and thus $c(t)$.
By definition of integral curve, 
\begin{equation}
\begin{aligned}
  \xi_{L}(c(t))
  &=\dot{c}(t)\\
  &=T\Phi_{g(t)}(d(t))(\dot{d}(t)) + T\Phi_{g(t)}(d(t))(TL_{g(t)^{-1}}(\dot{g}(t)))_Q^c(d(t)),
\end{aligned}
\end{equation}
and using the $\Phi_g$-invariance of $\xi_{L}$ one has
\begin{equation}
  \xi_{L}(d(t))
  =\dot{d}(t)+(TL_{g(t)^{-1}}(\dot{g}(t)))_Q^c(d(t)).
\end{equation}
In order to solve this equation, one first has to solve the algebraic problem
\begin{equation}
  \xi_Q^c(d(t))=\xi_{L}(d(t))-\dot{d}(t),
\end{equation}
for $\xi(t)\in\mathfrak{g}$, and then solve
\begin{equation}
  \dot{g}(t)=TL_{g(t)}\xi(t)
\end{equation}
for $g(t)$. The integral curve sought is given by equation \eqref{integral_curve}.

\end{remark}

\begin{example}[Angular momentum]\label{example_angular_momentum_reduction}
    Consider $Q=\RR^n\setminus \{0\}$, and a Lagrangian $L$ on $T Q$ that is spherically symmetric, say $L(q,\dot{q})= L(\lVert q \rVert,\lVert{\dot{q}}\rVert)$. 
    Consider the Lie group $G=\mathrm{SO}(n) = \{O \in \RR^{n\times n} \mid {O}^t O = \Id, \det(O)=1 \}$ acting by rotations on $Q$. The action can be lifted to $TQ$ via the tangent lift. Explicitly, for $O\in \mathrm{SO}(3)$ let
    \begin{equation}
        \begin{aligned}
            g_O :  Q  &\to  Q, \\
            (q) &\mapsto (O \cdot q)\\
            Tg_O : T Q  &\to T Q, \\
            (q,\dot{q}) &\mapsto (O \cdot q, O \cdot \dot{q}).
        \end{aligned}
    \end{equation}
The group $\mathrm{SO}(n)$ acts freely and properly. The Lie algebra of the group is given by $\mathfrak{g}=\mathfrak{so}(n) = \{o \in \RR^{n\times n} \mid {o}^t + o = 0\}$.
In particular, for $n=3$ the algebra $\mathfrak{g}$ can be identified with the algebra of 3-dimensional vectors with the cross product by taking
\begin{equation}
    \begin{pmatrix}
        0 & -\xi_3 & \xi_2\\
        \xi_3 & 0 & -\xi_1\\
        -\xi_2 & \xi_1 & 0\\
    \end{pmatrix}
    \mapsto 
    \begin{pmatrix}
        \xi_1 \\ \xi_2 \\ \xi_3
    \end{pmatrix},
\end{equation}
% The infinitesimal generators associated to $\xi\in \mathfrak{g}$ are given by 
so, for each  $\xi\in \mathfrak{g}$, one has
\begin{equation}
\begin{split}
    \xi_Q (q) &= (\xi \times q),\\
    \xi_Q^c(q,\dot{q}) &= (\xi \times q,\xi \times \dot{q}),\\
    \xi_Q^v(q,\dot{q}) &= (0,\xi \times \dot{q}).
\end{split}
\end{equation}
One can identify $\mathfrak{g}$ with $\mathfrak{g}^*$ by using the inner product on $\RR^3$. The moment map is then given by \cite[Example~4.2.15]{abraham_foundations_2008}
\begin{equation}
    J(q,\dot{q}) = q \times \dot{q}. 
\end{equation}
Identifying $\mathfrak{g}^*\simeq \RR^3$ one sees that the affine action of $G$ is the usual one (by rotations). Let $\mu \in \mathfrak{g}^*$, $\mu\neq 0$ then the isotropy group $G_\mu \simeq S^1$ of $\mu$ under the affine action, which are the rotations around the axis $\mu$.

% We look for Rayleigh potentials $\mathcal{R}$ such that $\mathfrak{g}_\mathcal{R} = \mathfrak{g}$. The condition that $\xi_Q^v (\mathcal{R})=0$ implies that $\mathcal{R}$ is spherically symmetric on the velocities. Then, the condition that $\xi_Q^c(q,\dot{q})$ is semi-basic means that the terms which are not spherically symmetric on the positions cannot involve the velocities. That is:
% \begin{equation}
%     \mathcal{R} = A(q) + B(\lVert{q}\rVert, \lVert \dot{q} \rVert).
% \end{equation}
Without loss of generality, one can take $\mu=(0,0,\mu_0)$. Hence, if $(q,\dot{q}) \in J^{-1}(\mu)$, both $q$ and $\dot{q}$ lie on the $xy$-plane. Moreover, they must satisfy the equation $\dot{q}^1 p_2-\dot{q}^1 q^2 = \mu_0$. Finally, applying the reduction method above to the system, one can find out $((TQ)_\mu, L_\mu)$, which is a Lagrangian system over a $2$-dimensional manifold~\cite[Example~4.3.4]{abraham_foundations_2008}.
\end{example}

See reference \cite{marsden_reduction_1990} for explicit expressions for $c(t)$ in some particular cases and their derivation.
\chapter{Mechanical systems subjected to external forces} \label{part_paper}
 A well-known result from elementary calculus and classical mechanics is that a conservative force (that is, a force such that the total work it does moving a particle between two given points is independent of the path taken) can be written as the gradient of a potential. This potential can then be incorporated to the Lagrangian or Hamiltonian function of the system. However, this is not always the case. As a matter of fact, when non-conservative forces act on a mechanical system, its equations of motion depend on both the Lagrangian or the Hamiltonian and on the external forces \cite{goldstein_mecanica_1987}. This is the case of many physical systems of interests \cite{cantrijn_symplectic_1984,cantrijn_vector_1982,cantrijn_geometry_2002,esen_geometrical_2021}.
  Furthermore, external forces can arise in a more sophisticated manner, for instance, after a process of reduction of a nonholonomic system with symmetries \cite{cantrijn_geometry_2002,cantrijn_reduction_1999,cortes_reduction_1999}.

 In this part of the thesis, the concepts studied in Part \ref{part_1} are generalized for systems with external forces. The main reference is our original work \cite{de_leon_symmetries_2021}.

%  In this paper, we study the geometry and symmetries of Hamiltonian and Lagrangian systems with external forces, focusing on the so-called systems with Rayleigh dissipation. 
% Mechanical systems with external forces are usual in Engineering \cite{cantrijn_82,cantrijn_84,cantrijn_02}, but also can arise in a more sophisticated manner, for instance, after a process of reduction of a nonholonomic system with symmetries \cite{cantrijn_02,cantrijin_et_al_99,cortes_99}. As it is well-known (see Refs.~\cite{godbillon_69,MdL_Rodrigues_89}), external forces can be regarded as semibasic 1-forms on the tangent or cotangent bundle. 

\section{Mechanical systems subjected to external forces}\label{section_external_forces}
Geometrically, an external force is expressed as a semibasic 1-form on a symplectic manifold \cite{leon_methods_1989,godbillon_geometrie_1969,abraham_foundations_2008}. Consider a forced Hamiltonian system on $(T^*Q, \omega)$, with Hamiltonian function $H:M\to \RR$ and external force $\gamma$. If $(q^i)$ are local coordinates on $Q$, and $(q^i,p_i)$ the induced coordinates on $T^*Q$, the external force can be locally written as
\begin{equation}
	\gamma = \gamma_i (q,p) \dd q^i.
\end{equation}
The trajectories of the system are now the integral curves of the vector field $X_{H,\gamma}$ on $T^*Q$ given by
\begin{equation}
	\contr{X_{H,\gamma}} \omega = \dd H + \gamma.
\end{equation}
Let $Z_\gamma$ be the vector field defined by
\begin{equation}
	\contr{Z_\gamma} \omega = \gamma.
\end{equation}
Then $X_{H,\gamma}$ can be written as
\begin{equation}
	X_{H,\gamma} = X_H + Z_\gamma,
\end{equation}
with $X_H$ the Hamiltonian vector field of $H$. 
% , given by
% \begin{equation}
% 	\contr{X_H} \omega = \dd H.
% \end{equation}
Locally, 
\begin{equation}
	Z_\gamma=-\gamma_i\parder{}{p_i},
\end{equation}
and recall that
\begin{equation}
	X_H=\parder{H}{p_i}\parder{}{q^i}-\parder{H}{q^i}\parder{}{p_i} \label{Hamiltonian_VF_local},
\end{equation}
so
\begin{equation}
	X_{H,\gamma}=\parder{H}{p_i}\parder{}{q^i}-\left(\parder{H}{q^i}+\gamma_i\right)\parder{}{p_i}.
\end{equation}
The equations of motion are then
\begin{equation}
\begin{aligned}
&\der{q^i}{t}=\parder{H}{p_i},\\
&\der{p_i}{t}=-\left(\parder{H}{q^i}+\gamma_i\right).
\end{aligned}
\end{equation}
Suppose that the external force is exact, that is, $\gamma = \dd f$ for some function $f$ on $Q$. Then $Z_\gamma$ is the Hamiltonian vector field for $f$, so one can define a new Hamiltonian function $H'=H+f$, such that $X_{H,\gamma}=X_{H'}$ is its Hamiltonian vector field. In that case, the problem is reduced to the conservative one (that is, without external forces). Therefore, \emphoriginal{stricto sensu} external forces are those which are non-exact. 

The Lagrangian formalism is analogous. An external force is now given by a semibasic 1-form $\beta$ on $TQ$. In bundle coordinates,
\begin{equation}
 	\beta = \beta_i(q,\dot q) \dd q^i. 
 \end{equation} 
 % Consider a \emph{forced Lagrangian system} $(L,\beta)$ on $TQ$, with Lagrangian function $L$ and external force $\beta$. 
 Hereinafter, a Lagrangian system on $TQ$ with Lagrangian function $L$ and subject to an external force $\beta$ will be called a \emph{forced Lagrangian system} $(L,\beta)$ on $TQ$. The dynamics of $(L,\beta)$ is given by the \emph{forced Euler-Lagrange vector field} $\xi_{L,\beta}$ via
 \begin{equation}
  	\contr{\xi_{L,\beta}} \omega_L = \dd E_L + \beta. \label{dynamics_vector_Lagrangian}
  \end{equation} 
  Recall that $\omega_L$ and $E_L$ are the Poincaré-Cartan 2-form and the energy function associated to $L$, respectively (see Section \ref{section_Lagrangian}).
  Here L is assumed to be hyperregular. Let $\xi_\beta$ be the vector field given by
  \begin{equation}
  	\contr{\xi_\beta} \omega_L = \beta,
  \end{equation}
  so that
  \begin{equation}
  	\xi_{L,\beta} = \xi_L + \xi_\beta,
  \end{equation}
  with $\xi_{L}$ the Hamiltonian vector field on $(TQ,\omega_L)$ associated to $E_L$. Recall that locally
  \begin{equation}
  	\omega_L 
  	= \frac{\partial^2 L  } {\partial \dot{q}^i \partial {q}^j  } \dd q^i \wedge \dd {q}^j
  	+W_{ij} \dd q^i \wedge \dd \dot{q}^j,
  	% + \frac{\partial^2 L  } {\partial \dot{q}^i \partial \dot{q}^j  } \dd q^i \wedge \dd \dot{q}^j
  \end{equation}
  where the Hessian matrix
  \begin{equation}
  	\left( W_{ij}  \right) = \left( \frac{\partial^2 L  } {\partial \dot{q}^i \partial \dot{q}^j  }  \right)
  \end{equation}
  is invertible since L is regular. Let
  \begin{equation}
  	\xi_\beta = A^i \frac{\partial  } {\partial q^i} + B^i \frac{\partial  } {\partial \dot q^i},
  \end{equation}
  then
  \begin{equation}
  	\contr{\xi_\beta}\omega_L = A^i \left(\frac{\partial^2 L  } {\partial \dot{q}^i \partial {q}^j  }
  	- \frac{\partial^2 L  } {\partial \dot{q}^j \partial {q}^i  }  \right)	\dd q^j
  	+ A^i W_{ij} \dd \dot q^j
  	- B^j W_{ij} \dd q^i,
  \end{equation}
  so $A^i=0$ and $B^j W_{ij}=-\beta_i$, and thus
  \begin{equation}
  	\xi_\beta = - \beta_i W^{ij} \frac{\partial  } {\partial \dot{q}^i}.
  \end{equation}
Then
\begin{equation}
	\xi_{L,\beta}=\dot{q}^i\parder{}{q^i}+\left(\xi^i-\beta_j W^{ij}\right)\parder{}{\dot{q}^i}.
  \label{local_expression_xi_L_beta}
\end{equation}
% With a computation analogous to the one done in the conservative case, one can show that the integral curves of $\xi_{L,\beta}$ 
Therefore, if $(q^i(t))$ is a solution of the SODE $\xi_{L,\beta}$, it satisfies
\begin{equation}
% \begin{aligned}
% &\der{q^i}{t}=\dot{q}^i,\\
% &\der{\dot{q}^i}{t}=\xi^i-\beta_jW^{ji}.
\ddot{q}^i  =\xi^i-\beta_jW^{ji}.
% \end{aligned}
\end{equation}
Recall that $\xi^i$ is given by
\begin{equation}
	\frac{\partial^2 L  } {\partial q^i \partial \dot{q}^j  } \dot{q}^i
	% +\frac{\partial^2 L  } {\partial \dot{q}^i \partial \dot{q}^j  } \xi^i
	+W_{ij} \xi^i
	=\frac{\partial L} {\partial q^j},
	% \quad 1\leq j\leq n=\dim Q.
  \label{equation_xi_i}
\end{equation}
so
\begin{equation}
\ddot{q}^i \frac{\partial ^2 L } {\partial \dot{q}^i \dot q^j}
+\dot{q}^i\frac{\partial ^2 L } {\partial {q}^i \dot q^j} -\parder{L}{q^j}+\beta_k W^{ki}W_{ij}=0,
\end{equation}
 which finally yields
\begin{equation}
\der{}{t}\left(\parder{L}{\dot{q}^i}\right)-\parder{L}{q^i}=-\beta_i. \label{forced_Euler_Lagrange_eq}
\end{equation} 
These are the \emph{forced Euler-Lagrange equations}. They can also be derived from the Lagrange-d'Alembert's principle \cite{goldstein_mecanica_1987,marsden_west_01,lew_variational_2004}.

 Notice that 
  \begin{equation}
    S(\xi_{L,\beta})=S(\xi_\beta)=\Delta,
  \end{equation}
  where $S$ and $\Delta$ are the vertical endomorphism and the Liouville vector field, respectively. Hence, $\xi_{L,\beta}$ is a SODE.

Since $L$ is assumed to be hyperregular, the Legendre transform $\Leg: TQ\to T^*Q$ is a diffeomorphism. Therefore the external force $\beta$ on $TQ$ is related with the external force $\gamma$ on $T^*Q$ by
\begin{equation}
	\beta = \Leg_* \gamma.
\end{equation}
\section[Symmetries and constants of the motion]{Symmetries and constants of the motion in the Lagrangian description} \label{section_Lagrangian_forced}

In this section, the results regarding symmetries and constants of the motion studied in section \ref{section_Lagrangian} are extended for forced Lagrangian systems. There are other approaches that can be found in the previous literature and have some relation with the one followed here. For instance, Cantrijn \cite{cantrijn_vector_1982} considers Lagrangian systems that depend explicitly on time, and defines a 2-form on $\RR\times TQ$ that depends on the Poincaré-Cartan 2-form of the Lagrangian and the semibasic 1-form representing the external force.  Alternatively, van der Schaft \cite{schaft_hamiltonian_1981,van_der_schaft_symmetries_1983} considers a framework steming from system theory, in which an ``observation'' manifold appears together with the usual state space, and obtains a Noether's theorem for Hamiltonian systems in this frame. Other approaches using variational tools can be found in reference~\cite{bahar_extension_1987}. However, in the approach followed here \cite{de_leon_symmetries_2021} no additional structure or objects are introduced besides the proper external force. 

 % Our approach is based on the symplectic structure obtained from a regular Lagrangian in the Lagrangian formulation as well as the geometry of the tangent bundle. There are other ways to treat with symmetries, for instance a variational approach like in Ref.~\cite{bahar_87}.

% The main result when we are in presence of symmetries is the celebrated Noether theorem. See Ref.~\cite{noether_71} for the original paper by E. Noether (see also Refs.~\cite{kosmann_11,neeman_99}). In our case, in spite of the existence of a non-conservative external force, we are able to extend Noether theorem and, furthermore, to obtain new conserved quantities. Our approach is just an appropriate modification of the well-known results for conservative mechanical systems (that means with no external forces) \cite{marmo_mukanda_86,djukic_vujanovic_75,sarlet_cantrijn_81,MdL_Rodrigues_89,carinena_94,carinena_lopez_martinez_89,carinena_martinez_89,ferrario_90,MdL_DMdD_94,de_leon_symmetries_1995,de_leon_symmetries_1996,lunev_90,marwat_07,prince_83,prince_85,roman-roy_20,sarlet_83,sarlet_cantrijn_81}. So, we first define point-base symmetries (that is, those provided by vector fields on the configuration manifold $Q$), and then symmetries on the tangent bundle. 

Consider a forced Lagrangian system $(L,\beta)$ with forced Euler-Lagrange vector field $\xi_{L,\beta}$.
% If $\sigma(t)$ is an integral curve of $\xi_{L,\beta}$ and $f$ is a function on $TQ$, then
If $f$ is a function on $TQ$, then
\begin{equation}
	\frac{\mathrm{d} } {\mathrm{d}t} f(\sigma(t))=\xi_{L,\beta} (f)(\sigma(t)),
\end{equation}
for any integral curve $\sigma(t)$ of $\xi_{L,\beta}$.
Thus $f$ is a \emph{constant of the motion} (or a \emph{conserved quantity}) if it is a first integral of $\xi_{L,\beta}$ (that is, if $\xi_{L,\beta}(f)=0$). Observe that now a constant of the motion will not be, in general, a first integral of $\xi_L$.

Suppose that, for a given coordinate $q^i$, $\partial L/\partial q^i=\beta_i$. Then
\begin{equation}
	\frac{\mathrm{d} } {\mathrm{d}t} \left( \frac{\partial L} {\partial \dot{q}^i}  \right)=0,
\end{equation}
so $p_i=\partial L/\partial q^i$ is a constant of the motion. This motivates generalising Noether's theorem for systems with external forces.

% A vector field $X$ on $Q$ will be called a \emph{symmetry of the forced Lagrangian} $(L,\beta)$ if $X^c(L)=\beta(X^c)$.

 Consider a vector field $X$ on $Q$. Then, by equation~\eqref{dynamics_vector_Lagrangian},
\begin{equation}
  \begin{aligned}
  (\dd E_L+\beta)(X^c)=& (\contr{\xi_{L,\beta}} \omega_L  ) (X^c)
  = \omega_L (\xi_{L,\beta},X^c)
  =-\dd \alpha_L(\xi_{L,\beta},X^c)\\
  =&-\xi_{L,\beta}(\alpha_L(X^c))+X^c(\alpha_L(\xi_{L,\beta}))+\alpha_L([\xi_{L,\beta},X^c]).
  \end{aligned}
\end{equation}
Now, since $\xi_{L,\beta}$ is a SODE, 
\begin{equation}
\alpha_L(\xi_{L,\beta})=\contr{\xi_{L,\beta}}(S^*\dd L)=(S\xi_{L,\beta})L=\Delta L.  
\end{equation}
Moreover, recall that $S X^c= X^v$. In addition, $[\xi_{L,\beta},X^c]$ is a vertical vector field, and thus $S[\xi_{L,\beta},X^c]=0$. Then
\begin{equation}
  (\dd E_L+\beta)(X^c)=-\xi_{L,\beta}(X^v L)+X^c(\Delta L).
\end{equation}
On the other hand, one can write
\begin{equation}
  \dd E_L (X^c)=X^c(E_L)=X^c(\Delta L)-X^c(L).
\end{equation}
Combining these last two equations yields
\begin{equation}
  \xi_{L,\beta} (X^v L)=X^c (L) -\beta (X^c).
\end{equation}
% Therefore $\xi_{L,\beta} (X^v L)$ vanishes if and only if  $X$ is a symmetry of the forced Lagrangian $(L,\beta)$.
In particular, the right-hand side vanishes if and only if the left-hand side does.
 In other words, the following result holds.

\begin{theorem}[Noether's theorem for forced Lagrangian systems]\label{Noether_th}
Let $X$ be a vector field on $Q$. Then $X^c(L)=\beta(X^c)$ if and only if $X^v(L)$ is a constant of the motion.
% A vector field $X$ on $Q$ is a symmetry of the forced Lagrangian $(L,\beta)$ if and only if $X^v(L)$ is a constant of the motion.
\end{theorem}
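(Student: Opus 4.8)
The plan is to follow the same route as the conservative Noether's theorem from Section~\ref{section_Lagrangian}, carrying the extra force term along. The key observation is that $X^v(L)$ is a constant of the motion precisely when $\xi_{L,\beta}(X^v(L))=0$, so it suffices to establish the master identity
\[
\xi_{L,\beta}(X^v(L)) = X^c(L) - \beta(X^c),
\]
from which the biconditional is immediate.

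To obtain this identity, I would first pair the dynamical equation~\eqref{dynamics_vector_Lagrangian}, $\contr{\xi_{L,\beta}}\omega_L = \dd E_L + \beta$, with the complete lift $X^c$ and write $\omega_L = -\dd\alpha_L$. Expanding $\dd\alpha_L(\xi_{L,\beta},X^c)$ with the intrinsic formula for the exterior derivative of a $1$-form (see~\eqref{Cartan_properties}) produces three terms, which I would evaluate using the structural facts already available: since $\xi_{L,\beta}$ is a SODE, $S\xi_{L,\beta}=\Delta$ and hence $\alpha_L(\xi_{L,\beta}) = \contr{\xi_{L,\beta}}(S^*\dd L) = (S\xi_{L,\beta})(L) = \Delta(L)$; similarly $S X^c = X^v$ gives $\alpha_L(X^c) = X^v(L)$; and the bracket term $\alpha_L([\xi_{L,\beta},X^c])$ vanishes because $\alpha_L$ is semibasic while $[\xi_{L,\beta},X^c]$ is vertical. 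This yields $(\dd E_L + \beta)(X^c) = -\xi_{L,\beta}(X^v(L)) + X^c(\Delta(L))$. Comparing this with $\dd E_L(X^c) = X^c(E_L) = X^c(\Delta(L)) - X^c(L)$ and subtracting isolates the master identity, and reading off when each side is zero gives the statement.

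The only point requiring a small argument is the verticality of $[\xi_{L,\beta},X^c]$. I would decompose $\xi_{L,\beta}=\xi_L+\xi_\beta$ as in Section~\ref{section_external_forces}: the bracket $[\xi_L,X^c]$ is vertical just as in the unforced case, while $[\xi_\beta,X^c]$ is vertical because $\xi_\beta$ is a vertical vector field and $X$ is a vector field on $Q$, so the $\partial/\partial q^i$ component of $X^c$ has coefficients independent of the velocities and the bracket produces no $\partial/\partial q^i$ terms; alternatively one may simply read it off the local expression~\eqref{local_expression_xi_L_beta}. With this in hand the argument is complete; I do not expect a genuine obstacle here beyond bookkeeping, the content being a direct generalization of the conservative proof with the term $-\beta(X^c)$ appearing on the right-hand side.
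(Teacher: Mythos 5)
Your proposal is correct and follows essentially the same route as the paper: contract the forced dynamical equation with $X^c$, expand $-\dd\alpha_L(\xi_{L,\beta},X^c)$ via the intrinsic formula, use $S\xi_{L,\beta}=\Delta$, $SX^c=X^v$ and the verticality of $[\xi_{L,\beta},X^c]$ to reach the identity $\xi_{L,\beta}(X^v(L))=X^c(L)-\beta(X^c)$. The only difference is cosmetic: you justify the verticality of the bracket (which the paper merely asserts), and this justification is sound.
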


A vector field $X$ on $Q$ such that $X^c(L)=\beta(X^c)$ will be called a \emph{symmetry of the forced Lagrangian} $(L,\beta)$.

Analogously to the conservative case, a vector field $X$ on $Q$ will be called a \emph{Lie symmetry} if $[X^c,\xi_{L,\beta}]=0$.
Observe that
 \begin{equation}
    \begin{aligned}
     \contr{[X^c,\xi_{L,\beta}]}\omega_{L}=& \liedv{X^c} (\contr{\xi_{L,\beta}}\omega_{L})- \contr{\xi_{L,\beta}} ( \liedv{X^c}\omega_{L}) \\
     =& \liedv{X^c} (\dd E_L+\beta)+ \contr{\xi_{L,\beta}} \dd ( \liedv{X^c}\alpha_{L}).
     % =&\dd (X^c(E_L))+\liedv{X^c}\beta.
    \end{aligned}
  \end{equation}
Suppose that $\dd(\liedv{X^c}\alpha_L)=0$. Then $X$ is a Lie symmetry if and only if
\begin{equation}
  \liedv{X^c}\beta= -\dd (X^c (E_L)). \label{proposition_Lie_symmetry}
\end{equation}
Moreover, suppose that $\liedv{X^c}\alpha_L$ is exact. Then there exists a function $f$ on $TQ$ such that 
\begin{equation}
\begin{aligned}
  \dd f&= \liedv{X^c}\alpha_L=\contr{X^c}(\dd \alpha_L)+\dd (\contr{X^c}\alpha_L)=\contr{X^c}(\dd \alpha_L)+\dd (\contr{X^c}S^* \dd L)
  \\
  &=\contr{X^c}(\dd \alpha_L)+\dd (\contr{SX^c}\dd L)=\contr{X^c}(\dd \alpha_L)+\dd (X^v L),
\end{aligned}
\end{equation}
so
\begin{equation}
  \contr{\xi_{L,\beta}}\contr{X^c}(\dd \alpha_L)=\contr{\xi_{L,\beta}}(\dd (f-X^vL))=\xi_{L,\beta}(f-X^v(L)),
\end{equation}
but
\begin{equation}
  \contr{\xi_{L,\beta}}\contr{X^c}\dd \alpha_L=
  \contr{X^c}\contr{\xi_{L,\beta}}\omega_L=\contr{X^c}(\dd E_L+\beta)=X^c(E_L)+\beta(X^c).
\end{equation}
This implies that $f-X^v(L)$ is a constant of the motion if and only if $X^c(E_L)+\beta(X^c)$ vanishes. A vector field $X$ on $Q$ such that $\liedv{X^c}\alpha_L$ is exact and $X^c(E_L)+\beta(X^c)$ vanishes will be called a \emph{Noether symmetry}.

It is worth remarking that, unlike in the conservative case, Noether symmetries are not necessarily Lie symmetries. As a matter of fact, if $X$ is a Noether symmetry,
 \begin{equation}
    \begin{aligned}
      \liedv{X^c} \beta&=\contr{X^c}(\dd \beta)+\dd (\contr{X^c}\beta)
      =\contr{X^c}(\dd \beta)+\dd(\beta(X^c))\\
      &=\contr{X^c}(\dd \beta)-\dd(X^c(E_L)),
      \end{aligned}
  \end{equation}
   and obviously $\liedv{X^c}\alpha_L$ is closed.
  Then, by equation \eqref{proposition_Lie_symmetry}, $X$ is a Lie symmetry if and only if $\contr{X^c}(\dd \beta)$ vanishes. 

 Observe that every symmetry of the forced Lagrangian is a Noether symmetry. In fact, if $f$ is a constant function, clearly $X^v (L)$ is a conserved quantity. Moreover,
\begin{equation}
  \liedv{X^c}\alpha_L=0,
\end{equation}
so
\begin{equation}
\begin{aligned}
  0&=(\liedv{X^c}\alpha_L)(\xi_{L,\beta})=X^c\left(\alpha_L(\xi_{L,\beta})\right)-\alpha_L ([X^c,\xi_{L,\beta}])\\
  &=X^c(\Delta L)-S[X^c,\xi_{L,\beta}]L=X^c(\Delta L),
  \end{aligned}
\end{equation}
and thus,
\begin{equation}
  0=X^c(E_L)+\beta(X^c)=X^c(\Delta L) -X^c(L)+\beta(X^c)=-X^c(L)+\beta(X^c).
\end{equation}

% \begin{remark}
% A Noether symmetry is a symmetry of the forced Lagrangian if and only if $\liedv{X^c}\alpha_L=0$. \label{Noether_L_symmetries}
% \end{remark}

These results can be summarized in the following way.
\begin{proposition}  \label{proposition_point_like}
Let $(L,\beta)$ be a forced Lagrangian system with forced Euler-Lagrange vector field $\xi_{L,\beta}$.
Let $X$ be a vector field on $Q$. Then the following results hold.
\begin{enumerate}
\item $X$ is a symmetry of the forced Lagrangian if and only if
$X^v(L)$ is a constant of the motion.
\item If $\liedv{X^c}\alpha_L$ is closed, then $X$ is a Lie symmetry if and only if 
\begin{equation}
  \liedv{X^c}\beta= -\dd (X^c (E_L)).
\end{equation}
\item If $X$ is a Noether symmetry, it is also a Lie symmetry if and only if
\begin{equation}
	\contr{X^c} \dd \beta =0.
\end{equation}
\item If $\liedv{X^c}\alpha_L=\dd f$,
then $X$ is a Noether symmetry if and only if $f-X^v(L)$ is a conserved quantity. 
\item If $X$ is a Noether symmetry, it is also a symmetry of the forced Lagrangian if and only if $\liedv{X^c}\alpha_L=0$. 
\end{enumerate}
\end{proposition}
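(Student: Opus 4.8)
The plan is to assemble identities that are essentially all contained in the discussion preceding the statement; the only genuinely new ingredient is a short homogeneity argument for item (v). Throughout I would use the defining relation $\contr{\xi_{L,\beta}}\omega_L=\dd E_L+\beta$, the non-degeneracy of $\omega_L$, the SODE identities for $\xi_{L,\beta}$ (so $S\xi_{L,\beta}=\Delta$, $\alpha_L(\xi_{L,\beta})=\Delta(L)$, $\contr{X^c}\alpha_L=X^v(L)$), the verticality of $[X^c,\xi_{L,\beta}]$, the standard lift facts that $X^c$ commutes with $\Delta$ and preserves $S$ (whence $\liedv{X^c}\alpha_L=S^*(\dd(X^c L))$), and the Cartan identities of \eqref{Cartan_properties}.

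Item (i) is immediate from Theorem~\ref{Noether_th}: by definition $X$ is a symmetry of $(L,\beta)$ exactly when $X^c(L)=\beta(X^c)$, which is precisely the condition under which $X^v(L)$ is a constant of the motion. For (ii) I would start from
\begin{equation*}
\contr{[X^c,\xi_{L,\beta}]}\omega_L=\liedv{X^c}\contr{\xi_{L,\beta}}\omega_L-\contr{\xi_{L,\beta}}\liedv{X^c}\omega_L=\liedv{X^c}(\dd E_L+\beta)+\contr{\xi_{L,\beta}}\dd(\liedv{X^c}\alpha_L),
\end{equation*}
using $\liedv{X^c}\omega_L=-\dd\liedv{X^c}\alpha_L$. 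When $\liedv{X^c}\alpha_L$ is closed the last term drops, so by non-degeneracy $[X^c,\xi_{L,\beta}]=0$ iff $\liedv{X^c}(\dd E_L+\beta)=\dd(X^c(E_L))+\liedv{X^c}\beta=0$. For (iv), with $\liedv{X^c}\alpha_L=\dd f$, Cartan's formula gives $\contr{X^c}\dd\alpha_L=\dd(f-X^v(L))$; contracting with $\xi_{L,\beta}$ and using $\contr{\xi_{L,\beta}}\contr{X^c}\dd\alpha_L=\contr{X^c}\contr{\xi_{L,\beta}}\omega_L=X^c(E_L)+\beta(X^c)$ yields $\xi_{L,\beta}(f-X^v(L))=X^c(E_L)+\beta(X^c)$, so $f-X^v(L)$ is conserved iff $X$ is a Noether symmetry.

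Item (iii) combines (ii) and (iv): if $X$ is a Noether symmetry then $\liedv{X^c}\alpha_L$ is exact, hence closed, so (ii) applies; moreover, Cartan's formula and $\beta(X^c)=-X^c(E_L)$ give $\liedv{X^c}\beta=\contr{X^c}\dd\beta+\dd(\beta(X^c))=\contr{X^c}\dd\beta-\dd(X^c(E_L))$, so the criterion of (ii) collapses to $\contr{X^c}\dd\beta=0$. For (v) assume $X$ is a Noether symmetry. If moreover $\liedv{X^c}\alpha_L=0$, evaluating on $\xi_{L,\beta}$ gives $0=X^c(\Delta L)-\alpha_L([X^c,\xi_{L,\beta}])=X^c(\Delta L)$ since the bracket is vertical, and then $0=X^c(E_L)+\beta(X^c)=X^c(\Delta L)-X^c(L)+\beta(X^c)$ forces $X^c(L)=\beta(X^c)$. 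Conversely, if $X^c(L)=\beta(X^c)$, then writing $X^c(E_L)=\Delta(X^c L)-X^c(L)$ (using $[X^c,\Delta]=0$) the Noether condition becomes $\Delta(X^c L)=0$; by Euler's homogeneity theorem, together with smoothness along the zero section, this forces $X^c L$ to be the pullback of a function on $Q$, whence $\liedv{X^c}\alpha_L=S^*(\dd(X^c L))=0$.

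The main obstacle, such as it is, is precisely that last step: one must notice that $\Delta(X^c L)=0$ makes $X^c L$ fibrewise constant and hence annihilated by $S^*\circ\dd$, and one must keep the two defining conditions of a Noether symmetry ($\liedv{X^c}\alpha_L$ exact, and $X^c(E_L)+\beta(X^c)=0$) carefully apart throughout. Everything else is Cartan-calculus bookkeeping, which I would carry out after recording the handful of lift identities once at the outset.
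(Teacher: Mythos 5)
Your proof is correct and, for items (i)--(iv), follows exactly the same route as the paper: item (i) is Theorem~\ref{Noether_th} plus the definition, (ii) is the contraction of $\contr{[X^c,\xi_{L,\beta}]}\omega_L$ via the Cartan identities, (iv) is the contraction of $\contr{X^c}\dd\alpha_L=\dd(f-X^v(L))$ with $\xi_{L,\beta}$, and (iii) combines the two. The one place you go beyond the paper is the ``only if'' direction of (v): the text preceding the proposition only establishes that a Noether symmetry with $\liedv{X^c}\alpha_L=0$ is a symmetry of the forced Lagrangian, and your converse argument --- deducing $\Delta(X^c L)=0$ from the two hypotheses and then using Euler's homogeneity theorem together with smoothness along the zero section to conclude $X^c L$ is fibrewise constant, hence $\liedv{X^c}\alpha_L=S^*(\dd(X^c L))=0$ --- is a genuine and correct completion of that equivalence, which is worth keeping explicit.
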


The results just exposed apply for point-like symmetries, that is, infinitesimal symmetries on $Q$. Analogous results can be obtained for non-point-like symmetries, that is, infinitesimal symmetries on $TQ$.

A vector field $\tilde X$ on $TQ$ will be called a \emph{dynamical symmetry} if $[\tilde X, \xi_{L,\beta}]=0$. It will be called a \emph{Cartan symmetry} if $\tilde{X}(E_L)+\beta(\tilde{X})=0$ and $\liedv{X^c}\alpha_L$ is exact.
% \begin{enumerate}
% \item a \emph{dynamical symmetry} if $[\tilde X, \xi_{L,\beta}]=0$,
% \item a \emph{Cartan symmetry} if $\tilde{X}(E_L)+\beta(\tilde{X})=0$ and $\liedv{X^c}\alpha_L$ is exact.
 
%  \end{enumerate} 

\begin{proposition} \label{proposition_non_point_like}
Consider a forced Lagrangian system $(L,\beta)$with forced Euler-Lagrange vector field $\xi_{L,\beta}$.
Let $X$ be a vector field on $Q$, and let $\tilde{X}$ be a vector field on $TQ$. Then the following statements hold.
\begin{enumerate}
 \item $X$ is a Lie symmetry if and only if $X^c$ is a dynamical symmetry.
 \item $X$ is a Noether symmetry if and only if $X^c$ is a Cartan symmetry.
 \item If $\liedv{\tilde{X}}\alpha_L$ is closed, then 
 $\tilde{X}$ is a dynamical symmetry if and only if
\begin{equation}
  \dd (\tilde{X} (E_L))=-\liedv{\tilde{X}}\beta.
  % \dd\contr{\tilde{X}}(\dd E_L+\beta)=0,
\end{equation}
\item A Cartan symmetry is a dynamical symmetry if and only if
\begin{equation}
  \contr{\tilde{X}}\dd \beta=0.
\end{equation}
\item If $\liedv{\tilde{X}}\alpha_L=\dd f$, then $\tilde{X}$ is a Cartan symmetry if and only if $f-(S\tilde{X})(L)$ is a constant of the motion.

\end{enumerate}

\end{proposition}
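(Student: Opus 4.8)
The plan is to obtain items i) and ii) directly from the definitions and to derive items iii)--v) by repeating, with only cosmetic changes, the computations already carried out in this section for point-like symmetries, now with the complete lift $X^c$ replaced by an arbitrary vector field $\tilde X$ on $TQ$ and with $X^v(L)$ replaced by $(S\tilde X)(L)$. For i), since $X^c$ is in particular a vector field on $TQ$, the equation $[X^c,\xi_{L,\beta}]=0$ defining a Lie symmetry is verbatim the condition for $X^c$ to be a dynamical symmetry. For ii), $X$ is a Noether symmetry exactly when $\liedv{X^c}\alpha_L$ is exact and $X^c(E_L)+\beta(X^c)=0$, which is the Cartan symmetry condition applied to $\tilde X=X^c$; so both equivalences hold by inspection.

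For iii), I would start from the identity $\contr{[\tilde X,\xi_{L,\beta}]}\omega_L=\liedv{\tilde X}(\contr{\xi_{L,\beta}}\omega_L)-\contr{\xi_{L,\beta}}(\liedv{\tilde X}\omega_L)$ taken from \eqref{Cartan_properties}, and substitute $\contr{\xi_{L,\beta}}\omega_L=\dd E_L+\beta$ together with $\liedv{\tilde X}\omega_L=-\dd(\liedv{\tilde X}\alpha_L)$, which uses $\omega_L=-\dd\alpha_L$ and $\liedv{\tilde X}\dd\alpha_L=\dd\liedv{\tilde X}\alpha_L$. This gives
\begin{equation}
  \contr{[\tilde X,\xi_{L,\beta}]}\omega_L=\dd(\tilde X(E_L))+\liedv{\tilde X}\beta+\contr{\xi_{L,\beta}}\dd(\liedv{\tilde X}\alpha_L).
\end{equation}
When $\liedv{\tilde X}\alpha_L$ is closed the last term vanishes, and, $\omega_L$ being non-degenerate, $[\tilde X,\xi_{L,\beta}]=0$ if and only if $\dd(\tilde X(E_L))=-\liedv{\tilde X}\beta$.

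Items iv) and v) then follow by the manipulations already used above for Noether symmetries. For iv), a Cartan symmetry satisfies $\liedv{\tilde X}\alpha_L=\dd f$ (hence closed) and $\beta(\tilde X)=-\tilde X(E_L)$; applying iii) and expanding $\liedv{\tilde X}\beta=\contr{\tilde X}\dd\beta+\dd(\beta(\tilde X))=\contr{\tilde X}\dd\beta-\dd(\tilde X(E_L))$, the dynamical-symmetry condition collapses to $\contr{\tilde X}\dd\beta=0$. For v), from $\dd f=\liedv{\tilde X}\alpha_L=\contr{\tilde X}\dd\alpha_L+\dd(\contr{\tilde X}\alpha_L)$ and $\alpha_L=S^*\dd L$, the identity $\contr{\tilde X}S^*=S^*\circ\contr{S\tilde X}$ (and $S^*$ acting as the identity on functions) gives $\contr{\tilde X}\alpha_L=(S\tilde X)(L)$, so contracting with $\xi_{L,\beta}$ yields $\xi_{L,\beta}(f-(S\tilde X)(L))=\contr{\xi_{L,\beta}}\contr{\tilde X}\dd\alpha_L$; evaluating the right-hand side using $\dd\alpha_L=-\omega_L$ and $\contr{\xi_{L,\beta}}\omega_L=\dd E_L+\beta$ gives $\tilde X(E_L)+\beta(\tilde X)$, so $f-(S\tilde X)(L)$ is a constant of the motion if and only if $\tilde X(E_L)+\beta(\tilde X)=0$, which, given the standing exactness hypothesis, is exactly the assertion that $\tilde X$ is a Cartan symmetry. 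I do not expect a genuine obstacle here; the only points requiring care are the sign bookkeeping coming from $\omega_L=-\dd\alpha_L$ and the correct use of $\contr{\tilde X}S^*=S^*\circ\contr{S\tilde X}$, precisely as in the point-like derivations that this proposition parallels.
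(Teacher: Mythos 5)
Your proposal is correct and follows exactly the route the paper takes: items i) and ii) by inspection of the definitions, and items iii)--v) by repeating the point-like computations of Section \ref{section_Lagrangian_forced} with $X^c$ replaced by $\tilde X$ and $X^v(L)$ by $(S\tilde X)(L)$, which is precisely what the paper means when it says the remaining assertions "can be proven in a completely analogous manner." The sign bookkeeping and the use of $\contr{\tilde X}S^*=S^*\circ\contr{S\tilde X}$ are handled correctly throughout.
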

The first two assertions are derived straightforwardly from the definitions. The rest can be proven in a completely analogous manner to the corresponding results for point-like symmetries. As it was discussed in the conservative case (see Section \ref{section_Lagrangian}), Theorem \ref{Noether_th} cannot be generalised for symmetries on $TQ$, since $[\xi_{L,\beta},\tilde{X}]$ is not a vertical vector field for a general $\tilde{X}$ on $TQ$.

\begin{example}[Harmonic oscillator with a gyroscopic force]
Consider an $n$-dimensional harmonic oscillator, with mass and frequency equal to 1 for simplicity's sake, so that its Lagrangian is
\begin{equation}
	L = \frac{1}{2} \sum_{i=1}^n \left[ \left( \dot q^i  \right)^2 - \left(  q^i  \right)^2  \right].
	% = \frac{1}{2} \left[ \left( \dot q^1  \right)^2 +\left( \dot q^2  \right)^2 - \left(  q^1  \right)^2 - \left(  q^2  \right)^2  \right].
\end{equation}
Then the Poincaré-Cartan forms are simply
\begin{equation}
	\alpha_L = \dot{q}^i \dd q^i,\quad \omega_L = \dd q^i \wedge \dd \dot q^i,
\end{equation}
and the energy of the system is
\begin{equation}
	E_L = \frac{1}{2} \sum_{i=1}^n \left[ \left( \dot q^i  \right)^2 + \left(  q^i  \right)^2  \right].
\end{equation}
Suppose that it is subject to a gyroscopic force \cite{esen_geometrical_2021}, that is, an external force $\beta$ of the form
\begin{equation}
	\beta = s_{ij}(q) \dot{q}^j \dd q^i,
\end{equation}
where $s_{ij}$ is skew-symmetric. 
 The forced Euler-Lagrange vector field of the system $(L,\beta)$ is thus
\begin{equation}
	\xi_{L,\beta} = \dot{q}^i \frac{\partial  } {\partial q^i} 	
	- \left( q^i + s_{ij}   \dot{q}^j \right) \frac{\partial  } {\partial \dot q^i}
	%  = \dot{q}^1 \frac{\partial  } {\partial q^1} 	
	% +\dot{q}^2 \frac{\partial  } {\partial q^2} 	
	% - \left( q^1 + g   \dot{q}^2 \right) \frac{\partial  } {\partial \dot q^1}
	% - \left( q^2 - g   \dot{q}^1 \right) \frac{\partial  } {\partial \dot q^2},
\end{equation} 
which can be straightforwardly derived from equations \eqref{local_expression_xi_L_beta} and \eqref{equation_xi_i}. Consider now the two-dimensional case. Obviously, in that case there is only one independent non-vanishing component of the gyroscopic force: $s_{12}(q)\eqqcolon g(q)$. Then
\begin{equation}
	\beta= g(q)\left(   \dot{q}^2 \dd q^1 - \dot{q}^1 \dd q^2\right),
\end{equation}
and
\begin{equation}
	\xi_{L,\beta} 
	 = \dot{q}^1 \frac{\partial  } {\partial q^1} 	
	+\dot{q}^2 \frac{\partial  } {\partial q^2} 	
	- \left( q^1 + g   \dot{q}^2 \right) \frac{\partial  } {\partial \dot q^1}
	- \left( q^2 - g   \dot{q}^1 \right) \frac{\partial  } {\partial \dot q^2}.
\end{equation} 
 Consider the vector field $\tilde X$ on $TQ$ given by
\begin{equation}
	\tilde X = \left( g \dot{q}^1 +q^2 \right) \frac{\partial  } {\partial q^1}
	+\left(  g \dot{q}^2 -q^1 \right) \frac{\partial  } {\partial q^2}
	+\dot{q}^2 \frac{\partial  } {\partial \dot{q}^1}
	-\dot{q}^1 \frac{\partial  } {\partial \dot{q}^2}
\end{equation}
Clearly, this vector field verifies
\begin{equation}
	\tilde{X}(E_L)+\beta(\tilde{X})=0.
\end{equation}
In addition,
\begin{equation}
	\contr{\tilde{X}} \alpha_L = \dot q^1 \left( g\dot q^1 +q^2  \right) + \dot q^2 \left( g \dot q^2 -q^1  \right),
\end{equation}
and
\begin{equation}
	\contr{X}\omega_L 
	= \left( g\dot q^1 +q^2  \right) \dd \dot{q}^1 + \left( g \dot q^2 -q^1  \right) \dd \dot q^2
	-\dot{q}^2 \dd q^1 + \dot{q}^1 \dd q^2
\end{equation}
If $\beta$ does not depend on $q$, that is, $g$ is a constant, one can write
\begin{equation}
	\contr{X}\omega_L = \dd \left[
	 \frac{1}{2} g \left( \dot{q}^1  \right)^2 +\frac{1}{2} g \left( \dot{q}^2  \right)^2 
	 -q^1 \dot{q}^2 + q^2 \dot{q}^1
	  \right],
\end{equation}
and hence
\begin{equation}
	\liedv{\tilde{X}}\alpha_L 
	= \dd (\contr{\tilde X}\alpha_L) - \contr{\tilde{X}} \omega_L
	= \dd f,
\end{equation}
where
\begin{equation}
	f =  \frac{1}{2} g \left( \dot{q}^1  \right)^2 +\frac{1}{2} g \left( \dot{q}^2  \right)^2.
\end{equation}
Therefore, when $\beta$ is independent of $q$, $\tilde{X}$ is a Cartan symmetry of $(L,\beta)$, and
\begin{equation}
	\ell= f - (S\tilde X) (L) = q^1 \dot q^2 -q^2 \dot q^1 
	-\frac{g}{2} \left[ \left( \dot{q}^1  \right)^2 +  \left( \dot{q}^2  \right)^2  \right] 
\end{equation}
is its associated constant of the motion. Observe that when $g=0$ the conservation of angular momentum is recovered. In fact, when $g=0$ one can write $\tilde{X}=X^c$ for
\begin{equation}
	X = q^2 \frac{\partial  } {\partial q^1} -q^1 \frac{\partial  } {\partial q^2}, 
\end{equation}
which is the generator of rotations on the $(q_1,q_2)$ plane.
\end{example}

The types of symmetries for a forced Lagrangian system $(L,\beta)$ on $TQ$ are summarized in Figure \ref{figure_diagram_forced}.
Observe that, unlike in the conservative case (\oldemph{cf.} Figure \ref{figure_diagram_conservative}), Noether symmetries are not a subset of Lie symmetries, and Cartan symmetries are not a subset of dynamical symmetries neither. Obviously, when $\beta$ vanishes, $\xi_{L,\beta}=\xi_L$ and the conservative results are recovered.

\begin{figure}[b]
\centering
\includegraphics[width=.7\linewidth]{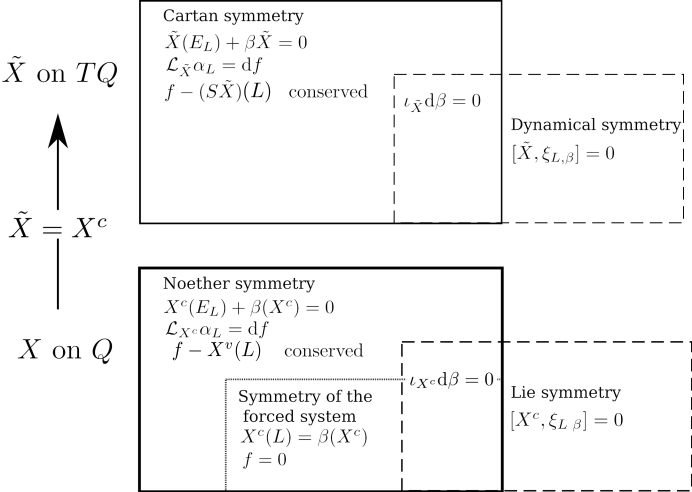}
\caption{Types of symmetries on $Q$ and on $TQ$. The complete lifts of Lie symmetries and Noether symmetries correspond to dynamical symmetries and Cartan symmetries, respectively. The symmetries of the forced Lagrangian system are the subset of Noether symmetries for which $\liedv{X^c}\alpha_L=0$, so $f$ is a constant that can be taken as 0 without loss of generality. The Noether symmetries that satisfy $\contr{X^c}\dd \beta=0$ are also Lie symmetries, and analogously with the intersection between Cartan and dynamical symmetries.}
\label{figure_diagram_forced}
\end{figure}

Similar results can be obtained in the Hamiltonian framework. Consider a forced Hamiltonian system on $(T^*Q,\omega)$ with Hamiltonian function $H$ and external force $\gamma$. Denote by $X_{H,\gamma}$ the dynamical vector field associated to $(H,\gamma)$. A \emph{constant of the motion} (or a \emph{conserved quantity}) is then a function $f$ on $T^*Q$ which is a first integral of $X_{H,\gamma}$ (that is, $X_{H,\gamma}(f)=0$). Recall that the Poisson bracket of two functions $f$ and $g$ on $(T^*Q,\omega)$ is given by
\begin{equation}
	\left\{f,g  \right\} = \omega(X_f,X_g),
\end{equation}
where $X_f$ and $X_g$ are the Hamiltonian vector fields associated with $f$ and $g$, respectively. 
Clearly,
\begin{equation}
\begin{aligned}
  X_{H,\gamma}(f)&=\contr{X_{H,\gamma}}\dd f= \contr{X_{H,\gamma}}( \contr{X_f}\omega_Q)=-\contr{X_f}(\contr{X_{H,\gamma}}\omega_Q)\\
  &=-\omega_Q(X_H,X_f)-\gamma(X_f)= \left\{f,H  \right\}-\gamma(X_f),
\end{aligned}
\end{equation}
and hence $f$ is a constant of the motion if and only if
\begin{equation}
  \left\{f,H  \right\}=\gamma(X_f).
\end{equation}
The Hamiltonian counterpart of Proposition \ref{proposition_non_point_like} is as follows.
\begin{proposition}
Let $\hat{X}$ be a vector field on $T^*Q$. Then the following assertions hold:
\begin{enumerate}
\item If $\liedv{\hat{X}}\theta $ is closed, then $\hat{X}$ commutes with $X_{H,\gamma}$ if and only if
\begin{equation}
  \dd (\hat{X} (H))=-\liedv{\hat{X}}\gamma.
\end{equation}
\item If $\liedv{\hat{X}}\theta =\dd f$, then $\hat{X}(H)+\gamma(\hat{X})=0$ if and only if $f-\theta (\hat{X})$ is a constant of the motion.
\item If the previous statement holds, $\hat{X}$ commutes with $X_{H,\gamma}$ if and only if 
\begin{equation}
  \contr{\hat{X}}\dd \gamma=0.
\end{equation}
\end{enumerate}
\end{proposition}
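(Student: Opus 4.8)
The plan is to compute the obstruction to commutativity, $\contr{[\hat{X},X_{H,\gamma}]}\omega$, by Cartan calculus, mirroring the point-like Lagrangian argument of the previous section with $(\alpha_L,\omega_L,E_L,\xi_{L,\beta})$ replaced by $(\theta,\omega,H,X_{H,\gamma})$. Using the identity $\contr{[\hat{X},X_{H,\gamma}]}\omega=\liedv{\hat{X}}\contr{X_{H,\gamma}}\omega-\contr{X_{H,\gamma}}\liedv{\hat{X}}\omega$ together with $\contr{X_{H,\gamma}}\omega=\dd H+\gamma$ and $\liedv{\hat{X}}\dd H=\dd(\hat{X}(H))$, one obtains
\[
  \contr{[\hat{X},X_{H,\gamma}]}\omega=\dd(\hat{X}(H))+\liedv{\hat{X}}\gamma-\contr{X_{H,\gamma}}\liedv{\hat{X}}\omega.
\]
Since $\omega=-\dd\theta$, we have $\liedv{\hat{X}}\omega=-\dd\liedv{\hat{X}}\theta$, so when $\liedv{\hat{X}}\theta$ is closed the last term drops out, and non-degeneracy of $\omega$ converts $[\hat{X},X_{H,\gamma}]=0$ into $\dd(\hat{X}(H))=-\liedv{\hat{X}}\gamma$. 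This is assertion (i).

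For assertion (ii), I would first use the hypothesis $\liedv{\hat{X}}\theta=\dd f$ and Cartan's formula to write $\dd f=\liedv{\hat{X}}\theta=-\contr{\hat{X}}\omega+\dd(\theta(\hat{X}))$, hence $\contr{\hat{X}}\omega=\dd(\theta(\hat{X})-f)$. Then a direct computation gives
\[
  X_{H,\gamma}(f-\theta(\hat{X}))=-\contr{X_{H,\gamma}}\contr{\hat{X}}\omega=\contr{\hat{X}}\contr{X_{H,\gamma}}\omega=\contr{\hat{X}}(\dd H+\gamma)=\hat{X}(H)+\gamma(\hat{X}),
\]
so $f-\theta(\hat{X})$ is a constant of the motion precisely when $\hat{X}(H)+\gamma(\hat{X})=0$.

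For assertion (iii), observe that $\liedv{\hat{X}}\theta=\dd f$ is in particular closed, so (i) applies and $\hat{X}$ commutes with $X_{H,\gamma}$ iff $\dd(\hat{X}(H))=-\liedv{\hat{X}}\gamma$; under the hypothesis of (ii) one has $\hat{X}(H)=-\gamma(\hat{X})=-\contr{\hat{X}}\gamma$, so $\dd(\hat{X}(H))=-\dd(\contr{\hat{X}}\gamma)$, while $\liedv{\hat{X}}\gamma=\contr{\hat{X}}\dd\gamma+\dd(\contr{\hat{X}}\gamma)$ by Cartan's formula; substituting both into the commutativity condition leaves exactly $\contr{\hat{X}}\dd\gamma=0$.

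The computations are essentially routine transcriptions of the Lagrangian proofs, so there is no serious obstacle; the only points that require a little care are keeping the sign in $\omega=-\dd\theta$ straight, noting that $\liedv{\hat{X}}\omega=0$ needs closedness (not mere exactness) of $\liedv{\hat{X}}\theta$ --- which is automatic under the hypothesis of (ii) and (iii) --- and reading ``the previous statement holds'' in (iii) as the standing assumptions $\liedv{\hat{X}}\theta=\dd f$ and $\hat{X}(H)+\gamma(\hat{X})=0$, so that parts (i) and (ii) can be chained together.
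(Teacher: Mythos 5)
Your proposal is correct and follows essentially the same route the paper takes: the paper derives these results in the Lagrangian setting (Proposition \ref{proposition_non_point_like} and the surrounding computations) and obtains the Hamiltonian statement by the very transcription $(\alpha_L,\omega_L,E_L,\xi_{L,\beta})\mapsto(\theta,\omega,H,X_{H,\gamma})$ that you carry out, with all signs and the reading of ``the previous statement holds'' handled as intended.
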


If the Lagrangian is hyperregular, the symmetries and constants of the motion on $TQ$ can be easily associated with their counterparts on $T^*Q$ via the Legendre transform. Suppose that $(L,\beta)$ is a Lagrangian system such that $H\circ \Leg = E_L$ and $\Leg^* \gamma=\beta$. Let $\tilde{X}$ be a vector field on $TQ$ and $\hat{X}$ the $\Leg$-related vector field on $T^*Q$. Then:
\begin{enumerate}[label=\roman*)]
\item $\hat{X}$ commutes with $X_{H,\gamma}$ if and only if $\tilde{X}$ is a dynamical symmetry of $(L,\beta)$.
\item $\liedv{\hat{X}}\theta =\dd f$ if and only if $\liedv{\tilde{X}}\alpha_L=\dd g$, where $g=f\circ \Leg$.
\item Suppose that $\liedv{\hat{X}}\theta =\dd f$. Then the following assertions are equivalent.
\begin{enumerate}[label=\alph*)]
\item $\hat{X}(H)+\gamma(\hat{X})=0$.
\item $f-\theta (\hat{X})$ is a conserved quantity.
\item $\tilde{X}(E_L)+\beta(\tilde{X})=0$.
\item $f\circ \Leg-\alpha_L(\tilde{X})$ is a conserved quantity.
\end{enumerate}

\end{enumerate}

\section{Momentum map and reduction}\label{section_reduction_forced}
In this section, the symplectic reduction method explained in Section \ref{section_reduction} is extended for forced Lagrangian systems. Consider a Lie group $G$ with action $\Phi_g$ on $Q$, and consider the lifted action to $TQ$ by tangent prolongation. In what follows, every group action will be assumed to be free and proper. Denote by $\mathfrak{g}$ the Lie algebra of $G$, and by $\mathfrak{g}^*$ the dual of $\mathfrak{g}$.

 As it was previously explained, the symplectic point reduction (see Theorem \ref{theorem_symplectic_point_reduction}) requires the action of $G$ to be symplectic. In the conservative case, if the Lagrangian $L$ on $TQ$ is $G$-invariant, the dynamics given by $\xi_L$ on $TQ$ can be reconstructed from the dynamics given by $\xi_{L_\mu}$ on the reduced space $(TQ)_\mu$. Since the symplectic structure $\omega_L$ on $TQ$ is induced by the Lagrangian $L$, in the case of a forced Lagrangian system $(L,\beta)$, reduction will require $L$ and $\beta$ to be invariant under $G$ (or a subgroup of $G$) independently.

 Hereinafter, suppose that the $G$-action leaves $L$ invariant, and hence, $\alpha_L$ and $\omega_L$ are invariant.  Recall that then the natural momentum map 
\begin{equation}
\begin{aligned}
  J:TQ&\to \mathfrak{g}^*,\\
  J(\xi)&=\alpha_L\left(\xi_Q^c  \right),
\end{aligned}
\end{equation}
is $\operatorname{Ad}^*$-equivariant. For each $\xi\in \mathfrak{g}$ and $v_q\in TQ$, $J^\xi:TQ\to \RR$ is the function given by
\begin{equation}
  J^\xi (v_q) = \left\langle J(v_q),\xi  \right\rangle   \label{J_xi_function}.
\end{equation} 
In the conservative case, $J^\xi$ is a constant of the motion for each $\xi\in \mathfrak{g}$. Consider a forced Lagrangian system $(L,\beta)$ with forced Euler-Lagrange vector field $\xi_{L,\beta}$. Clearly, 
\begin{equation}
  J^\xi= \alpha_L (\xi_Q^c) = \contr{\xi_Q^c}\alpha_L,
\end{equation}
so
\begin{equation}
 \dd J^\xi = \dd (\contr{\xi_Q^c}\alpha_L) 
 = \liedv{\xi_Q^c}\alpha_L - \contr{\xi_Q^c} \dd \alpha_L
 = \contr{\xi_Q^c}\omega_L 
 \label{Hamiltonian_vector_momentum_map},
\end{equation}
since $\alpha_L$ is $\mathfrak{g}$-invariant.
Contracting this equation with $\xi_{L,\beta}$, one gets
\begin{equation}
  \contr{\xi_{L,\beta}}\dd J^\xi=\xi_{L,\beta}(J^\xi),
\end{equation}
on the left-hand side, and
\begin{equation}
  \contr{\xi_{L,\beta}}\contr{\xi_Q^c}\omega_L
  =-\contr{\xi_Q^c}\contr{\xi_{L,\beta}}\omega_L
  =-\contr{\xi_Q^c}(\dd E_L+\beta)=-\xi_Q^c(E_L)-\beta(\xi_Q^c),
\end{equation}
on the right-hand side.
Thus $J^\xi$ is a conserved quantity for $\xi_{L,\beta}$ if and only if
\begin{equation}
  \xi_Q^c(E_L)+\beta(\xi_Q^c)=0. \label{conservation_dynamics_algebra}
\end{equation}
Now observe that
\begin{equation}
\begin{aligned}
  \xi_Q^c(E_L)&=\xi_Q^c(\Delta(L))-\xi_Q^c(L)=\xi_Q^c(\Delta(L))
  =\liedv{\xi_Q^c}(\Delta(L))=\liedv{\xi_Q^c}(\contr{\Delta}\dd L)\\
  &=\contr{[\xi_Q^c,\Delta]}\dd L+\contr{\Delta}(\liedv{\xi_Q^c} \dd L)
  =\contr{[\xi_Q^c,\Delta]}\dd L
  =[\xi_Q^c,\Delta](L),
\end{aligned}
\end{equation}
since $\xi_Q^c(L)=0$ by the $G$-invariance of $L$, but $[\xi_Q^c,\Delta]=0$, and thus
\begin{equation}
  \xi_Q^c(E_L)=0 
\end{equation}
for each $\xi\in \mathfrak{g}$, that is, $E_L$ is $G$-invariant. By equation \eqref{conservation_dynamics_algebra}, $J^\xi$ is a conserved quantity for $\xi_{L,\beta}$ if and only if
\begin{equation}
  \beta(\xi_Q^c) = 0. \label{conservation_dynamics_algebra_simplified}
\end{equation}
This result can also be obtained from a variational approach (see reference \cite[Theorem 3.1.1]{marsden_west_01}).

In addition,
\begin{equation}
  \liedv{\xi_Q^c} \beta = \dd (\contr{\xi_Q^c} \beta) + \contr{\xi_Q^c} \dd \beta = \dd (\beta(\xi_Q^c)) + \contr{\xi_Q^c} \dd \beta.
\end{equation}
If equation \eqref{conservation_dynamics_algebra_simplified} holds, then $\beta$ is $\xi$-invariant (that is, $\liedv{\xi_Q^c} \beta = 0$) if and only if
\begin{equation}
  \contr{\xi_Q^c} \dd \beta = 0. \label{condition_subalgebra_Lie_symmetry}
\end{equation}
This motivates defining the vector subspace $\mathfrak{g}_\beta$ of $\mathfrak{g}$ given by
\begin{equation}
  \mathfrak{g}_\beta = \left\{\xi \in \mathfrak{g} \mid
    \beta(\xi_Q^c)=0,\ \contr{\xi_Q^c}\dd \beta = 0
    \right\}.
\end{equation}
Furthermore, for each $\xi,\eta\in \mathfrak{g}_\beta$,
\begin{equation}
\begin{aligned}
  \beta \left([\xi_Q,\eta_Q]^c  \right) &=  \beta \left([\xi_Q^c,\eta_Q^c]  \right) 
  = \contr{[\xi_Q^c,\eta_Q^c]}\beta
  = \liedv{\xi_Q^c} \contr{\eta_Q^c} \beta 
    - \contr{\eta_Q^c} \liedv{\xi_Q^c}  \beta\\
  &= \xi_Q^c (\beta(\eta_Q^c)) - \eta_Q^c (\beta(\xi_Q^c)) 
  - \contr{\eta_Q^c} (\contr{\xi_Q^c} \dd \beta)
  = 0 
    , 
\end{aligned}
\end{equation}
% by Eqs.~\eqref{lemma_beta_invariant} and \eqref{lemma_J_conserved}.
and similarly,
\begin{equation}
\begin{aligned}
   \contr{[\xi_Q,\eta_Q]^c} \dd\beta
   &= \contr{[\xi_Q^c,\eta_Q^c]} \dd\beta
   = \liedv{\xi_Q^c} \contr{\eta_Q^c} \dd \beta 
    - \contr{\eta_Q^c}\liedv{\xi_Q^c}\dd \beta\\
    &= \liedv{\xi_Q^c} \contr{\eta_Q^c} \dd \beta 
    - \contr{\eta_Q^c}\dd \liedv{\xi_Q^c} \beta=0.
\end{aligned}
\end{equation}
Then $[\xi,\eta]\in \mathfrak{g}_\beta$ for each $\xi,\eta\in \mathfrak{g}_\beta$, and hence $\mathfrak{g}_\beta$ is a Lie subalgebra of $\mathfrak{g}$.

Since $\alpha_L$ is invariant, 
\begin{equation}
  \liedv{\xi_Q^c}\alpha_L=0
\end{equation}
for each $\xi \in \mathfrak{g}$. In combination with equation \eqref{conservation_dynamics_algebra}, this implies that $\xi_Q$ is a Noether symmetry of $(L,\beta)$. In fact, it is also a symmetry of the forced Lagrangian
(see Proposition \ref{proposition_point_like} v) ). By equation \eqref{condition_subalgebra_Lie_symmetry}, $\xi_Q$ is also a Lie symmetry. As a matter of fact, $\mathfrak{g}_\beta$ is the subspace of $\xi\in \mathfrak{g}$ such that $\xi_Q$ is both a Lie symmetry and a Noether symmetry of $(L,\beta)$.

Symplectic point reduction (Theorem \ref{theorem_symplectic_point_reduction}) can be generalized for forced Lagrangian systems as follows.

\begin{theorem}\label{theorem_reduction}
 Let  $G_\beta\subset G$ be the Lie subgroup generated by $\mathfrak{g}_\beta$ and $J_\beta:TQ\to \mathfrak{g}_\beta^*$ the reduced momentum map. Let $\mu\in \mathfrak{g}_\beta^*$ be a regular value of $J_{\beta}$ and $(G_\beta)_\mu$ the isotropy group in $\mu$.
 Then the following results hold:
\begin{enumerate}[label=\roman*)]
\item $J_\beta^{-1}(\mu)$ is a submanifold of $TQ$ and $\xi_{L,\beta}$ is tangent to it. 
\item The quotient space $(TQ)_\mu\coloneqq J_\beta^{-1}(\mu)/(G_\beta)_\mu$ is endowed with an induced symplectic structure $\omega_\mu$, namely 
\begin{equation}
  \pi_\mu^*\omega_\mu = \incl_\mu^*\omega_L,
\end{equation}
where $\pi_\mu:J_\beta^{-1}(\mu)\to (TQ)_\mu$ and $\incl_\mu:J_\beta^{-1}(\mu)\hookrightarrow TQ$ denote the projection and the inclusion, respectively. 
\item $L$ induces a function $L_\mu:(TQ)_\mu\to \RR$ defined by
\begin{equation}
  L_\mu \circ \pi_\mu = L\circ \incl_\mu. \label{equation_reduced_Lagrangian}
\end{equation}
\item One can also introduce a function $E_{L_\mu}:(TQ)_\mu\to \RR$, given by $E_{L_\mu}=\Delta_\mu(L_\mu)-L_\mu$, which satisfies
\begin{equation}
  E_{L_\mu} \circ \pi_\mu = E_L\circ \incl_\mu. \label{reduced_energy}
\end{equation}
\item $\beta$ induces a reduced semibasic 1-form $\beta_\mu$ on $(TQ)_\mu$, given by
\begin{equation}
  \pi_\mu^* \beta_\mu=\incl_\mu^*\beta.
\end{equation}
\end{enumerate}
\end{theorem}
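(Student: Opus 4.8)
The plan is to derive ii)--v) by applying the symplectic point reduction theorem (Theorem \ref{theorem_symplectic_point_reduction}) to the symplectic manifold $(TQ,\omega_L)$ carrying the lifted action of the subgroup $G_\beta$, and then to push the remaining objects $L$, $E_L$, $\beta$, $\xi_{L,\beta}$ down to the quotient using the invariance properties already established before the theorem. First I would verify the hypotheses of Theorem \ref{theorem_symplectic_point_reduction}: the lifted $G$-action is symplectic because $L$ is $G$-invariant, so $\Phi_g^*\alpha_L=\alpha_L$ and hence $\Phi_g^*\omega_L=-\dd\Phi_g^*\alpha_L=\omega_L$; restricting to $G_\beta\subset G$ keeps the action symplectic, and it remains free and proper (with the usual caveat noted below). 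Moreover, since $\omega_L=-\dd\alpha_L$ and $\alpha_L$ is $G_\beta$-invariant, the associated natural momentum map for the $G_\beta$-action is precisely $J_\beta$, the composition of $J$ with the dual of the inclusion $\mathfrak{g}_\beta\hookrightarrow\mathfrak{g}$, and it is $\operatorname{Ad}^*$-equivariant.

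For i), the fact that $J_\beta^{-1}(\mu)$ is a submanifold is immediate since $\mu$ is a regular value. For the tangency of $\xi_{L,\beta}$ I would reuse the computation made just before the theorem: for every $\xi\in\mathfrak{g}$,
\[
 \xi_{L,\beta}(J^\xi)=\contr{\xi_{L,\beta}}\contr{\xi_Q^c}\omega_L=-\contr{\xi_Q^c}\contr{\xi_{L,\beta}}\omega_L=-\contr{\xi_Q^c}(\dd E_L+\beta)=-\beta(\xi_Q^c),
\]
using $\dd J^\xi=\contr{\xi_Q^c}\omega_L$ and $\xi_Q^c(E_L)=0$. If $\xi\in\mathfrak{g}_\beta$ then $\beta(\xi_Q^c)=0$, so $\xi_{L,\beta}(J_\beta^\xi)=0$ identically; since $T_x J_\beta^{-1}(\mu)=\bigcap_{\xi\in\mathfrak{g}_\beta}\ker\dd J_\beta^\xi(x)$, the vector field $\xi_{L,\beta}$ is tangent to $J_\beta^{-1}(\mu)$. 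Then ii) is exactly Theorem \ref{theorem_symplectic_point_reduction} i) for the triple $(TQ,\omega_L,G_\beta)$: the quotient $(TQ)_\mu=J_\beta^{-1}(\mu)/(G_\beta)_\mu$ inherits the symplectic form $\omega_\mu$ characterized by $\pi_\mu^*\omega_\mu=\incl_\mu^*\omega_L$.

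Items iii)--v) are then descent-by-invariance statements. For iii), $L$ is $G$-invariant, hence $(G_\beta)_\mu$-invariant, so $L\circ\incl_\mu$ is constant on the $(G_\beta)_\mu$-orbits of $J_\beta^{-1}(\mu)$ and descends to a unique $L_\mu$ with $L_\mu\circ\pi_\mu=L\circ\incl_\mu$. For iv) I would define $E_{L_\mu}$ directly by \eqref{reduced_energy}; this is legitimate because $E_L$ is $G$-invariant --- indeed $\xi_Q^c(E_L)=\xi_Q^c(\Delta(L))=[\xi_Q^c,\Delta](L)=0$, as shown before the theorem --- hence $(G_\beta)_\mu$-invariant, and the identification with $\Delta_\mu(L_\mu)-L_\mu$ holds whenever a reduced dilation field $\Delta_\mu$ is available on $(TQ)_\mu$. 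For v), $\beta$ is $G_\beta$-invariant (its Lie derivative along $\xi_Q^c$ vanishes for all $\xi\in\mathfrak{g}_\beta$ and $G_\beta$ is connected), and $\contr{\eta_Q^c}\beta=\beta(\eta_Q^c)=0$ for every $\eta$ in the Lie algebra of $(G_\beta)_\mu\subset G_\beta$; therefore $\incl_\mu^*\beta$ is both horizontal and invariant for $\pi_\mu$, hence basic, and descends to the unique 1-form $\beta_\mu$ with $\pi_\mu^*\beta_\mu=\incl_\mu^*\beta$, semibasicity being inherited fibrewise.

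The bookkeeping of the quotient maps is routine; the points that need care are, first, ensuring the $G_\beta$-action is still proper and the quotient smooth when $G_\beta$ is merely the connected immersed subgroup integrating $\mathfrak{g}_\beta$ (one either assumes $G_\beta$ embedded/closed, or argues locally with slices), and second, the precise meaning of the reduced dilation field $\Delta_\mu$ entering the formula for $E_{L_\mu}$, since $\Delta$ itself need not be tangent to $J_\beta^{-1}(\mu)$ --- defining $E_{L_\mu}$ through \eqref{reduced_energy} sidesteps this last subtlety.
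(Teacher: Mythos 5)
Your proposal is correct and for parts i)--iii) coincides with what the paper does (the paper simply delegates those to the standard symplectic point reduction references, while you also spell out the tangency of $\xi_{L,\beta}$ via $\xi_{L,\beta}(J^\xi)=-\beta(\xi_Q^c)=0$ for $\xi\in\mathfrak g_\beta$, which is exactly the computation the paper performs just before the theorem). Where you genuinely diverge is in iv) and v). For iv) the paper takes $E_{L_\mu}=\Delta_\mu(L_\mu)-L_\mu$ as the definition, with $\Delta_\mu$ a vector field on $(TQ)_\mu$ declared to satisfy $\pi_{\mu*}\Delta_\mu=\incl_{\mu*}\Delta$, and then verifies $\Delta(L)\circ\incl_\mu=\Delta_\mu(L_\mu)\circ\pi_\mu$ by a chain of pullback identities; you instead define $E_{L_\mu}$ by descent of the $G$-invariant function $E_L$ and treat the $\Delta_\mu$ formula as a consequence. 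Your route is cleaner, and your caveat is well taken: the Liouville field $\Delta$ need not be tangent to $J_\beta^{-1}(\mu)$ (since $\Delta(J^\xi)=\xi^i\dot q^jW_{ij}$ need not vanish there), so the existence of a $\pi_\mu$-related $\Delta_\mu$ is not automatic and the paper glosses over this; defining $E_{L_\mu}$ by \eqref{reduced_energy} sidesteps the issue while still proving the statement's content. For v) the paper argues through the vector field $\xi_\beta$ with $\contr{\xi_\beta}\omega_L=\beta$: it shows $\xi_\beta(J^\xi)=-\beta(\xi_Q^c)=0$ for $\xi\in\mathfrak g_\beta$, so the flow of $\xi_\beta$ preserves $J_\beta^{-1}(\mu)$ and descends to a flow on $(TQ)_\mu$ generating a field $Y$, and then sets $\beta_\mu\coloneqq\contr{Y}\omega_\mu$; you instead observe that $\incl_\mu^*\beta$ is horizontal ($\contr{\eta_Q^c}\beta=0$) and invariant ($\liedv{\eta_Q^c}\beta=0$) for the principal bundle $\pi_\mu$, hence basic, hence descends. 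Both are valid; yours is the more standard descent argument, while the paper's construction has the side benefit of directly producing the reduced force vector field $Y$ (with $\contr{Y}\omega_\mu=\beta_\mu$) that is used immediately afterwards to assemble $\xi_{L_\mu,\beta_\mu}$. Your closing caveats about $G_\beta$ being merely an immersed integral subgroup and about what ``semibasic'' should mean on $(TQ)_\mu$ (which is generally not a tangent bundle) are points the paper also leaves implicit.
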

\begin{proof}
The proof of the first three statements can be found in references \cite{ortega_momentum_2004,abraham_foundations_2008,marsden_reduction_1990}. 
If $\Delta$ is the Liouville vector field on $TQ$, let $\Delta_\mu$
is a $\pi_\mu$-related vector field on $(TQ)_\mu$, namely
\begin{equation}
 \pi_{\mu*}\Delta_\mu = \incl_{\mu*}\Delta.
\end{equation}
Observe that
\begin{equation}
\begin{aligned}
\Delta(L)\circ \incl_\mu
&= \incl_\mu^*\Delta(L)
=\incl_\mu^* (\contr{\Delta}\dd L)
=\contr{\incl_{\mu*}\Delta} (\incl_\mu^*\dd L)
=(\incl_{\mu*}\Delta)(\incl_{\mu}^* L)\\
&=(\incl_{\mu*}\Delta)(L\circ \incl_{\mu})
=(\incl_{\mu*}\Delta)(L_\mu\circ \pi_{\mu})
= (\pi_{\mu*}\Delta_\mu) (L_\mu\circ \pi_{\mu})\\
&= \pi_\mu^* (\Delta_\mu(L_\mu)) 
= \Delta_\mu(L_\mu) \circ \pi_\mu.
\end{aligned} \label{equation_reduced_Delta_L}
\end{equation}
Combining equations \eqref{equation_reduced_Lagrangian} and \eqref{equation_reduced_Delta_L}, it is clear that equation~\eqref{reduced_energy} holds.

On the other hand,
\begin{equation}
\begin{aligned} 
	\beta(\xi_Q^c) = \contr{\xi_Q^c}  \beta = \contr{\xi_Q^c} \contr{\xi_\beta} \omega_L
	= - \contr{\xi_\beta} \contr{\xi_Q^c}  \omega_L ,
\end{aligned}
\end{equation}
but
\begin{equation}
	\contr{\xi_Q^c}  \omega_L = -\liedv{\xi_Q^c} \alpha_L + \dd(\contr{\xi_Q^c} \alpha_L)
	= \dd(\contr{\xi_Q^c} \alpha_L) = \dd J^\xi,
\end{equation}
since $\alpha_L$ is $\mathfrak{g}$-invariant, and thus
\begin{equation}
	\beta(\xi_Q^c) = - \contr{\xi_\beta} \dd J^\xi = -\xi_\beta(J^\xi).
\end{equation}
The left-hand side vanishes for any $\xi\in \mathfrak{g}_\beta$ by definition, and hence the flow of $\xi_\beta$ preserves $J^\xi$. In other words, the flow $F_t$ of $\xi_\beta$ leaves invariant the connected components of $J^{-1}(\mu)$. One can then define a flow $F_t^\mu$ on $(TQ)_\mu$ given by
\begin{equation}
	\pi_\mu \circ F_t \circ \incl_\mu = F_t^\mu \circ \pi_\mu. \label{equation_flow_beta}
\end{equation}
Let $Y$ be the vector field on $(TQ)_\mu$ whose flow is $F_t^\mu$.
%  Differentiating equation \eqref{equation_flow_beta} with respect to the time $t$ yields
% \begin{equation}
% 	T\pi_\mu \circ \xi_\beta \circ \incl_\mu = Y \circ \pi_\mu
% \end{equation}
By construction, $Y$ is $\pi_\mu$-related to $\xi_\beta$, so one can introduce a 1-form $\beta_\mu \coloneqq \contr{Y}\omega_\mu$ on $(TQ)_\mu$ that satisfies
\begin{equation}
	\pi_\mu^* \beta_\mu= \pi_\mu^* (\contr{Y}\omega_\mu)
	 =\contr{\pi_{\mu*}Y}( \pi_\mu^* \omega_\mu)
	 = \contr{\incl_{\mu*}\xi_{\beta}}( \incl_\mu^* \omega_L)
	 = \incl_\mu^*(\contr{\xi_\beta} \omega_L)
	 = \incl_\mu^* \beta.
\end{equation}

\end{proof}

The dynamics on $(TQ)_\mu$ are determined by the vector field $\xi_{L_\mu,\beta_\mu}$, defined by
 \begin{equation}
   \contr{\xi_{L_\mu,\beta_\mu}}\omega_\mu =\dd E_{L_\mu}+\beta_\mu.
 \end{equation}
This vector field is $\pi_\mu$-related to $\xi_{L,\beta}$. The dynamics on $J^{-1}(\mu)\subset TQ$ can be reconstructed by the same procedure as in the conservative case (see Remark \ref{remark_reconstruction}).

 % Knowing the integral curves of $\xi_{L_\mu,\beta_\mu}$, one can obtain the integral curves of $\xi_{L,\beta}$

% {proposition_point_like}

% a function $J^\xi:M\to \RR$ such that
% \begin{equation}
% 	J^\xi (x) = \left\langle  J(x), \xi \right\rangle
% \end{equation}
% for each $\xi\in \mathfrak{g}$ and each $x\in M$.
\section{Rayleigh systems}\label{section_Rayleigh}
This section presents a special type of external forces: the so-called Rayleigh dissipation (see references \cite{minguzzi_rayleighs_2015,Cline2021Rayleigh’s,goldstein_mecanica_1987,gantmakher_lectures_1970,lurie_analytical_2002,strutt_general_1871}). As far as the author is aware, these forces have not been previously studied in a geometric formalism. The results regarding symmetries and reduction from sections \ref{section_Lagrangian_forced} and \ref{section_reduction_forced}, respectively, are now particularized for Rayleigh dissipative forces. 

Rayleigh's hypothesis is a non-conservative force which is linear in the velocities.
Let $R$ be a bilinear form on $TQ$ given by
\begin{equation}
\begin{aligned}
R:\ TQ\times TQ &\to \RR,\\
 R(q,\dot q_1,\dot q_2)&=R_{ij}(q) \dot q^i_1\dot q^j_2. \label{Rayleigh_bilinear_form}
\end{aligned}
\end{equation}
In other words, $R$ is a symmetric $(0,2)$-tensor field on $Q$.
 Consider the quadratic form $\mathcal{R}$ on $TQ$ given by\footnote{Here the $1/2$ factor is introduced for convenience.}
\begin{equation}
	\mathcal{R}(q,\dot{q}) 
	= \frac{1}{2} R(q,\dot{q},\dot{q})
	= \frac{1}{2} R_{ij} (q) \dot{q}^i \dot{q}^j,
\end{equation}
which is called \emph{Rayleigh dissipation function}. This dissipation function generates an external force $\bar{R}$, whose component along the $i$-th direction is given by
\begin{equation}
	\bar{R}_i = \frac{\partial \mathcal{R}} {\partial \dot{q}^i} = R_{ij}(q) \dot{q}^j,
\end{equation}
that is,
\begin{equation}
	\bar{R} = R_{ij}(q) \dot{q}^j \dd q^i.
\end{equation}
 Since $R$ is a $(0,2)$-tensor on $Q$, it defines a linear mapping 
\begin{equation}
\tilde{R}: TQ\to T^*Q
\end{equation} 
by
\begin{equation}
\tilde{R}(v_q)=\contr{v_q}R,
\end{equation}
that is,
\begin{equation}
\tilde{R}(v_q)(w_q)=R(v_q,w_q).
\end{equation}
Therefore
\begin{equation}
\tilde{R}(q^i,\dot q^i)=(q^i,R_{ij}(q)\dot q^i \dot q^j).
\end{equation}
Recall that, given a morphism of fibre bundles $TQ\to T^*Q$, there is always an associated semibasic 1-form on $TQ$ and conversely. Hence $\bar{R}$ can also be defined as the semibasic 1-form on $TQ$ associated with the fibre morphism $\tilde{R}$.

Consider a forced Lagrangian system $(L,\bar{R})$ on $TQ$. If $\xi_{\bar{R}}$ is the vector field on $TQ$ given by
\begin{equation}
	\contr{\xi_{\bar{R}}}\omega_L = \bar{R},
\end{equation}
then locally
\begin{equation}
	\xi_{\bar{R}}=-R_{ik} \dot{q}^k W^{ij} \frac{\partial  } {\partial \dot q^j}.
\end{equation}
The dynamics of the system $(L,\bar{R})$ is given by the vector field $\xi_{L,\bar{R}}$, given by
\begin{equation}
	\xi_{L,\bar{R}} = \xi_L + \xi_{\bar{R}},
\end{equation}
where $\xi_L$ is the Hamiltonian vector field of $L$ on $(TQ,\omega_L)$.
The equations of motion of the system are given by
\begin{equation}
	\frac{\mathrm{d} } {\mathrm{d}t} \left( \frac{\partial L} {\partial \dot{q}^i}  \right)
	- \frac{\partial L} {\partial q^i}
	= -R_{ij}(q) \dot{q}^j.
\end{equation}

Notice that
\begin{equation}
	\xi_{L,\bar{R}} (E_L) 
	= \contr{\xi_{L,\bar{R}}} \dd E_L
	 =   \contr{\xi_{L,\bar{R}}} \contr{\xi_{L}} \omega_L
	 = - \contr{\xi_{L}} \contr{\xi_{L,\bar{R}}} \omega_L
	 = - \contr{\xi_{L}} (\dd E_L + \bar{R}),
\end{equation}
but
\begin{equation}
	\contr{\xi_{L}} \dd E_L = \contr{\xi_{L}}\contr{\xi_{L}} \omega_L = 0,
\end{equation}
so
\begin{equation}
	\xi_{L,\bar{R}} (E_L)
	= - \contr{\xi_{L}} \bar{R}
	= - R_{ij} (q) \dot{q}^i \dot{q}^j
	= -2 \mathcal{R}.
\end{equation}
Suppose that $\sigma$ is a trajectory of the system (that is, an integral curve of $\xi_{L,\bar{R}}$). Then
\begin{equation}
	\frac{\mathrm{d} } {\mathrm{d}t} E_L(\sigma(t))
	= \xi_{L,\bar{R}}(E_L)(\sigma(t))
	= -2 \mathcal{R}(\sigma(t)).
\end{equation}
Therefore $\mathcal{R}$ expresses the rate at which energy is dissipated away along the evolution of the system.

Of course, Rayleigh dissipation can also be described in the Hamiltonian formalism.
Indeed, since $L$ is assumed to be hyperregular, the external force $\hat{R}$ on $T^*Q$ can be defined by
\begin{equation}
\bar{R}=\Leg^*\hat{R},
\end{equation}
and the Hamiltonian function by $H\circ \Leg =E_L$. Locally $\hat{R}$ can be written as
\begin{equation}
\hat{R}={R}_{ij}(q)p_i \dd q^j.
\end{equation}
If $Z_{\hat{R}}$ is the vector field defined by
\begin{equation}
\contr{Z_{\hat{R}}}\omega_Q=\hat{R},
\end{equation}
then the equations of motion of the system are the integral curves of the vector field $X_{H,\hat{R}}$, given by
\begin{equation}
X_{H,\hat{R}}=X_H+Z_{\hat{R}},
\end{equation}
where $X_H$ is the Hamiltonian vector field of $H$ on $(T^*Q,\omega)$. In canonical coordinates, 
\begin{equation}
	Z_{\hat{R}}=-{R}_{ij}(q)p_i\parder{}{p_j}.
\end{equation}
The equations of motion are
\begin{equation}
\begin{aligned}
&\der{q^i}{t}=\parder{H}{p_i},\\
&\der{p_i}{t}=-\left(\parder{H}{q^i}+R_{ij}(q) p_j\right).
\end{aligned}
\end{equation}

% The dynamics of a forced Lagrangian system $(L,\bar{R})$, with the Rayleigh dissipative force derived from $\mathcal{R}$, are determined by the Lagrangian function $L$ and the Rayleigh dissipation function $\mathcal{R}$.

If a forced Lagrangian system $(L,\bar{R})$ is subject to an external force which is linear in the velocities, a Rayleigh dissipation function $\mathcal{R}$ from which $\bar{R}$ is derived can be defined. In that case, the system can be characterized by the pair of functions $L$ and $\mathcal{R}$ on $TQ$.
 As a matter of fact, an external force does not have to be linear in the velocities for this to be the case. Consider a function $\mathcal{R}$ on $TQ$ (not necessarily a quadratic form)
 and the external force $\bar{R}$ given by
  \begin{equation}
 	\bar{R} = \frac{\partial \mathcal{R}} {\partial \dot{q}^i} \dd q^i.
 	\label{local_Rayleigh_force}
 \end{equation}
 Geometrically, this can be expressed as
 \begin{equation}
 	\bar{R} 
 	= S^*(\dd \mathcal{R}).
 	% = \frac{\partial \mathcal{R}} {\partial \dot{q}^i} \dd q^i.
 \end{equation}
 Here $\mathcal{R}$ takes the role of a ``potential'' from which the external force is derived. In what follows, this type of forces will be referred to as \emph{Rayleigh forces}, and $\mathcal{R}$ will be called its \emph{Rayleigh dissipation function} (or \emph{Rayleigh potential}). A \emph{Rayleigh system}  $(L,\mathcal{R})$ will denote a forced Lagrangian system $(L,\bar{R})$ with Rayleigh force $\bar{R}$ derived from $\mathcal{R}$.

 Consider a Rayleigh system $(L,\mathcal{R})$.
 % with forced Euler-Lagrange vector field $\xi_{L,\bar{R}}\equiv \xi_{L,\mathcal{R}}$. If 
If $\xi_L$ is the Hamiltonian vector field of $L$, then
 \begin{equation}
 	\contr{\xi_L} \bar{R} = \contr{\xi_L} (S^*\dd \mathcal{R})
 	= \contr{S\xi_L} \dd \mathcal{R} 
 	= \contr{\Delta} \dd \mathcal{R}
 	= \Delta (\mathcal{R}).
 \end{equation}
 The result above generalizes straightforwardly for non-linear Rayleigh forces as follows.
 \begin{proposition} [Dissipation of energy]
  Consider a Rayleigh system $(L,\mathcal{R})$ on $TQ$. If $\sigma:TQ\to \RR$ is a trajectory of the system, then 
  \begin{equation}
  	\frac{\mathrm{d}  } {\mathrm{d}t} E_L \circ \sigma(t) = -\Delta(\mathcal{R}) \circ \sigma(t),
  \end{equation}
  where $E_L$ is the energy function associated with $L$ and $\Delta$ is the Liouville vector field on $TQ$.
 \end{proposition}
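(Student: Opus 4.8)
The plan is to compute the derivative of $E_L$ along the flow of the dynamics by a purely algebraic manipulation with interior products, exactly mirroring the computation carried out above for a linear Rayleigh force, and then to pass from the infinitesimal statement to the statement along trajectories using the standard identity $\der{}{t}(f\circ\sigma)(t)=X(f)(\sigma(t))$, valid for any integral curve $\sigma$ of a vector field $X$ and any function $f$.

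First I would recall the relevant characterizations: the dynamics of the Rayleigh system $(L,\mathcal{R})$ is given by the forced Euler--Lagrange vector field $\xi_{L,\bar{R}}$ determined by $\contr{\xi_{L,\bar{R}}}\omega_L=\dd E_L+\bar{R}$ with $\bar{R}=S^*(\dd\mathcal{R})$, and the (unforced) Euler--Lagrange vector field $\xi_L$ satisfies $\contr{\xi_L}\omega_L=\dd E_L$ and is a SODE, so $S\xi_L=\Delta$. Then I would write $\xi_{L,\bar{R}}(E_L)=\contr{\xi_{L,\bar{R}}}\dd E_L=\contr{\xi_{L,\bar{R}}}\contr{\xi_L}\omega_L=-\contr{\xi_L}\contr{\xi_{L,\bar{R}}}\omega_L=-\contr{\xi_L}(\dd E_L+\bar{R})$. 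The term $\contr{\xi_L}\dd E_L=\contr{\xi_L}\contr{\xi_L}\omega_L$ vanishes because $\omega_L$ is skew-symmetric, so $\xi_{L,\bar{R}}(E_L)=-\contr{\xi_L}\bar{R}$. Using $\bar{R}=S^*(\dd\mathcal{R})$, the identity $\contr{X}S^*=S^*\circ\contr{SX}$, the fact that $S^*$ is the identity on functions, and $S\xi_L=\Delta$, I get $\contr{\xi_L}\bar{R}=\contr{S\xi_L}\dd\mathcal{R}=\contr{\Delta}\dd\mathcal{R}=\Delta(\mathcal{R})$, hence $\xi_{L,\bar{R}}(E_L)=-\Delta(\mathcal{R})$.

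Finally, if $\sigma$ is a trajectory of the system, i.e.\ an integral curve of $\xi_{L,\bar{R}}$, then $\der{}{t}\bigl(E_L\circ\sigma\bigr)(t)=\xi_{L,\bar{R}}(E_L)\bigl(\sigma(t)\bigr)=-\Delta(\mathcal{R})\bigl(\sigma(t)\bigr)$, which is the assertion. There is no genuinely hard step: the only things to be careful about are the sign produced by $\contr{X}\contr{Y}\omega=-\contr{Y}\contr{X}\omega$ for a $2$-form $\omega$ and the use of the SODE property $S\xi_L=\Delta$; in particular, the passage from the quadratic Rayleigh function to an arbitrary Rayleigh potential $\mathcal{R}$ is transparent, since the derivation uses only the relation $\bar{R}=S^*(\dd\mathcal{R})$ and never the specific form of $\mathcal{R}$.
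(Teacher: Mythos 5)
Your proof is correct and follows essentially the same route as the paper: the identity $\xi_{L,\bar{R}}(E_L)=-\contr{\xi_L}\bar{R}$ obtained from skew-symmetry of $\omega_L$, combined with $\contr{\xi_L}\bar{R}=\contr{S\xi_L}\dd\mathcal{R}=\contr{\Delta}\dd\mathcal{R}=\Delta(\mathcal{R})$ via the SODE property $S\xi_L=\Delta$, and then evaluation along integral curves. Nothing is missing.
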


 There is a certain ``gauge freedom'' in the choice of $\mathcal{R}$.
 \begin{remark}\label{remark_gauge_potential}
 For any function $f$ on $Q$, $\mathcal{R}$ and $\mathcal{R}+f$ define the same Rayleigh force $\bar{R}$. In other words, $(L,\mathcal{R})\cong(L,\mathcal{R}+f)$.
 
 \end{remark}

\subsection{Constants of the motion for Rayleigh systems}
Consider a Rayleigh system $(L,\mathcal{R})$ on $TQ$. Let $\tilde{X}$ be a vector field on $TQ$ and $X$ a vector field on $Q$. Observe that
\begin{equation}
  \bar{R}(\tilde{X})
  =\contr{\tilde{X}} \bar{R}
  =\contr{\tilde{X}}(S^{*}\dd \mathcal{R})
  =\contr{S \tilde{X}}\dd \mathcal{R}
  =(S\tilde{X}) (\mathcal{R}).
  % =\contr{X^v}\dd \mathcal{R}
  % =X^{v}(\mathcal{R}),
\end{equation}
In particular, $SX^c=X^v$, so
\begin{equation}
	\bar{R}(X^c) = X^v (\mathcal{R}).
\end{equation}
By equation \eqref{local_Rayleigh_force}, locally
\begin{equation}
	\dd \bar{R} = \frac{\partial^2 \mathcal{R}  } {\partial q^i \dot q^j  } \dd q^i \wedge \dd q^j
	+  \frac{\partial^2 \mathcal{R}  } {\partial \dot q^i \dot q^j  }  \dd \dot q^j\wedge \dd q^i,
\end{equation}
% \begin{equation}
% 	\omega_L = 
% 	\frac{\partial^2 L  } {\partial q^j \partial \dot{q}^i } \dd q^i \wedge \dd q^j
% 	+\frac{\partial^2 L  } {\partial \dot{q}^i \partial \dot{q}^j } \dd q^i \wedge \dd \dot q^j
% \end{equation}
so
\begin{equation}
	\contr{X^c} \dd \bar{R}
	=\left(  X^j  \frac{\partial^2 \mathcal{R}  } {\partial q^j \dot q^i  } -  X^j  \frac{\partial^2 \mathcal{R}  } {\partial \dot q^j  q^i  } 
	+ \dot{q}^k \frac{\partial X^j} {\partial q^k} \frac{\partial^2 \mathcal{R}  } {\partial \dot q^i \dot q^j  }
	 \right) \dd q^i - X^i \frac{\partial^2 \mathcal{R}  } {\partial \dot q^i \dot q^j  }  \dd \dot q^j.
	% + \dot{q}^k \frac{\partial X^j} {\partial q^k} \frac{\partial^2 \mathcal{R}  } {\partial \dot q^i \dot q^j  } \dd q^i.
\end{equation}
In addition,
\begin{equation}
	X^v(\mathcal{R}) = X^i \frac{\partial \mathcal{R}} {\partial \dot{q}^i},
\end{equation}
so
\begin{equation}
	\dd (X^v (\mathcal{R}))
	= \left( X^j \frac{\partial^2 \mathcal{R}} {\partial q^i \dot q^j}  
	+\frac{\partial X^j} {\partial q^i} \frac{\partial \mathcal{R}} {\partial \dot{q}^j}
	\right) \dd q^i
	+ X^j \frac{\partial ^2 \mathcal{R} } {\partial \dot{q}^i \dot{q}^j} \dd \dot q^i.
\end{equation}
Therefore
\begin{equation}
\begin{aligned}
	\liedv{X^c} \bar{R} 
	&= \contr{X^c} \dd \bar{R} + \dd(\contr{X}^c \bar{R})
	= \contr{X^c} \dd \bar{R} + \dd(\bar{R}(X^c))
	=  \contr{X^c} \dd \bar{R} + \dd(X^v(\mathcal{R}))\\
	& =\left(  X^j  \frac{\partial^2 \mathcal{R}  } {\partial q^j \dot q^i  } 
	+ \dot{q}^k \frac{\partial X^j} {\partial q^k} \frac{\partial^2 \mathcal{R}  } {\partial \dot q^i \dot q^j  } +\frac{\partial X^j} {\partial q^i} \frac{\partial \mathcal{R}} {\partial \dot{q}^j}
	 \right) \dd q^i.
\end{aligned}
\end{equation}
On the other hand,
\begin{equation}
	X^c(\mathcal{R}) = X^j \frac{\partial \mathcal{R}} {\partial q^j} +\dot{q}^k \frac{\partial X^j} {\partial q^k  } \frac{\partial \mathcal{R}} {\partial \dot{q}^j},
\end{equation}
and then
\begin{equation}
	S^*(\dd (X^c(\mathcal{R})) )
	= \frac{\partial (X^c(\mathcal{R}))} {\partial \dot{q}^i } \dd q^i
	=\left(  X^j  \frac{\partial^2 \mathcal{R}  } {\partial q^j \dot q^i  } 
	+ \dot{q}^k \frac{\partial X^j} {\partial q^k} \frac{\partial^2 \mathcal{R}  } {\partial \dot q^i \dot q^j  } +\frac{\partial X^j} {\partial q^i} \frac{\partial \mathcal{R}} {\partial \dot{q}^j}
	 \right) \dd q^i.
\end{equation}
This shows that
\begin{equation}
	\liedv{X^c} \bar{R} = S^*(\dd (X^c(\mathcal{R})) ).
\end{equation}

The results from Section \ref{section_Lagrangian_forced} can now be expressed in the following way. 

\begin{proposition}
	The following assertions are equivalent:
	\begin{enumerate}
		\item $X$ is a symmetry of the forced Lagrangian $(L,\bar{R})$,
		\item $X^c(L)=X^v(\mathcal{R})$,
		\item $X^v(L)$ is a conserved quantity.
	\end{enumerate}
\end{proposition}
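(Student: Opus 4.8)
The plan is to obtain this proposition as an immediate corollary of Theorem~\ref{Noether_th} together with the identity $\bar{R}(X^c) = X^v(\mathcal{R})$ established in the lines just above. Recall that, by the definition given in Section~\ref{section_Lagrangian_forced}, a vector field $X$ on $Q$ is a symmetry of the forced Lagrangian $(L,\beta)$ precisely when $X^c(L) = \beta(X^c)$; here the external force is $\beta = \bar{R}$, so assertion~i) unwinds to the equation $X^c(L) = \bar{R}(X^c)$.

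First I would prove i)~$\Leftrightarrow$~ii). This is immediate upon substituting the identity $\bar{R}(X^c) = \contr{X^c}\bar{R} = \contr{X^c}(S^*\dd\mathcal{R}) = \contr{SX^c}\dd\mathcal{R} = X^v(\mathcal{R})$, which was derived above from $\bar{R} = S^*(\dd\mathcal{R})$, the relation $\contr{X}S^* = S^*\circ\contr{SX}$, and $SX^c = X^v$. Hence $X^c(L) = \bar{R}(X^c)$ holds if and only if $X^c(L) = X^v(\mathcal{R})$, which is exactly the equivalence of i) and ii).

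Next I would prove i)~$\Leftrightarrow$~iii). This is simply Theorem~\ref{Noether_th} (Noether's theorem for forced Lagrangian systems) specialized to the external force $\beta = \bar{R}$: that theorem asserts that $X^c(L) = \bar{R}(X^c)$ if and only if $X^v(L)$ is a constant of the motion of $\xi_{L,\bar{R}}$. Chaining this with the previous step yields the full cycle i)~$\Leftrightarrow$~ii)~$\Leftrightarrow$~iii), completing the proof.

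I do not anticipate any real obstacle: the only step requiring a short verification is the identity $\bar{R}(X^c) = X^v(\mathcal{R})$, and that has already been checked just before the statement, while the remaining equivalences are restatements of results proved earlier in the chapter. Consequently the write-up should be brief, essentially a two-line argument citing Theorem~\ref{Noether_th} and the definition of a symmetry of the forced Lagrangian.
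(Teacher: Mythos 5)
Your proposal is correct and matches the paper's own (implicit) argument exactly: the proposition is stated immediately after the identity $\bar{R}(X^c)=X^v(\mathcal{R})$ is derived, and is intended as a direct specialization of the definition of a symmetry of the forced Lagrangian together with Theorem~\ref{Noether_th}. Nothing is missing.
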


\begin{example}[Fluid resistance]
Consider a body of mass $m$ moving through a fluid that fully encloses it. For the sake of simplicity, suppose that the motion takes place along one dimension. 
Then the drag force \cite{batchelor_introduction_2000,falkovich_fluid_2011} is given by
\begin{equation}
  \bar{R}
  % = CR^2 \rho \dot{q}^2 \dd q
  =\frac{1}{2} \rho CA \dot{q}^2 \dd q,
\end{equation}
where $C$ is a dimensionless constant depending on the body shape, $\rho$ is the mass density of the fluid and $A$ is the area of the projection of the object on a plane perpendicular to the direction of motion. For further simplification, suppose that the density is uniform, and then $k=CA\rho/2$ is constant.
The dissipation function is thus
\begin{equation}
  \mathcal{R}=\frac{k}{3}\dot{q}^3.
\end{equation}
If the body is not subject to forces besides the drag, its Lagrangian is $L=m\dot{q}^2/2$. Consider the vector field $X=e^{kq/m }\partial/\partial q$. It verifies that $X^c(L)=X^v(\mathcal{R})$, so $X^v(L)=me^{kq/m}\dot{q}$ is a constant of the motion. In particular, when $k\to 0$, $X$ is the generator of translations and the conservation of momentum is recovered.
\end{example}

\begin{proposition}[Point-like symmetries in Rayleigh systems]
Let $X$ be a vector field on $Q$.
Then the following results hold:
\begin{enumerate}
\item If $\liedv{X^c}\alpha_L$ is closed, then $X$ is a Lie symmetry of $(L,\bar{R})$ if and only if 
\begin{equation}
  \dd(X^c(E_L))=-S^* (\dd (X^c \mathcal R)).
\end{equation}

\item If $\liedv{X^c}{\alpha_L}=\dd f$ for some function $f$ on $TQ$, then the following statements are equivalent:
\begin{enumerate}
\item $X$ is a Noether symmetry,
\item $X^c(E_L)+X^v(\mathcal{R})=0$,
\item $f-X^v(L)$ is a constant of the motion.
\end{enumerate}

\item If $X$ is Noether symmetry, it is also a Lie symmetry if and only if $\contr{X^c}\dd \bar{R}=0$.
\item If $X$ is Noether symmetry, it is also a symmetry of the forced Lagrangian if and only if $\liedv{X^c}\alpha_L=0$.
\end{enumerate}
\end{proposition}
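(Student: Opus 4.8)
The plan is to obtain all four assertions by specializing Proposition~\ref{proposition_point_like} to the Rayleigh case $\beta = \bar{R} = S^*(\dd \mathcal{R})$, feeding in the two identities established immediately above the statement, namely $\bar{R}(X^c) = X^v(\mathcal{R})$ and $\liedv{X^c}\bar{R} = S^*(\dd (X^c(\mathcal{R})))$. The genuine geometric work is the computation of $\liedv{X^c}\bar{R}$ already carried out; what remains is bookkeeping, so I do not expect any real obstacle.

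For part (i), Proposition~\ref{proposition_point_like}~(ii) states that, when $\liedv{X^c}\alpha_L$ is closed, a vector field $X$ on $Q$ is a Lie symmetry of a forced Lagrangian system $(L,\beta)$ if and only if $\liedv{X^c}\beta = -\dd (X^c(E_L))$. Taking $\beta = \bar{R}$ and substituting $\liedv{X^c}\bar{R} = S^*(\dd (X^c(\mathcal{R})))$ yields exactly the claimed condition $\dd (X^c(E_L)) = -S^*(\dd (X^c \mathcal{R}))$.

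For part (ii), recall that a Noether symmetry of $(L,\bar{R})$ is by definition a vector field $X$ on $Q$ such that $\liedv{X^c}\alpha_L$ is exact and $X^c(E_L) + \bar{R}(X^c) = 0$. Under the standing hypothesis $\liedv{X^c}\alpha_L = \dd f$ the exactness condition is automatic, so, using $\bar{R}(X^c) = X^v(\mathcal{R})$, statement (a) is equivalent to statement (b). The equivalence of (a) and (c) is Proposition~\ref{proposition_point_like}~(iv) applied with $\beta = \bar{R}$, which asserts that under the same hypothesis $X$ is a Noether symmetry if and only if $f - X^v(L)$ is a conserved quantity.

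Finally, parts (iii) and (iv) are immediate: they are, respectively, Proposition~\ref{proposition_point_like}~(iii) and Proposition~\ref{proposition_point_like}~(v) verbatim, with $\beta = \bar{R}$ and $\contr{X^c}\dd\beta = \contr{X^c}\dd\bar{R}$. The only thing that needs to be checked throughout is that the substitutions $\beta = \bar{R}$, $\beta(X^c) = X^v(\mathcal{R})$ and $\liedv{X^c}\beta = S^*(\dd (X^c\mathcal{R}))$ are legitimate, which holds since $\bar{R}$ is a bona fide semibasic $1$-form on $TQ$; hence the argument is a direct specialization and the computation of $\liedv{X^c}\bar{R}$ done just before the statement is all the substance required.
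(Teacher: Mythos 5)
Your proposal is correct and follows exactly the route the paper intends: the proposition is stated as a direct specialization of Proposition~\ref{proposition_point_like} to $\beta=\bar{R}$, using the two identities $\bar{R}(X^c)=X^v(\mathcal{R})$ and $\liedv{X^c}\bar{R}=S^*(\dd(X^c(\mathcal{R})))$ computed immediately beforehand. Your item-by-item mapping (part (i) from the general item (ii), part (ii) from the definition of Noether symmetry plus the general item (iv), and parts (iii)--(iv) from the general items (iii) and (v) verbatim) is the same bookkeeping the paper relies on.
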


 % Let $\tilde{X}$ be a vector field on $TQ$.

\begin{proposition}[Not-point-like symmetries in Rayleigh systems]
Let $\tilde X$ be a vector field on $TQ$. Then the following results hold:
\begin{enumerate}
\item If $\liedv{\tilde{X}}\alpha_L$ is closed, then $\tilde{X}$ is a dynamical symmetry if and only if
\begin{equation}
  \dd(\tilde{X}(E_L)+(S\tilde{X})(\mathcal{R}))=-\contr{\tilde{X}} \dd \bar{R}.
\end{equation}
\item If $\liedv{\tilde{X}}{\alpha_L}=\dd f$, then the following statements are equivalent:
 \begin{enumerate}
 \item  $\tilde{X}$ is a Cartan symmetry,
 \item $\tilde{X}(E_L)+(S\tilde{X})(\mathcal{R})=0$,
 \item $f-(S\tilde{X})(L)$ is a conserved quantity.
 \end{enumerate}
 \item If $\tilde X$ is Cartan symmetry, it is also a dynamical symmetry if and only if $\contr{\tilde X}\dd \bar{R}=0$.
 \end{enumerate}
\end{proposition}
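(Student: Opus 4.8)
The plan is to obtain all three items as specializations of Proposition~\ref{proposition_non_point_like}, applied to the forced Lagrangian system $(L,\bar R)$ with $\bar R=S^{*}(\dd\mathcal{R})$ the Rayleigh force attached to $\mathcal{R}$, combined with the single identity
\begin{equation}
  \contr{\tilde X}\bar R=(S\tilde X)(\mathcal{R}),
\end{equation}
valid for an arbitrary vector field $\tilde X$ on $TQ$ and already recorded at the start of this subsection; it is immediate from $\contr{\tilde X}S^{*}=S^{*}\circ\contr{S\tilde X}$ together with $\contr{S\tilde X}\dd\mathcal{R}=(S\tilde X)(\mathcal{R})$.

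For item (i): Proposition~\ref{proposition_non_point_like}(iii) states that, when $\liedv{\tilde X}\alpha_L$ is closed, $\tilde X$ is a dynamical symmetry of $(L,\bar R)$ if and only if $\dd(\tilde X(E_L))=-\liedv{\tilde X}\bar R$. I would then expand the right-hand side by Cartan's formula~\eqref{Cartan_formula},
\begin{equation}
  \liedv{\tilde X}\bar R=\contr{\tilde X}\dd\bar R+\dd(\contr{\tilde X}\bar R)=\contr{\tilde X}\dd\bar R+\dd\big((S\tilde X)(\mathcal{R})\big),
\end{equation}
and move the exact term across, which yields precisely $\dd\big(\tilde X(E_L)+(S\tilde X)(\mathcal{R})\big)=-\contr{\tilde X}\dd\bar R$.

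For item (ii): by definition a Cartan symmetry of $(L,\bar R)$ is a vector field $\tilde X$ with $\liedv{\tilde X}\alpha_L$ exact and $\tilde X(E_L)+\bar R(\tilde X)=0$; under the standing hypothesis $\liedv{\tilde X}\alpha_L=\dd f$ the exactness is automatic, and substituting $\bar R(\tilde X)=(S\tilde X)(\mathcal{R})$ turns the energy condition into (b), so (a)$\Leftrightarrow$(b). The equivalence (a)$\Leftrightarrow$(c) is exactly Proposition~\ref{proposition_non_point_like}(v). For item (iii): this is Proposition~\ref{proposition_non_point_like}(iv) with $\beta=\bar R$, the condition $\contr{\tilde X}\dd\beta=0$ reading $\contr{\tilde X}\dd\bar R=0$.

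There is essentially no genuine obstacle here: the content is a dictionary translation of the already-proven forced-system statements into the Rayleigh-potential language through the one identity above. The only point demanding a little care is that this identity — and hence item~(i) — must be used for a general vector field $\tilde X$ on $TQ$, not merely for complete lifts $X^{c}$, for which one additionally has the stronger $\liedv{X^{c}}\bar R=S^{*}(\dd(X^{c}(\mathcal{R})))$ exploited in the point-like proposition; but the weaker identity $\contr{\tilde X}\bar R=(S\tilde X)(\mathcal{R})$ is all that is needed and it holds verbatim in the general case.
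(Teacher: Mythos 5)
Your proposal is correct and follows exactly the route the paper intends: it establishes the identity $\contr{\tilde X}\bar R=(S\tilde X)(\mathcal R)$ at the start of the subsection and then presents this proposition as a direct translation of Proposition~\ref{proposition_non_point_like} with $\beta=\bar R$, which is precisely your dictionary argument (including the Cartan-formula expansion of $\liedv{\tilde X}\bar R$ for item~(i)). Your closing remark is also well taken: the stronger identity $\liedv{X^c}\bar R=S^*(\dd(X^c(\mathcal R)))$ is only needed for the point-like version, and the weaker contraction identity suffices here.
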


The following examples of Rayleigh systems are presented in reference \cite{minguzzi_rayleighs_2015}. Here their symmetries and constants of the motion are obtained.

\begin{example}[A rotating disk]
Consider a disk of mass $m$ and radius $r$ placed on a horizontal surface. Let $\varphi$ be the angle of rotation of the disk with respect to a reference axis. The Lagrangian of the disk is $L=T=mr^{2}\dot{\varphi}^{2}/4$ and its Rayleigh dissipation function is $\mathcal{R}=\mu mgr\dot\varphi/2$. The Poincaré-Cartan 1-form is $\alpha_L=mr^2 \dot \varphi/2\  \dd \varphi$. The Rayleigh force is $\bar{R}=\mu mgr/2\ \dd \varphi$.

 Consider the vector field $\tilde{X}=-r\dot{\varphi}\partial/\partial \varphi+\mu g \partial/\partial \dot{\varphi}$. Clearly, $\tilde{X}(E_L)=\tilde{X}(L)=-(S\tilde{X})(\mathcal{R})$. In addition,
 \begin{equation}
   \liedv{\tilde{X}}\alpha_L=
    \frac{\mu mgr^2}{2} \dd \varphi-
   \frac{mr^3}{2}\dot \varphi \dd \dot \varphi=\dd f,
 \end{equation}
 where
 \begin{equation}
   f=\frac{\mu mgr^2}{2}\varphi -\frac{mr^3}{4}\dot \varphi^2
 \end{equation}
 modulo a constant, and $(S\tilde{X})(L)=-mr^{3} \dot{\varphi}^{2}/2$,
  so
  \begin{equation}
    f-(S\tilde X) (L)=\frac{\mu mgr^2}{2}\varphi +\frac{mr^3}{4}\dot \varphi^2
  \end{equation}
  is a constant of the motion. Since $\bar{R}$ is closed, $\contr{\tilde{X}}\dd \bar{R}=0$ is trivially satisfied, so $\tilde{X}$ is a dynamical symmetry as well as a Cartan symmetry. 

  However, since $\bar{R}$ is closed, it is not strictly an external force. In fact, the Lagrangian
  \begin{equation}
    \tilde{L}=L+\frac{\mu mgr}{2} \varphi
  \end{equation}
  leads to the same equations of motion as $L$ with the external force $\bar{R}$.
\end{example}

\begin{example}[The rotating stone polisher]
Consider a system formed by two concentric rings of the same mass $m$ and radius $r$, which are placed over a rough surface, and rotate in opposite directions. Let $(x,y)$ be the position of the centre and $\theta$ the orientation of the machine. Let $\omega$ be the angular velocity of the rings. The Rayleigh dissipation function is given by
\begin{equation}
  \mathcal{R}=2\mu mg r\omega +\frac{\mu mg}{2r\omega}(\dot{x}^{2}+\dot{y}^{2}), 
\end{equation}
and the Lagrangian is $L=T=m(\dot{x}^{2}+\dot{y}^{2} +r^{2}\dot{\theta}^{2}+r^{2}\omega^{2})$. The Poincaré-Cartan 1-form is $\alpha_L=2m(\dot x\dd  x+ \dot y \dd  y +r^2 \dot \theta \dd  \theta)$
, and the Rayleigh force is
\begin{equation}
  \bar{R}=\frac{\mu mg}{r\omega}(\dot x\dd x + \dot y \dd y).
\end{equation}
 Let $\tilde{X}^{(1)}=-2r\omega \partial/\partial x +\mu g \partial/ \partial \dot{x}$ and $\tilde{X}^{(2)}=-2r\omega \partial/\partial y +\mu g  \partial/ \partial \dot{y}$. One can check that $\tilde{X}^{(i)}(E_L)=\tilde{X}^{(i)}(L)=-(S\tilde{X}^{(i)})(\mathcal{R})$ (for $i=1,2$). Moreover, $\liedv{\tilde{X}^{(i)}}\alpha_L=\dd f_i$ for $f_1=2\mu mgx$ and $f_2=2\mu mgy$, along with $(S\tilde{X}^{(1)})(L)=-4mr \dot x$ and $(S\tilde{X}^{(2)})(L)=-4mr \dot y$, so $2mr\omega \dot{x}+\mu mgx$ and $2mr\omega \dot{y}+\mu mgy$ are constants of the motion.
\end{example}
\subsection{Reduction of Rayleigh systems with symmetries}
Consider a Rayleigh system $(L,\mathcal{R})$ on $TQ$, and a Lie group $G$ that acts freely and properly on $TQ$ and leaves $L$ invariant. Recalling the results from Section \ref{section_reduction_forced}, one can define a Lie subalgebra $\mathfrak{g}_{\mathcal{R}}\equiv \mathfrak{g}_{\bar{R}}$ of $\mathfrak{g}$ by
\begin{equation}
     \mathfrak{g}_\mathcal{R} = \left\{\xi \in \mathfrak{g} \mid
    \xi_Q^v(\mathcal{R}) = 0, S^* \dd \xi_Q^c(\mathcal{R}) = 0
    \right\},
\end{equation}
where the condition $S^* \dd \xi_Q^c(\mathcal{R}) = 0$ means that $\xi_Q^c(\mathcal{R})$ is basic: it does not depend on $\dot{q}^i$. For each $\xi\in \mathfrak{g}_{\mathcal{R}}$:
\begin{enumerate}
\item $J^\xi$ is a constant of the motion,
\item $\xi$ leaves  $\bar{R}$ invariant.
\end{enumerate}
If additionally $\mathcal{R}$ is $\mathfrak{g}_\mathcal{R}$-invariant, that is, $\xi_Q^c(\mathcal{R})=0$ for each $\xi\in \mathfrak{g}_\mathcal{R}$, then it induces a dissipation function $\mathcal{R}_\mu:(TQ)_\mu\to \RR$ given by
\begin{equation}
  \mathcal{R}_\mu \circ \pi_\mu = \mathcal{R} \circ \incl_\mu.
\end{equation}

\begin{example}
Consider the Rayleigh system $(L,\mathcal{R})$ on $TQ$. 
Suppose that $Q=\RR^n\setminus \{0\}$, and that the Lagrangian $L$ on $T Q$ is spherically symmetric, say $L(q,\dot{q})= L(\lVert q \rVert,\lVert{\dot{q}}\rVert)$. 
Consider the Lie group $G=\mathrm{SO}(n)$ acting by rotations on $Q$ (see Example \ref{example_angular_momentum_reduction}). Recall that
\begin{equation}
	J(q,\dot{q}) = q \times \dot{q}.
\end{equation}
Now look for Rayleigh potentials $\mathcal{R}$ such that $\mathfrak{g}_\mathcal{R} = \mathfrak{g}$. The condition that $\xi_Q^v (\mathcal{R})=0$ implies that $\mathcal{R}$ must be spherically symmetric on the velocities. Then, the condition that $\xi_Q^c(\mathcal{R})$ is basic means that the terms which are not spherically symmetric on the positions cannot involve the velocities, that is,
\begin{equation}
    \mathcal{R} = A(q) + B(\lVert{q}\rVert, \lVert \dot{q} \rVert).
\end{equation}
Without loss of generality (see Remark \ref{remark_gauge_potential}), the Rayleigh potential can be taken as
\begin{equation}
    \mathcal{R} =  B(\lVert{q}\rVert, \lVert \dot{q} \rVert). \label{Rayleigh_rotation_invariant}
\end{equation}
In particular, if $\mathcal{R}$ is a quadratic form, say
\begin{equation}
	\mathcal{R} = \frac{1}{2} R_{ij} (q)\dot{q}^i \dot{q}^j,
\end{equation}
then requirement \eqref{Rayleigh_rotation_invariant} leads to
\begin{equation}
	% \mathcal{R} = \frac{1}{2}  r_i\left( \lVert q \rVert  \right) \delta_{ij} \dot{q}^i \dot{q}^j,
	R_{ij} = r_i\left( \lVert q \rVert  \right) \delta_{ij}.
\end{equation}
% Without loss of generality, one can take $\mu=(0,0,\mu_0)$. Hence, if $(q,\dot{q}) \in J^{-1}(\mu)$, both $q$ and $\dot{q}$ lie on the $xy$-plane. Moreover, they must satisfy the equation $\dot{q}^1 p_2-\dot{q}^1 q^2 = \mu_0$.
\end{example}

\chapter{Conclusions and future work}\label{conclusions}
In this master's thesis, the field of geometric mechanics has been explored, in particular considering autonomous Lagrangian and Hamiltonian systems. The first part has presented the geometric framework of mechanical systems. Several results regarding symmetries, constants of the motion and reduction have been reviewed. The second part has generalized these results for systems subjected to external forces. The main original contributions are the following:
\begin{enumerate}
\item Noether's theorem has been extended for forced Lagrangian systems. Different types of symmetries, and their associated constants of the motion, have also been generalized for forced Lagrangian systems. Their Hamiltonian counterparts have been obtained as well.
\item A theory for the reduction of forced Lagrangian systems invariant under the action of a group of symmetries has been presented.
\item Rayleigh forces have been expressed in a geometric language, which has allowed to obtain particular results regarding the symmetries and reduction of Rayleigh systems.
\item Several examples have been presented.
\end{enumerate}
 These results have been published in an article \cite{de_leon_symmetries_2021}.

 There are several lines of research to be pursued during the PhD thesis consequent to this master's thesis. Some of them are the following:
 \begin{enumerate}
 \item \emph{Hamilton-Jacobi theory}. As it was explained in Section \ref{section_Hamiltonian}, given a Hamiltonian $H$ on $T^*Q$, by performing a canonical transformation one can obtained a new Hamiltonian $K$ which satisfies Hamilton's equations as well. In particular, choosing the transformation so that $K$ vanishes identically makes Hamilton's equations tautological. The relation between $H$ and $K$ leads to the so-called Hamilton-Jacobi equation (see references \cite{goldstein_mecanica_1987,abraham_foundations_2008})
 \begin{equation}
	H\left(q^{i}, \frac{\partial W}{\partial q^{i}}\right)=E. \label{H-J_equation}
\end{equation}
The Hamilton-Jacobi problem consists in finding a solution $W$, known as the characteristic function. This is the formulation of classical mechanics which resembles the most to quantum mechanics. As a matter of fact, Hamilton-Jacobi-type equations are obtained when considering the classical limit of Schrödinger equation \cite{carinena_06,abraham_foundations_2008,esposito_marmo_sudarshan_2004}.
Geometrically, the Hamilton-Jacobi equation can be written as \cite{carinena_06,iglesias-ponte_towards_2008}
\begin{equation}
	(\dd W)^* H = E,
\end{equation}
where $\dd W$ is a section of $T^*Q$. The main interest of the Hamilton-Jacobi theory lies in finding complete solutions, that is, solutions which depend on $n=\dim Q$ parameters. Each value of the parameters provides a trajectory of the system \cite{carinena_06}. Moreover, a complete solution provides $n$ constants of the motion and conversely. The Hamilton-Jacobi problem can be extended for forced Hamiltonian systems \cite{iglesias-ponte_towards_2008} in a natural manner. However, it has not been studied in detail in the previous literature, for instance complete solutions have not been characterized, and it will be covered in a future paper \cite{H-J_forced_21}.
\item \emph{Discrete mechanics}. The numerical analysis of a mechanical system requires to discretize it. The naive way to do this is by first obtaining the continuous equations of motion of the system, and then discretizing them, usually by means of a Runge-Kutta method (see references \cite{newman_computational_2013,scherer_equations_2017}). However, this approach does not guarantee the constants of the motion of the continuous system to be preserved on its discrete counterpart. Instead, one can approximate the action integral (for instance, by the trapezoidal method \cite{newman_computational_2013}) and, by taking variations on the discretized action, obtain the discrete Euler-Lagrange equations. A similar approach can be carried out for forced Lagrangian systems (see references \cite{lew_variational_2004,marsden_west_01}). 
In the previous literature \cite{ohsawa_discrete_2011,de_leon_geometry_2018}, a Hamilton-Jacobi theory for discrete systems has been developed. The extension of this theory for forced mechanical systems will be covered in a future paper \cite{H-J_forced_21}.
\item \emph{Higher-order systems}. Throughout this thesis, the Lagrangians considered only depended on the positions and the velocities. Of course, one can also consider Lagrangians depending on the accelerations and subsequent derivatives, which are called higher-order Lagrangians. Geometrically, a Lagrangian $L(q,\dot{q}, \ddot q)$ is a function on $T^2Q=TTQ$, and analogously for higher orders. Higher-order Hamiltonians can be described in a similar manner \cite{prieto-martinez_lagrangian-hamiltonian_2011,Rashid:1993xk}. Symmetries and constants of the motion for higher-order Lagrangian systems have been studied in references \cite{leon_classification_1994,de_leon_symmetries_1995,scomparin_nonlocal_2021}. Systems subject to higher-order external forces (such as the Abraham-Lorentz force) will be considered in future papers.
\item \emph{Non-autonomous systems}. Lagrangians and Hamiltonians can also depend explicitly on time, in which case they are called non-autonomous \cite{prieto-martinez_lagrangian-hamiltonian_2011,sardanashvily_geometric_2013,de_Leon_2017}. Some results regarding non-autonomous forced Lagrangian systems were already obtained in reference \cite{cantrijn_vector_1982}. A complete classification of the symmetries or the reduction of these systems has not been studied so far.
% \item Singular Lagrangians. All the Lagrangians considered in this text were assumed to be regular. When this is not the case, it is still possible to study the symmetries and reduction of the system
% \item Other dissipative phenomena. As it has been explained, in some physical systems the dissipation of energy appears as a Rayleigh potential. It can also arise
% \item Field theories
% \item Contact geometry
% \item Relation with non-holonomic
\end{enumerate}

\let\emph\emphoriginal

% \nocite{*}
\printbibliography

\end{document}